\documentclass[final,12pt]{colt2026} %

\usepackage[utf8]{inputenc}
\usepackage[T1]{fontenc}
\usepackage{graphicx}
\usepackage{color}
\usepackage{enumerate}
\usepackage{verbatim}
\usepackage{thmtools,thm-restate}
\usepackage{wrapfig}
\usepackage{xspace}
\usepackage{amsfonts}
\usepackage{bbm}
\usepackage{dsfont}
\usepackage{algorithm, algpseudocode, caption, subcaption}

\definecolor{darkgreen}{rgb}{0,0.5,0}
\usepackage{hyperref}
\hypersetup{
    unicode=false,          %
    colorlinks=true,        %
    linkcolor=blue,          %
    citecolor=darkgreen,        %
    filecolor=magenta,      %
    urlcolor=cyan           %
}

\usepackage[disableredefinitions,roman]{complexity}
\usepackage[framemethod=tikz]{mdframed}
\global\mdfdefinestyle{myframe}{leftmargin=.75in,rightmargin=.75in,linecolor=black,linewidth=1.5pt,innertopmargin=10pt,innerbottommargin=10pt} 
\usepackage{mdframed}
\usepackage{framed}
\usepackage{nicefrac}

\date{}

\newtheorem{claim}[theorem]{Claim}
\newtheorem{fact}[theorem]{Fact}

\crefname{theorem}{Theorem}{Theorems}
\Crefname{lemma}{Lemma}{Lemmas}
\Crefname{alg}{Algorithm}{Algorithms}
\Crefname{claim}{Claim}{Claims}
\Crefname{fact}{Fact}{Facts}
\Crefname{infclaim}{Claim}{Claims}
\Crefname{observation}{Observation}{Observations}
\Crefname{invariant}{Invariant}{Invariants}
\Crefname{algorithm}{Algorithm}{Algorithms}

\newcommand{\eps}{\varepsilon}

\newcommand{\Ex}{\mathbb{E}}

\newcommand\numberthis{\addtocounter{equation}{1}\tag{\theequation}}

\renewcommand{\Pr}{\operatorname{{Pr}}}

\DeclareMathOperator*{\argmin}{arg\,min}

\newcommand{\da}{\operatorname{d}_{\mathcal{A}}}
\newcommand{\dak}{\operatorname{d}_{\mathcal{A}_k}}
\newcommand{\tv}{\operatorname{TV}}
\newcommand{\dhsq}{\operatorname{H}^2}
\newcommand{\hdist}{\operatorname{d}_{\mathcal{H}}}
\newcommand{\lin}{\operatorname{Ratio-Line}}
\newcommand{\rtag}[1]{\text{\quad $\left(\text{#1}\right)$}}
\newcommand{\opt}{\operatorname{OPT}}
\renewcommand{\R}{\mathbb{R}}

\allowdisplaybreaks[1]

\title[Density estimation for Hellinger via minimum-distance estimators]{Density estimation for Hellinger via minimum-distance estimators: mixtures of Gaussians, log-concave, and more}
\usepackage{times}

\coltauthor{%
 \Name{Spencer Compton} \Email{comptons@stanford.edu}\\
 \addr Stanford University
 \AND
 \Name{Jerry Li} \Email{jerryzli@cs.washington.edu}\\
 \addr University of Washington%
}

\jmlrproceedings{}{}
\jmlrvolume{}
\jmlrpages{}
\jmlryear{}
\jmlrworkshop{}
\editors{}
\makeatletter
\renewcommand*{\@titlefoot}{}
\makeatother

\begin{document}

\maketitle

\begin{abstract}
We study the task of density estimation, where we hope to accurately estimate a probability density from $n$ samples. A textbook method for density estimation in total variation distance is the \textit{minimum-distance estimator approach}, where we conclude both the algorithm and the analysis merely from bounding the VC dimension of a particular concept class (the so-called Yatracos class). 

While this technique has originally yielded sharp guarantees primarily for total variation distance, in this work we extend the minimum-distance estimator approach for learning within Hellinger distance. Our main observation is that we may produce an analogous recipe for Hellinger (where we only require bounding the VC dimension of a related concept class) by drawing connections to recent results yielding \textit{reverse data processing inequalities}. 

This recipe is flexible enough to accommodate fast algorithms originally designed for total variation distance; by modifying the approach of \cite{acharya2017sample} we conclude the first near-linear time algorithm for learning classes including univariate mixtures of log-concave densities and mixtures of Gaussians (with arbitrary variances), with near-optimal sample complexity.
\end{abstract}

\section{Introduction}
Density estimation is a cornerstone problem in statistical estimation.
Here, we receive $n$ samples from a distribution $f$, and our goal is to output an estimate $\hat{f}_n$ that is ``close'' to the true density of $f$. 
The goal is typically to design an estimator that is both statistically and computationally efficient: that is, the estimator should achieve minimax-optimal error, and the estimator should be efficiently computable; ideally, in linear (or near-linear) time in the number of samples.
To formalize the setting, there are two main desiderata one must specify:

\paragraph{What structural assumptions does $f$ satisfy?} Density estimation is typically most interesting when $f$ comes from some class of structured distributions, or when $f$ satisfies some shape constraints. 
There are many structural assumptions studied in the literature, including both parametric and non-parametric assumptions; see e.g. \cite{samworth2025nonparametric} for a recent survey.

A canonical example that will be particularly relevant to us is the class of mixtures of Gaussian distributions in one dimension, although our results also apply for many other classes of distributions as well.
A mixture of $k$ univariate Gaussians is a distribution over $\R$ with pdf given by
\[
p(x) = \sum_{i = 1}^k w_i N(\mu_i, \sigma_i^2) (x) \; ,
\]
where the $w_i$ are non-negative mixing weights satisfying $\sum_{i = 1}^k w_i = 1$, $\mu_i \in \R$ are the means of the components, and the $\sigma_i^2$ are the variances of the components.
Mixtures of Gaussians have been studied extensively in machine learning, statistics, and theoretical computer science, and are also still popular empirically as a way to model heterogeneous data in practice.

\paragraph{What error metric will we consider?}
To formalize the problem, one must specify a measure of closeness between distributions.
Different notions of distances are studied depending on the setting of interest (e.g., total variation distance, Hellinger distance, Kullback-Leibler divergence, $\chi^2$-divergence, Wasserstein distance, etc). 
Perhaps the most well-studied measure of closeness is the total variation distance: for two distributions $p, q$ defined over a shared probability space $\Omega$, the total variation distance between $p, q$, denoted $\tv$, is defined by
\begin{equation}
    \tv (p, q) = \frac{1}{2} \int_{\Omega} |p(x) - q(x)| \; .
\end{equation}
For our canonical example of mixtures of $k$ Gaussians in one dimension, essentially optimal estimators are known for this problem: work of~\cite{acharya2017sample} gives an estimator $\hat{f}_n$ which achieves error 
$
\tv (\hat{f}_n, f) = O \left( \sqrt{k / n} \right) ,
$
with high probability, and this rate is optimal.
Moreover, the estimator can be computed in nearly-linear time.
This estimator is based on a type of estimator known as a \emph{minimum-distance estimator}---a classic approach for obtaining sharp guarantees for closeness in total variation distance.

A natural question is whether or not one can achieve similar guarantees for other important statistical measures of closeness.
In this work, we will be focusing on (squared) Hellinger distance: 
\begin{definition}[Squared Hellinger distance]
    The squared Hellinger distance between $p$ and $q$ over a probability space $\Omega$ is defined as
    \begin{equation*}
        \dhsq(p,q) = \frac{1}{2} \int_\Omega \left(\sqrt{p(x)} - \sqrt{q(x)} \right)^2
    \end{equation*}
\end{definition}

While at first glance the definition may not look as intuitive as total variation distance, Hellinger distance is a workhorse underlying statistical learning theory. 
For example, the number of samples required to well-distinguish between any $p$ and $q$ is characterized by $\Theta(1/\dhsq(p,q))$ (and this cannot be expressed in terms of their total variation distance). 

Indeed, for estimation of many structured distribution classes (such as mixtures of log-concave distributions), Hellinger distance is arguably a more natural measure of closeness than total variation distance.
For one, for any probability distributions $p, q$, we have that
\begin{equation}
\label{eq:tv-to-hel}
\tv (p, q)^2 \leq 2 \dhsq (p, q) \; ,
\end{equation}
so in particular, up to constant factors, achieving squared Hellinger distance $\eps^2$ is strictly harder than learning to TV distance $\eps$, for any $\eps > 0$. 
For the types of distributions we consider in this paper, the statistical rates for learning in squared Hellinger distance that we will be able to achieve will be (up to logarithmic factors) quadratically better than the rates for learning in total variation, so our results subsume previous results for learning in total variation, up to logarithmic factors, for these classes of distributions.

To illustrate the difference, it is once again useful to return to the canonical example of mixtures of $k$ Gaussians.
Recall here the optimal rate for learning in total variation distance is given (up to constant factors) by $\sqrt{k / n}$.
But this means that, even when $k$ is constant, an estimator for learning a mixture of Gaussians is able to completely ignore any component with mixing weight less than $\sqrt{1 / n}$.
However, such components may be easily identifiable given $n$ samples from the mixture.
So in this sense, an estimator that achieves optimal total variation closeness for this problem may still be losing non-trivial information about the mixture.
In contrast, the optimal rate for learning a mixture of $k$ Gaussians in squared Hellinger distance is $O(k / n)$, which recovers the total variation rate by~\Cref{eq:tv-to-hel}, but to achieve such a rate, it is easy to show that one cannot ignore any component unless the mixing weight is less than $O(k / n)$, at which point the component is not even clearly identifiable from data. 

One can extend this argument to many other settings beyond mixtures of Gaussians: closeness in squared Hellinger also recovers the right rate for other useful tasks such as support estimation for uniform distributions, and tail estimation for log-concave distributions. Motivated by this discussion, in this work, we seek to understand how to design estimators for natural classes of structured univariate distributions which have optimal statistical and computational properties.

\subsection{Our Results}

Our main result is a new meta-algorithm for learning many natural classes of structured distributions in Hellinger distance, which both achieves nearly optimal statistical rates and runs in nearly-linear time, for the classes of distributions we consider, which include mixtures of Gaussians and mixtures of log-concave distributions, among many others.

For instance, when specified to the class of mixtures of $k$ Gaussians, our method yields the following guarantees:
\begin{theorem}\label{corr:fast-gauss}
Let $\mathcal{G}_k$ denote the class of mixtures of $k$ Gaussians. Given $n$ samples from a distribution $f$, there exists an algorithm that runs in $O(n \log^5(n))$ time and outputs an estimate $\hat{f}_n$ where with probability $1-\delta$,
\begin{equation*}
    \dhsq(f,\hat{f}_n) \lesssim \inf_{f_\theta \in \mathcal{G}_k} \log(n) \cdot \dhsq(f,f_\theta) \log(2k/\dhsq(f,f_\theta))  + \frac{k \log^2(n) \cdot (\log(n) + \log(2/\delta))}{n}.
\end{equation*}
\end{theorem}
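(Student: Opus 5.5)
The plan is to derive the theorem from the paper's general Hellinger recipe, applied to a piecewise-polynomial approximating class. This follows the strategy of \cite{acharya2017sample}, with total variation replaced by squared Hellinger throughout.

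\textbf{Step 1: a Hellinger minimum-distance estimator driven by a VC bound.} For a class $\mathcal{F}$, we run a minimum-distance (Yatracos-type) selection. Instead of comparing $p(A)$ to the empirical mass on sets $A$, we compare a Hellinger-adapted quantity: for instance a relative deviation $|p(A)-\hat p_n(A)|/\sqrt{p(A)+\hat p_n(A)+1/n}$, over sets $A$ formed from pairs of hypotheses (level sets of ratios $\{p > t q\}$). Relative VC deviation bounds control these deviations uniformly at rate $\sqrt{(d\log n+\log(1/\delta))/n}$, where $d$ is the VC dimension of the ratio class. A reverse data processing inequality then converts closeness over a family of sets into Hellinger closeness, at the cost of a $\log$ factor (for example via a dyadic decomposition of the likelihood ratio $p/q$). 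The target is an agnostic bound of the form
\[
\dhsq(f,\hat f)\lesssim \log(n)\inf_{g\in\mathcal{F}}\dhsq(f,g)+\frac{d\log^2 n+\log(1/\delta)}{n}.
\]

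\textbf{Step 2: approximation of $\mathcal{G}_k$ by piecewise polynomials in Hellinger.} We show that every mixture of $k$ Gaussians is $\eta$-close in squared Hellinger to a function that is piecewise degree-$O(\log(1/\eta))$ polynomial on $O(k)$ pieces, or more naturally piecewise with $\sqrt{g}$ or $\log g$ polynomial. This should account for the factor $\log(2k/\dhsq(f,f_\theta))$: we choose $\eta\approx\dhsq(f,f_\theta)$ and split the resulting approximation error against the complexity term $t\cdot\deg$. Applying the triangle inequality for Hellinger distance gives an agnostic inequality against $\mathcal{G}_k$. The hard part is the tails. TV approximation can discard a tail of mass $\eta$ outright, but in Hellinger the tail must be approximated in a relative sense. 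My first attempt would be to define pieces where $\log g$ is approximately quadratic, and use polynomial approximation of $e^{q}$ on bounded ranges, so that the pointwise ratio stays within $1\pm\sqrt\eta$. Regions whose mass is below $\eta$ would be handled by assigning them a small constant density.

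\textbf{Step 3: fast computation.} Adapt the merging algorithm of \cite{acharya2017sample}. Start from a fine empirical partition into intervals holding roughly $\log n$ samples each. Then iteratively merge adjacent intervals and fit degree-$d$ polynomials per interval, keeping the merges with the smallest fitting error. Two changes are needed for Hellinger. First, the fitting error must be measured in the relative (Hellinger-type) $\mathcal{A}_k$-distance. Second, the per-interval fit should be computed via a convex program or LP solvable in polylog time. The log factors in the running time then come from the $O(\log n)$ merging rounds, the polylogarithmic degree, and the per-piece LP cost, totaling $O(n\log^5 n)$.

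I expect the main obstacle to be Step 2: tail-accurate Hellinger approximation with only logarithmic degree, together with making the merging analysis compatible with the relative deviation bounds of Step 1.
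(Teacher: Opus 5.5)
Your Steps 1 and 2 essentially reproduce the paper's statistical route (\Cref{thm:hel-min-dist} plus the Gaussian approximation used for \Cref{cor:stat-gaussian}). This theorem, however, is about the $O(n\log^5 n)$-time algorithm, and there is a genuine gap exactly where you anticipate one: nothing connects the output of greedy merging to the hypothesis of Step 1. Step 1 needs $\hat f$ to (approximately) minimize a \emph{global} supremum over ratio-class sets, which are unions of $\Theta(dt)$ intervals spread across many pieces. Merging only certifies that each output piece, taken on its own, has small fitting cost. The paper bridges this with three ingredients your plan lacks.
First, the per-piece cost is the single-interval distance $\mathcal{H}_1$, and ratio sets are handled through $\mathcal{H}_{rdt}\le rdt\,\mathcal{H}_1$.
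Second, each output piece is rescaled to carry exactly the empirical mass of that piece. By \Cref{fact:dpi}, pieces fully inside an interval then contribute nothing, so the global $\mathcal{H}_1(\hat f_n,f_n)$ is within a constant of the worst per-piece cost; summing naively over pieces loses a factor $t$.
Third, the comparator is replaced by the $\tilde f_\theta$ of \Cref{lemma:tilde-f} (breakpoints at Hellinger quantiles, pieces rescaled to $f$'s mass), which satisfies $\mathcal{H}_1(f,\tilde f_\theta)\lesssim \dhsq(f,f_\theta)/t$. With $f_\theta$ itself, the $dt$ factor from the first ingredient would multiply the misspecification term and give $k\log(n)\,\dhsq(f,f_\theta)$.
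This is also where the stated factors come from: the $\log n$ in front of the misspecification is the degree $d$, and $\log(2k/\dhsq(f,f_\theta))$ is the reverse-DPI loss evaluated at $\dhsq(f,f_\theta)/t$ with $t=3k$. It does not come from tuning $\eta\approx\dhsq(f,f_\theta)$, which would also require knowing $\dhsq(f,f_\theta)$ to set the degree; the paper simply takes $\eta=1/n$.

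The computational step also does not work as you describe. The one-piece fit cannot be a polylog-time LP. On an interval with $s$ samples there are $\Theta(s^2)$ interval constraints, each linear in the coefficients, namely $\int_I p\in[(\sqrt{c_I}-\sqrt{2\tau})_+^2,(\sqrt{c_I}+\sqrt{2\tau})^2]$ with $c_I$ the empirical mass of $I$. There is also a continuum of nonnegativity constraints. The paper reuses the cutting-plane machinery of \cite{acharya2017sample}, and its new ingredient is an $O(s)$-time constant-factor separation oracle for $\mathcal{H}_1$:
\begin{itemize}
\item round $p$'s CDF to multiples of $1/n$ and encode samples versus rounded mass as a $\pm 1$ array;
\item use $|\sqrt x-\sqrt y|=\Theta(|x-y|/\sqrt{x+y})$;
\item cover every interval by one of $O(s)$ dyadic windows at most four times longer;
\item maximize $|x-y|$ inside each window via range min/max queries on prefix sums.
\end{itemize}
This gives per-call cost $O((d^3\log\log n+sd^2+d^{\omega+2})\log^2 n)$, hence $n\log^5 n$ overall with $d=O(\log n)$ and $O(\log n)$ merging rounds.

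Finally, your Step 2 worry about tails is unfounded. Since $(\sqrt a-\sqrt b)^2\le|a-b|$, we have $\dhsq\le\tv$, so zeroing a tail of mass $\eta$ costs at most $\eta/2$. The standard $3$-piece, degree-$O(\log(1/\eps))$ $L_1$ approximation of a Gaussian therefore suffices. An approximant with polynomial $\log g$ would leave the class the algorithm can handle, since the cutting-plane step needs interval masses to be linear in the coefficients.
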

Note in particular, up to logarithmic factors, our result achieves the optimal statistical rate, and the runtime of our algorithm is near-linear in the number of samples.
Moreover, our guarantee is also robust to some degree of model misspecification. We remark that these estimators are improper (they output a piecewise polynomial density, not a mixture of Gaussians).

Similarly, when specified to the class of mixtures of log-concave distributions (for a formal definition, see~\Cref{sec:preliminaries}), we obtain:
\begin{theorem}\label{thm:fast-log-concave}
Let $\mathcal{LC}_k$ denote the class of mixtures of $k$ log-concave distributions.
Given $n$ samples from a distribution $f$, there exists an algorithm that runs in $O(n \log^3(n))$ time and outputs an estimate $\hat{f}_n$ where with probability $1-\delta$,
\begin{equation*}
    \dhsq(f,\hat{f}_n) \lesssim \inf_{f_\theta \in \mathcal{LC}_k} \dhsq(f,f_\theta)  + \left( \frac{k \cdot (\log(n) + \log(2/\delta))}{n}\right)^{4/5}.
\end{equation*}
\end{theorem}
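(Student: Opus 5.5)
The plan is to reduce to the general Hellinger minimum-distance recipe instantiated with piecewise polynomials. A mixture of $k$ log-concave densities is well approximated in squared Hellinger distance by piecewise polynomials, and we then learn the piecewise-polynomial class with a near-linear-time Hellinger variant of the merging algorithm of \cite{acharya2017sample}.

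\textbf{Step 1 (approximation).} We first show that every $f_\theta \in \mathcal{LC}_k$ admits a piecewise-polynomial density $g$ with $t = O(k\,\eps^{-1/2})$ pieces of constant degree $d$ (linear suffices) such that $\dhsq(f_\theta, g) \lesssim \eps^{2}$. We do this component by component. For a single log-concave density, we work with $\sqrt{p}$ and use the known $\eps^{-1/2}$-piece approximation of log-concave densities, but measured in $L^2$ of the square root. Pieces are placed adaptively so that each interval has controlled curvature of $\log p$, and the tails, where mass decays geometrically, are truncated. For the mixture, squared Hellinger distance is jointly convex (it is an $f$-divergence), so $\dhsq(\sum w_i p_i, \sum w_i g_i) \le \sum w_i \dhsq(p_i,g_i)$. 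The union of breakpoints then gives $O(k t)$ pieces.

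\textbf{Step 2 (learning piecewise polynomials in Hellinger).} Next we invoke the paper's recipe. The Yatracos-type class needed for the reverse data processing argument, namely sets where two $t$-piece degree-$d$ polynomials compare, has VC dimension $O(td)$. The merging algorithm is run with a Hellinger-adapted $\mathcal{A}_k$-distance, which weights intervals by empirical mass. It outputs $\hat f_n$ with
\[
\dhsq(f,\hat f_n) \lesssim \inf_{g\in\mathcal{P}_{t,d}} \dhsq(f,g) + \frac{t\,(\log n + \log(2/\delta))}{n}
\]
in $O(n\log^3 n)$ time: sorting plus $O(\log n)$ rounds of merging, with polylog-cost polynomial fits per piece. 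The crucial point is that, for this class, the log factor multiplying the approximation error in the Gaussian case can be avoided. I expect that factor there comes from Gaussian tails, and log-concave tails should be handled by Step 1's truncation.

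\textbf{Step 3 (balancing).} By the triangle-type inequality for Hellinger distance, $\dhsq(f,g) \le 2\dhsq(f,f_\theta) + 2\dhsq(f_\theta,g)$, which gives
\[
\dhsq(f,\hat f_n) \lesssim \dhsq(f,f_\theta) + \eps^2 + \frac{k\eps^{-1/2}(\log n+\log(2/\delta))}{n}.
\]
Choosing $\eps^{5/2} = k(\log n + \log(2/\delta))/n$ makes both terms equal to $\eps^2 = (k(\log n+\log(2/\delta))/n)^{4/5}$. Since the algorithm must not depend on $f_\theta$, the number of pieces is set to $t \asymp k\eps^{-1/2}$ in advance, with $\delta$ known.

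The main obstacle is Step 1. We need $\eps^{-1/2}$ pieces to achieve squared Hellinger error $\eps^2$, not merely TV error $\eps$, uniformly over log-concave densities with no scale assumptions. The tails are delicate: Hellinger charges relative error where the density is small. I would first try to handle them with pieces of the form $\exp(\text{linear})$ approximated by polynomials of $\sqrt{p}$ on geometrically growing intervals. If that fails, I would fall back to letting the learner output a polynomial times the local constant and absorbing the discrepancy into the constant degree.
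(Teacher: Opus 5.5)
\textbf{There is a genuine gap, and it is in Step 2, not Step 1.} The log-free bound $\dhsq(f,\hat f_n)\lesssim \inf_g \dhsq(f,g)+t(\log n+\log(2/\delta))/n$ that you assert is exactly what separates this theorem from the general piecewise-polynomial result. The recipe you invoke (a ratio/Yatracos-type class plus the general reverse data processing inequality) does not deliver it. \Cref{thm:rdpi} loses a factor $\log(4/\dhsq)$. Carried through the $x\log(e/x)$ bookkeeping, this puts a factor $\log(1/\dhsq(f,g))$ on the approximation term and an extra $\log n$ on the statistical term. That is precisely \Cref{cor:poly-final-h1-opt} at $d=1$, and it would leave a logarithm multiplying $\inf_{\mathcal{LC}_k}\dhsq(f,f_\theta)$, which contradicts the statement.

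Your diagnosis that this factor ``comes from Gaussian tails'' is wrong. It comes from converting interval-count discrepancies back into Hellinger distance, and no truncation in Step 1 touches it. The missing idea is a constant-factor reverse data processing inequality tailored to the pieces (\Cref{lemma:custom-rdpi}). If $p,q$ are linear on an interval and each has endpoint values within a factor $2$, then a single subinterval captures $\gtrsim \dhsq(p,q)$. The reason is that $r=p/q$ has $r'=(ad-bc)/q^2$, which varies by at most a factor $4$. So the window $[x^*/2,x^*]$ handles the region where $r<2$, and $[x^*,1]$ handles the region where $r\ge 2$. This is why the approximating class must be $\lin_t$, and why Step 1 must also guarantee the endpoint-ratio condition on every piece, which your plan never asks for. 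The construction adapted from \cite{diakonikolas2016efficient} gives this condition for free, and it survives summing mixture components. Without it, the constant-factor version is expected to fail, e.g.\ $p\propto 2+x$, $q\propto 1+x$ on a long interval. Conversely, Step 1 itself is fairly routine: pointwise relative error $O(\eps)$ per piece gives $\dhsq\lesssim\eps^2$ times the piece's mass, and geometric tail truncation costs $O(\eps^2)$. So you have placed the difficulty in the wrong step.

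Step 2 also leaves the algorithmic link unspecified. Greedy merging only controls the maximum per-piece cost. Turning that into a global guarantee without paying a factor $t$ on the misspecification term needs two further devices. First, rescale each output piece to its empirical mass; then by \Cref{fact:dpi} fully covered pieces contribute nothing, and the single-interval distance $\mathcal{H}_1(\hat f_n,f_n)$ is within a constant of the per-piece maximum. Second, compare against a rebalanced competitor $\tilde f_\theta$ (\Cref{lemma:tilde-f}) whose pieces match the mass of $f$ and each carry $O(\dhsq(f,f_\theta)/t)$ squared Hellinger distance, so the factor $t$ from $\mathcal{H}_t\le t\,\mathcal{H}_1$ cancels. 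The near-linear runtime additionally needs a new linear-time, constant-factor routine that finds an interval witnessing the single-interval Hellinger discrepancy between a candidate polynomial and the samples. The paper does this by reducing to a $\pm1$ array, using dyadic windows with range min/max queries, and applying $|\sqrt x-\sqrt y|=\Theta(|x-y|/\sqrt{x+y})$. Each oracle call is near-linear in the samples of its piece, not ``polylog cost per piece.'' Your joint-convexity treatment of mixtures and the balancing in Step 3 are fine and match the paper.
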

Once again, up to polylogarithmic factors, our algorithm achieves the optimal statistical rate, runs in nearly-linear time, and is robust.
We believe that by applying our formula, we can achieve similar nearly-optimal results for most, if not all, distribution families considered in prior work such as~\cite{acharya2017sample}.

\subsection{Our Techniques}
In many ways, our result can be thought of as the natural generalization of the minimum-distance estimator of \cite{yatracos1985rates}, and the framework developed in the line of work~\citep{chan2013learning,chan2014efficient,acharya2015fast,acharya2017sample} which culminated in a method by~\cite{acharya2017sample} for learning many classes of structured univariate distributions in total variation distance with optimal statistical rate and in nearly-linear runtime.
We briefly recap the ideas underlying the estimator in~\cite{acharya2017sample} here.

The key statistical idea in~\cite{acharya2017sample} is to use a distance they call the $\mathcal{A}_k$ distance as a proxy for closeness in total variation distance.
Here, for any parameter $k \geq 1$, we let $\mathcal{A}_k$ denote the set of subsets of $\R$ which are unions of at most $k$ intervals, and we define the $\mathcal{A}_k$ distance between two distributions to be
\begin{equation}\label{eq:ak}
    \dak (p,q) \triangleq \sup_{A \in \mathcal{A}_k} \left| \Pr_{p}[x \in A] - \Pr_{q}[x \in A] \right| \; .
\end{equation}
They then observe that: (1) if the underlying distribution can be well-approximated by a sufficiently simple distribution class $\mathcal{P}$ (specifically, a low-degree piecewise polynomial distribution with not too many pieces), and if (2) we can find the best fit in the class $\mathcal{P}$ to the empirical distribution in $\mathcal{A}_k$ distance, then one can show that this best fit distribution must be close to the ground truth distribution.
So, in other words, for nice distribution classes, if we let $f_n$ be the empirical distribution after $n$ samples, it suffices to solve the following optimization problem:
\begin{equation}
    \label{eq:ak-objective}
    \hat{f}_n \triangleq \argmin_{f_\theta \in \mathcal{P}} \dak (f_\theta,f_n) \; .
\end{equation}
They then demonstrate that this problem can be solved in nearly-linear time, via a two-layer algorithm: first, a method which allows them to find a good polynomial approximation to the empirical distribution in a fixed interval in $\operatorname{d}_{\mathcal{A}_1}$ distance, and second, a ``greedy merging'' algorithm which, given such a method, can find good breakpoints for a piecewise polynomial fit, in nearly-linear time.

The objective~\Cref{eq:ak-objective} is a special case of a well-studied class of estimators called \emph{minimum-distance estimators}~\citep{yatracos1985rates} that we alluded to above, which are a classic approach for achieving good guarantees for estimation in total variation distance.
Unfortunately, these estimators are very tailored to total variation distance, and \emph{a priori} seem very hard to generalize to other distances, including Hellinger distance.

\paragraph{A minimum-distance-style estimator for Hellinger distance.} Our main conceptual contribution is the development of a new minimum-distance-esque estimator that allows us to achieve nearly optimal statistical guarantees for learning in Hellinger distance.
We show that the ``correct'' analog of the $\mathcal{A}_k$ distance for learning in Hellinger distance is the following distance:
\begin{equation}\label{eq:hk}
\mathcal{H}_1 (f,g) \triangleq \sup_{I = [a, b]} \frac{1}{2} \left( \sqrt{\Pr_f\left[x \in I \right]} - \sqrt{\Pr_g\left[ x \in I \right]} \right)^2 \; .
\end{equation}
Interestingly, note that this distance is really the analog of the $\mathcal{A}_1$ distance. For total variation distance, it was important that we took $k \geq 1$, whereas for Hellinger, it suffices to consider $k = 1$.
Our main statistical result is that if we take the minimum-distance estimator for the empirical distribution from the class of piecewise polynomial distributions, i.e.
\begin{equation}
    \label{eq:h1-objective}
    \hat{f}_n \triangleq \argmin_{f_\theta \in \mathcal{P}} \mathcal{H}_1 (f_\theta,f_n) \; .
\end{equation}
then this achieves optimal statistical guarantees for learning the original distribution in Hellinger distance (see~\Cref{cor:poly-final-h1-opt}), for appropriate choices of parameters for $\mathcal{P}$.
By leveraging this result, we obtain new, nearly statistically optimal estimators for a number of interesting distribution families in Hellinger distance, including the aforementioned classes of mixtures of Gaussians and mixtures of log-concave distributions.

From a technical point of view, the key idea to proving this is to demonstrate a new Yatracos-style guarantee for the optimality of a new general minimum-distance estimator for Hellinger distance.
Here, we crucially use a recent reverse data processing inequality of~\cite{pensia2023communication}, which allows us to relate the Hellinger distance between two distributions to the Hellinger distance between thresholded versions of these distributions.

We believe this guarantee may be of independent interest beyond our setting.
The fact that minimum-distance estimators, which seem very specially tailored to total variation distance, can be adapted to yield such tight rates for learning in Hellinger, seems quite surprising, and could also have applications for understanding the landscape of density estimation in Hellinger distance more generally.

\paragraph{A new greedy merging algorithm for $\mathcal{H}_1$}
Armed with this new minimum-distance estimator, we are left with the task of efficiently computing the solution to~\Cref{eq:h1-objective} when $\mathcal{P}$ is the class of $t$-piece degree-$d$ polynomials.
To do so, we demonstrate that the greedy merging framework of~\cite{acharya2017sample} can be adapted to this task.
We will have to alter both layers of the algorithm as described above.
First, we design a novel oracle for fitting a degree-$d$ polynomial to the empirical distribution in $\mathcal{H}_1$.
Then, we show how the merging algorithm can be changed to use this oracle to obtain a good piecewise fit to the overall distribution.

\paragraph{Overview of the paper.} We discuss related work in \Cref{sec:related-work} and preliminaries in \Cref{sec:preliminaries}. In \Cref{sec:min-dist}, we present the classical minimum-distance estimator. In \Cref{sec:reverse-dpi}, we describe a recent result on \textit{reverse data processing inequalities}, which may seem unrelated at first glance. 

In \Cref{sec:min-dist-hel}, we provide our core contribution: we describe a modified minimum-distance estimator and prove general guarantees for learning in Hellinger distance with this technique. A key technical tool is drawing a connection to the recent reverse data processing inequality results. In \Cref{sec:stat-apps}, we apply this general result to recover statistical guarantees in Hellinger distance for classes like mixtures of log-concave densities, mixtures of Gaussians, and mixtures of specified densities. These results are often near-optimal up to logarithmic factors.

In \Cref{sec:log-concave}, we leverage how our new recipe is quite flexible; our recipe is not only statistical, it will enable fast (even near-linear time) algorithms for estimating in Hellinger after only a few additional ideas. We demonstrate how the technique of \cite{acharya2017sample} can be massaged with our new recipe to then learn densities that are well-approximated by piecewise polynomials --- eventually yielding our \Cref{corr:fast-gauss} for mixtures of Gaussians, and \Cref{thm:fast-log-concave} for mixtures of log-concave densities. These are the first near-linear time estimators with near-optimal sample dependencies for these tasks; to our knowledge, we are not even aware of any previous polynomial-time algorithms in the literature using near-optimal samples.

We conclude with a discussion in Appendix~\ref{sec:discussion}.

\subsection{Related work}\label{sec:related-work}
\textit{Density estimation via polynomial approximation.} The works of \cite{chan2013learning,chan2014efficient} demonstrated a minimum-distance estimator methodology where efficient algorithms and statistical guarantees are implied by proving that the densities of interest are well-approximated by piecewise polynomial densities (with a bounded number of pieces and bounded degree); for example, \cite{chan2014efficient} yielded an efficient algorithm for near-optimally learning log-concave mixtures in polynomial time within total variation distance. This spurred further work with this method for density estimation and testing properties of distributions (e.g. \cite{diakonikolas2014testing,diakonikolas2015differentially,diakonikolas2016efficient,acharya2015fast,acharya2017sample,canonne2018testing,chen2020efficiently}). Our work enables the implementation of this method for Hellinger distance (even in near-linear time via the acceleration of \cite{acharya2017sample}).

These minimum-distance estimators are defined with respect to $\mathcal{A}_k$ distance, which enforces that for any union of $k$ disjoint intervals, the counts within these intervals are roughly the same for the estimate and the true distribution, in an $\ell_1$ sense (further work related to this notion includes e.g. \cite{diakonikolas2015optimal,diakonikolas2017near,diakonikolas2019testing,diakonikolas2024testing,gerber2025density}).

\textit{Applications of reverse data processing inequalities for estimation. } In \Cref{sec:reverse-dpi}, we will discuss the reverse data processing inequality results of \cite{bhatt2021information,pensia2023communication}.  These results naturally have applications in communication-constrained estimation, yet they have also found some applications in estimation without any communication or memory constraints. The work of \cite{compton2025attainability} aims to design an adaptive estimator of the mean, where one estimator is simultaneously near-optimal for a wide range of symmetric, unimodal distributions. One key feature is that they want their estimate of the mean $\hat{\mu}$ to satisfy that the empirical number of samples within $[\hat{\mu}+a,\hat{\mu}+b]$ is close to the number of samples in the reflected interval $[\hat{\mu}-b,\hat{\mu}-a]$, for all $0 \le a \le b$; since the distribution is symmetric, this would hold if $\hat{\mu} = \mu$. Unlike $\mathcal{A}_k$ distance, their notion of closeness is in the Hellinger sense instead of the $\ell_1$ sense, and the analysis of their estimator crucially leverages a reverse data processing inequality and second-order uniform convergence guarantees. This is quite similar to a notion of $\mathcal{H}_k$ distance we define later in this paper. The work of \cite{blanc2025instance} also uses statistics of intervals in combination with a reverse data processing inequality to design an instance-optimal algorithm for uniformity testing. 

\textit{Learning mixtures of Gaussians.} There is a rich body of work for learning mixtures of Gaussians. Focusing on the univariate setting, there are two main lines of work for efficient density estimation with near-optimal sample dependence in Hellinger distance. A line of work on the nonparametric maximum likelihood estimator (NPMLE) works well when the underlying distribution can be viewed as $\pi * N(0,1)$ for some mixing distribution $\pi$; this will even yield a near-linear time algorithm when all mixture components have similar variances \citep{saha2020nonparametric,polyanskiy2020self,polyanskiy2025nonparametric}, yet without assumptions on the variances this method does not clearly yield efficient algorithms (in fact, the NPMLE is not well-defined without variance assumptions). Method of moment estimators (e.g. see \cite{WuYang}) also focus on this setting with well-conditioned variances. Trying to reduce the arbitrary variances case to the well-conditioned variances case via localization techniques (like that of \cite{liu2022robust}) appears to incur suboptimal sample complexity with known techniques.
In contrast, for total variation distance, the piecewise polynomial method of \cite{acharya2017sample} yields a near-linear time algorithm. We will obtain a near-linear time algorithm for Hellinger distance, which to our knowledge is even the first polynomial time algorithm with near-optimal sample dependence.

\textit{Log-concave density estimation.} Estimating multivariate log-concave densities with near-optimal samples in Hellinger has been a large problem of interest (e.g. \cite{seregin2010nonparametric,doss2016global,KimSamworth,diakonikolas2017learning,carpenter2018near}), and the optimal rate for $d$ dimensions has only recently been solved by \cite{kur2019optimality}. These works primarily study the NPMLE; this can be computed in polynomial time \citep{axelrod2019polynomial} for a single log-concave density, but is not well-defined for mixtures of log-concave densities. Please see the survey of \cite{surveySamworth} for a more general treatment on the log-concave density estimation literature. Most work on learning log-concave mixtures has a more methodological perspective than a theoretical focus (e.g. \cite{cule2010maximum,hu2016maximum}) --- other than for total variation where \cite{acharya2017sample} yields a near-linear time algorithm with near-optimal sample complexity. To our knowledge, no prior work gives a polynomial-time algorithm with near-optimal sample complexity for log-concave mixtures in Hellinger. (It seems plausible this is achievable with alternative methods like wavelet techniques, although we have not tried executing this plan ourselves, and did not find any references that do so.) We will demonstrate a near-linear time algorithm with near-optimal sample complexity.

\textit{Le Cam--Birg\'e covering arguments. } A foundational approach for proving statistical guarantees involves leveraging bounds for the covering number to obtain upper bounds \citep{lecam1973convergence,birge2006model}. Combining covering number bounds with appropriate pairwise composite hypothesis testers (e.g. \cite{birge2013robust,suresh2021robust}) yields optimal Hellinger density estimation guarantees for a large collection of problems; for example, see Chapter 32.2 of \cite{polyanskiy2025information} for a more comprehensive overview. While this technique often pins down optimal rates, it does not imply a polynomial time algorithm. We also remark that in some settings of interest in this paper, the local covering numbers are not bounded, yet we may obtain desirable guarantees. For example, when the class of distributions is the set of all mixtures of at most two univariate Gaussians, then there is no finite-sized cover for the ball around the density which is just one Gaussian (the location of the second Gaussian could be anywhere).

\textit{$\rho$-estimation.} Work introducing $\rho$-estimation \citep{baraud2017new,baraud2016rho,baraud2018rho} builds upon the Le Cam--Birg\'e style of argument, by providing a very general estimator that handles limitations such as the previous example discussed. Among other insights, their work leverages how the required pairwise testing is controlled by a nice function that is bounded, Lipschitz, and monotonic in the likelihood ratio of the two densities. Accordingly, they may obtain desirable estimation guarantees when this function concentrates in ways that hold from empirical processes techniques. Notably for us, these conditions hold in terms of the VC dimension of the same concept class we will leverage in our work, and hence $\rho$-estimation will enable the same statistical guarantees (up to logarithmic factors) as all our statistical corollaries. With this in mind, the key novelties of our work are: (i) these statistical guarantees are also attainable by minimum-distance estimators, and (ii) $\rho$-estimation does not naively run in polynomial time, while our techniques may be more conducive to designing faster algorithms as we will later demonstrate. Further developments relating to $\rho$-estimation emphasizing computational complexity include e.g.  \cite{sart2014estimation,sart2016robust,sart2021estimating}.

\textit{Estimation in KL divergence. } Since KL divergence upper bounds squared Hellinger distance, estimators with small KL error would be desirable in the settings we study; however, we will informally describe why it is impossible to get such general guarantees in all our settings. The technique of \cite{yang1999information} gives an expected-error bound in terms of the KL covering number, but the covering number can be too large in our settings (see the previous two-Gaussian example). The same two-Gaussian example also illustrates how it is impossible to achieve small KL error for classes like Gaussian mixtures in the $\operatorname{KL}(f,\hat{f}_n)$ direction. For impossibility in the $\operatorname{KL}(\hat{f}_n,f)$ direction, consider a mixture of two densities that are uniform over an interval of the real line (both densities are log-concave), such that the resulting mixture is uniform over $[0,1]$, except it is not supported in some small unknown interval inside (say, of width $1/n^3$). An estimator will incur infinite KL error if it puts support on this interval, yet it does not know the location of this interval. Hence, an estimator will typically incur large KL error because it will either output a density with support on this interval (incurring infinite error), or a density that is not supported on a large fraction of $[0,1]$ (which also incurs large KL error). This implies a lower bound for learning mixtures of log-concave densities in KL, yet desirable guarantees are possible in Hellinger.

\subsection{Preliminaries}\label{sec:preliminaries}
For simplicity, all results in this paper assume the target distribution has no atoms. We also assume $n \ge 2$. $\log$ is base 2. The notation $a \lesssim b$ denotes that there is some absolute constant $C > 0$ where $a \le Cb$. A distribution is log-concave if the logarithm of its density is concave. We use $\mathcal{LC}_k$ to denote the class of mixtures of $k$ univariate log-concave densities, and similarly $\mathcal{G}_k$ for Gaussian mixtures. $f_n$ denotes the empirical distribution from $n$ samples. We use $\dhsq(p_\mathcal{I},q_\mathcal{I})$ for restricting the integral in squared Hellinger distance to just an interval $\mathcal{I}$.

\textbf{Uniform convergence. } It is typical to use uniform convergence guarantees for a concept class $\mathcal{F}$ that show with high probability $\sup_{f \in \mathcal{F}} |\sum_{i=1}^n (f(X_i) - \Ex[f(X_i)]) | \lesssim \sqrt{n}$. We will leverage slightly less common results that enable tighter concentration when $\Ex[f]$ is small:
\begin{corollary}[implied by Theorem 5.1 of \cite{boucheron2005theory}; see Appendix~\ref{pf:sqrt-converge}]
    \label{cor:sqrt-converge}
Let $X_1, \ldots, X_n$ be i.i.d. random variables, and assume that the class $\mathcal{F}$ of $\{0, 1\}$-valued functions has VC dimension $d$. Then for any $\delta \in (0, 1)$, with probability at least $1 - \delta$, all $f \in \mathcal{F}$ satisfy
\begin{equation*}
\left|\sqrt{\sum_{i = 1}^n f(X_i)} - \sqrt{ \sum_{i=1}^n\Ex[ f(X_i)]}\right| \le 8 \sqrt{d \ln(2n + 1) + \ln \frac{8}{\delta}}.
\end{equation*}
\end{corollary}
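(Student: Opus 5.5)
We derive the statement from the relative-deviation inequalities of Theorem 5.1 in \cite{boucheron2005theory}. For a class of $\{0,1\}$-valued functions with VC dimension $d$, write $P f = \Ex[f(X_1)]$ and $P_n f = \frac1n\sum_i f(X_i)$. Theorem 5.1 gives two one-sided bounds, each holding with probability at least $1-\delta/2$ simultaneously over $f\in\mathcal{F}$:
\begin{equation*}
\frac{Pf - P_n f}{\sqrt{Pf}} \le 2\sqrt{\frac{\ln S_{\mathcal{F}}(2n) + \ln(8/\delta)}{n}},
\qquad
\frac{P_n f - P f}{\sqrt{P_n f}} \le 2\sqrt{\frac{\ln S_{\mathcal{F}}(2n) + \ln(8/\delta)}{n}} .
\end{equation*}
The second bound is the symmetric version, obtained by the same symmetrization argument. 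By Sauer's lemma, $\ln S_{\mathcal{F}}(2n) \le d\ln(2n+1)$; this is the form in which the $d\ln(2n+1)$ term of the statement appears. A union bound over the two events gives overall probability $1-\delta$.

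Now rescale. Set $a = \sum_i f(X_i) = nP_nf$, $b = nPf$, and $\gamma = \sqrt{d\ln(2n+1)+\ln(8/\delta)}$. Multiplying each inequality by $n$ turns them into
\begin{equation*}
b - a \le 2\gamma\sqrt{b}, \qquad a - b \le 2\gamma\sqrt{a}.
\end{equation*}

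The key step is converting these into a bound on $|\sqrt a-\sqrt b|$. Suppose $b\ge a$. Factor $b-a=(\sqrt b-\sqrt a)(\sqrt b+\sqrt a)$. Since $\sqrt b+\sqrt a\ge\sqrt b$, the first inequality gives
\begin{equation*}
\sqrt b-\sqrt a \le \frac{2\gamma\sqrt b}{\sqrt b+\sqrt a}\le 2\gamma .
\end{equation*}
The case $a\ge b$ is symmetric: use the second inequality, with $\sqrt a$ in the numerator. In both cases $|\sqrt a-\sqrt b|\le 2\gamma\le 8\gamma$, which is the stated bound, so the constant $8$ has slack.

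The degenerate cases $Pf=0$ or $P_nf=0$ need a separate word, because the ratios above are then undefined. They are handled the same way: if the denominator is $0$, the corresponding inequality is read as $b-a\le 0$ (respectively $a-b\le 0$), and the conclusion still follows.

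The main obstacle is bookkeeping. We must confirm the exact form and constants in Theorem 5.1, namely whether it uses the growth function at $2n$, and whether the inequality has $\ln(4/\delta)$ or $\ln(8/\delta)$ after the union bound. If the reference only states the first one-sided bound, we would reprove the reverse direction by the standard symmetrization argument with a ghost sample, and absorb the resulting worse constants into the factor $8$.
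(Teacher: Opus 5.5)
Your proof is correct. It rests on the same ingredient as the paper: both one-sided relative-deviation bounds of Theorem 5.1 of \cite{boucheron2005theory}. The paper's Lemma~\ref{lem:vclemma} states exactly this, with both directions holding simultaneously with probability $1-\delta$ and the term $d\ln(2n+1)+\ln(8/\delta)$, so your bookkeeping is right.

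Your conversion step is different from the paper's and tighter. You apply each one-sided bound only in the regime where its normalizer is the larger of $a$ and $b$. The difference-of-squares factorization then gives $|\sqrt a-\sqrt b|\le 2\gamma$ in one line.

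The paper takes a longer route:
\begin{itemize}
\item It first turns the $\sqrt a$-normalized bound into an additive bound in terms of the true mean alone, $|a-b|\le r\sqrt b+r^2$ with $r=2\gamma$, by solving a quadratic in $\Delta=a-b$.
\item It then applies $|\sqrt a-\sqrt b|\le\min\bigl(2|a-b|/\sqrt b,\sqrt{|a-b|}\bigr)$ with a case split on whether $b\ge r^2$.
\end{itemize}
This loses a factor $4$ relative to your argument and ends at $4r=8\gamma$. The detour produces the Bernstein-type ``second-order'' additive deviation bound as a by-product, but the corollary does not need it. Your route is shorter and gives the constant $2$ in place of $8$.

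Your separate handling of the degenerate cases is harmless but unnecessary. Unless $a=b=0$, the larger of $a$ and $b$ is positive, so the inequality you invoke never has a zero denominator.
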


\subsection{Minimum-distance estimators}\label{sec:min-dist}
The classical technique of minimum-distance estimators \citep{yatracos1985rates} appears in textbooks such as in Chapters 6 and 8 of \cite{devroye2001combinatorial} and Chapter 32.3 of \cite{polyanskiy2025information}; we will provide an overview in the style of \cite{devroye2001combinatorial}. Let $\mathcal{F}$ be a class of densities, and suppose $\mathcal{A}$ is some concept class. We define a notion of distance quantifying how much two densities $p,q$ differ with respect to $\mathcal{A}$:

\begin{equation}\label{eq:tv-est-def}
    \da(p,q) \triangleq \sup_{A \in \mathcal{A}} \left| \Pr_{p}[x \in A] - \Pr_{q}[x \in A] \right|
\end{equation}

The intuition of the minimum-distance estimator is that we will choose an estimate $\hat{f}_n$ that has small distance from the empirical samples $f_n$ according to $\mathcal{A}$:

\begin{equation}
    \hat{f}_n \triangleq \argmin_{f_\theta \in \mathcal{F}} \da(f_\theta,f_n)
\end{equation}

(Throughout this paper, if the minimizer does not exist, the same arguments will allow for an approximate minimizer.) For some arbitrary concept class $\mathcal{A}$, the minimum-distance estimator might not be a good estimate. However, it is fruitful to use this estimator when $\mathcal{A}$ is the \textit{Yatracos class} for $\mathcal{F}$, meaning that it contains all sets describing where one density in $\mathcal{F}$ exceeds another:

\begin{equation*}
    \mathcal{A} \triangleq \bigl\{ \{x : f_\theta(x) > f_{\theta'}(x) \} :  \, f_\theta, f_{\theta'} \in \mathcal{F} \bigr\}.
\end{equation*}

In this case, there is a clean, good bound on the performance of $\hat{f}_n$ in terms of the VC dimension of the Yatracos class:

\begin{lemma}[Minimum-distance estimate, e.g. described in \cite{devroye2001combinatorial}]\label{lemma:tv-dist}
    Given $n$ samples from a distribution $f$, let $\hat{f}_n$ be the minimum-distance estimator with respect to the Yatracos class $\mathcal{A}$ for a set of densities $\mathcal{F}$ ($f$ need not be in $\mathcal{F}$). Then, 

    \begin{equation}
        \Ex[\tv(f,\hat{f}_n)] \le 3 \inf_{f_\theta \in \mathcal{F}} \tv(f,f_\theta) + O \left(\sqrt{\frac{\operatorname{VC}\left(\mathcal{A}\right)}{n}} \right)
    \end{equation}
\end{lemma}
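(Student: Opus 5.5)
The plan is the standard Yatracos argument: a deterministic triangle-inequality bound, followed by a uniform-convergence estimate for the empirical $\da$ error. Fix any $f_\theta \in \mathcal{F}$ and write $\hat{f} = \hat{f}_n$. By the triangle inequality, $\tv(f,\hat f) \le \tv(f,f_\theta) + \tv(f_\theta,\hat f)$. Since both $f_\theta$ and $\hat f$ lie in $\mathcal{F}$, the set $A^\star = \{x : f_\theta(x) > \hat f(x)\}$ belongs to the Yatracos class $\mathcal{A}$. Scheff\'e's identity then gives
\[
\tv(f_\theta,\hat f) = \Pr_{f_\theta}[A^\star] - \Pr_{\hat f}[A^\star] \le \da(f_\theta,\hat f).
\]

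Next I bound $\da(f_\theta,\hat f)$ using the fact that $\da$ is a pseudometric (it is a supremum of absolute differences, so the triangle inequality holds). This gives $\da(f_\theta,\hat f) \le \da(f_\theta,f_n) + \da(f_n,\hat f)$. By the minimality of $\hat f$, we have $\da(\hat f,f_n) \le \da(f_\theta,f_n)$, so $\da(f_\theta,\hat f) \le 2\,\da(f_\theta,f_n)$. Applying the triangle inequality once more, $\da(f_\theta,f_n) \le \da(f_\theta,f) + \da(f,f_n)$. Finally, $\da(f_\theta,f) \le \tv(f_\theta,f)$, because $\tv$ is the supremum over all measurable sets. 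Combining these steps yields the deterministic bound
\[
\tv(f,\hat f) \le 3\,\tv(f,f_\theta) + 2\,\da(f,f_n),
\]
valid for every $f_\theta$. Taking the infimum over $f_\theta$ leaves only the term $\Ex[\da(f,f_n)]$ to control.

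This last term is the main (though standard) step: we need $\Ex \sup_{A\in\mathcal{A}} |f_n(A) - f(A)| \lesssim \sqrt{\operatorname{VC}(\mathcal{A})/n}$. I would apply symmetrization to reduce to a Rademacher average. Conditioned on the sample, I would then bound that average either by Dudley chaining with Haussler's packing bound, or by a cruder route: Sauer's lemma and Massart's finite-class lemma give $O(\sqrt{d\log n/n})$. The sharp rate without the logarithm needs chaining. Alternatively, one can take \Cref{cor:sqrt-converge} with $\mathcal{F}$ the indicators of $\mathcal{A}$ and integrate the tail. Since $\sqrt{\sum f(X_i)} - \sqrt{n f(A)}$ is uniformly $O(\sqrt{d\log n + \log(1/\delta)})$ and $|a-b| = |\sqrt a-\sqrt b|(\sqrt a+\sqrt b)$ with $\sqrt a + \sqrt b \lesssim \sqrt n$, this route also gives $O(\sqrt{d\log n/n})$. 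So the version with a logarithm follows directly from stated results, and removing the $\log n$ requires the chaining argument, which I would cite from \cite{devroye2001combinatorial}.

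One technical point is measurability and the existence of the minimizer. If the argmin is not attained, I take a $\eta$-approximate minimizer, which adds $2\eta$ to the bound, and then let $\eta\to0$.
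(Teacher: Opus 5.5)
Your proposal is correct and follows the standard Yatracos argument from \cite{devroye2001combinatorial}, which the paper cites without proof and mirrors step for step in its proof of \Cref{thm:hel-min-dist}: apply Scheff\'e on $\{f_\theta > \hat f_n\}$, then the triangle inequality for $\da$ together with minimality, giving $\tv(f,\hat f_n) \le 3\tv(f,f_\theta) + 2\da(f,f_n)$, and finally the uniform bound $\Ex[\da(f,f_n)] \lesssim \sqrt{\operatorname{VC}(\mathcal{A})/n}$. You are also right that going through \Cref{cor:sqrt-converge} alone only yields this up to an extra $\sqrt{\log n}$ factor, so the log-free rate needs the chaining result you cite.
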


This is a technique which often yields near-optimal guarantees when aiming to learn a set of densities whose Yatracos class has small VC dimension. Similarly, this is fruitful if the densities are well-approximated by another set of densities with amenable Yatracos class. For a particularly notable example, the Yatracos class for log-concave densities has unbounded VC dimension, yet they are well-approximated by piecewise linear densities, and this enables a near-optimal guarantee for learning in total variation distance \citep{chan2014efficient}.

Upon first glance, this technique seems quite tailored to total variation distance. An immediate obstacle to using such techniques for learning with Hellinger distance, is the crucial step leveraging how Yatracos classes capture TV distance. This followed from how the total variation distance between any pair of distributions $p,q$ has its distance captured by counting the number of samples in a particular region (say, the region where $p(x)>q(x)$):

\begin{equation*}
    \tv(p,q) = \left|\Pr_{p}[p(x) > q(x)] - \Pr_q[p(x) > q(x)] \right|
\end{equation*}

Since the Yatracos class contains these regions of interest, the minimum-distance estimate is close in total variation distance. On the other hand, it is not obviously clear for Hellinger that we can always near-optimally distinguish between $p,q$ by counting the number of samples in a particular region --- until recently.

\subsection{Reverse data processing inequalities}\label{sec:reverse-dpi}
We aim to leverage recent observations on \textit{reverse data processing inequalities} to overcome this obstacle. Roughly, independent results of \cite{bhatt2021information,pensia2023communication} show that counting the number of samples in certain likelihood-ratio regions can nearly preserve Hellinger distance. Consider two distributions $p,q$, and let $f_\tau$ denote the indicator of their likelihood ratio exceeding a threshold:

\begin{equation*}
    f_\tau (x) \triangleq \mathds{1}\left[\frac{p(x)}{q(x)} \ge \tau\right]
\end{equation*}

Let $f_\tau(p)$ denote the distribution that samples from $p$, applies $f_\tau$, and only observes its output. Meaning, $f_\tau(p)$ is equivalently $\operatorname{Bernoulli}\left(\Pr_p\left[\frac{p(x)}{q(x)} \ge \tau\right]\right)$. The surprising reverse data processing inequality shows how there always exists some $f_\tau$ that nearly preserves Hellinger distance. We rephrase the form of the result given in \cite{pensia2023communication} as:

\begin{theorem}[Weaker version of Corollary 3.4 of \cite{pensia2023communication}; preservation of Hellinger distance]\label{thm:rdpi}
    For any $p,q$, there exists a $\tau$ such that the following holds:
    \begin{equation}
    \dhsq(f_\tau(p), f_\tau(q)) \ge \frac{\dhsq(p,q)}{50\log(4/\dhsq(p,q))}
    \end{equation}
\end{theorem}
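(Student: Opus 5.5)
Since this theorem is a weakened restatement of Corollary 3.4 of \cite{pensia2023communication}, one option is simply to cite it. A self-contained argument would go through a dyadic decomposition of the likelihood ratio, as follows.

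Write $h = \dhsq(p,q)$ and $r(x) = p(x)/q(x)$. By symmetry (swapping $p,q$ replaces $\tau$ by $1/\tau$ and the event by its complement, which leaves the binary Hellinger distance unchanged), it suffices to capture the contribution of the region $\{r \ge 1\}$. We may assume that region carries at least $h/2$ of the mass of $\frac12\int(\sqrt p-\sqrt q)^2$.

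Partition $\{r\ge 1\}$ into dyadic level sets $S_j = \{2^{j} \le r < 2^{j+1}\}$ for $j \ge 0$, with $j=0$ refined further near ratio $1$. On $S_j$ with $j\ge1$ we have $(\sqrt p-\sqrt q)^2 \asymp p$. A ratio near $1$ contributes roughly $(r-1)^2 q$, which calls for a finer geometric grid in $r-1$, say $r-1\in[2^{-i},2^{-i+1})$.

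Next, truncate the scales. Levels with $p(S_j)$ or $q$-mass below $h/\mathrm{poly}$ and ratio levels beyond $\mathrm{poly}(1/h)$ (extreme regions) contribute a negligible amount or can be lumped together. This leaves $O(\log(4/h))$ relevant levels, so by pigeonhole one level $L$ carries Hellinger contribution at least $h/O(\log(4/h))$.

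Then take $\tau$ equal to the lower endpoint of that level, so that the event is $T = \{r \ge \tau\}$. Lower-bound the binary Hellinger distance by $\frac12(\sqrt{p(T)}-\sqrt{q(T)})^2 \ge \frac{(p(T)-q(T))^2}{2(\sqrt{p(T)}+\sqrt{q(T)})^2}$. Since $r\ge\tau$ on all of $T$, we get $p(T)\ge \tau q(T)$, so $p(T)-q(T)\ge (1-1/\tau)p(T)$. For $\tau\ge2$ this gives a binary distance of at least $\gtrsim p(T)\ge p(L)\gtrsim$ the contribution of $L$. The key point is that thresholding is monotone, so mass at higher ratios only helps.

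The main obstacle is the near-$1$ regime, $\tau = 1+\epsilon$. There $p(T)-q(T) \ge \epsilon q(T)$ gives a binary distance of roughly $\epsilon^2 q(T)$, compared with the level's contribution of roughly $\epsilon^2 q(L)$, and again $T\supseteq L$ yields $q(T)\ge q(L)$. So it works as long as the denominator $(\sqrt{p(T)}+\sqrt{q(T)})^2\lesssim q(T)$, which holds because $p(T)$ is comparable to $q(T)$ unless higher levels dominate. In that case I would instead pick the higher-ratio threshold, using case analysis on whether $p(T)\le 4q(T)$.

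Bookkeeping the constants to reach $50$ is routine once the level count is shown to be at most $\approx \log(4/h)$. That count needs care: levels of $r-1$ below $\sqrt{h}$ contribute at most $h/4$ in total, and ratio levels above $\approx 1/h$ can be merged into a single top threshold, since for $\tau \ge 2$ the bound is simply $p(T)$.
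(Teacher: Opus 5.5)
Your proposal is correct and is essentially the paper's argument: reduce by symmetry to $\{r>1\}$, discard $1<r<1+\sqrt{h}$ (which contributes at most $h/4$), pigeonhole over dyadic levels of $r-1$ plus a high-ratio region, and threshold at the lower endpoint of the winning level using $T\supseteq L$. Two simplifications the paper makes are worth adopting. First, your observation that the $\tau\ge 2$ bound is just $p(T)$ lets all of $\{r\ge 2\}$ be a single level; keeping separate dyadic levels up to $1/h$ makes the count about $\frac{3}{2}\log(1/h)$ rather than the $\log(4/h)$ you assert, so the constant-$50$ bookkeeping would no longer be routine. Second, your near-$1$ ``obstacle'' disappears because $p(T)\ge\tau q(T)$ directly gives $\sqrt{p(T)}-\sqrt{q(T)}\ge(\sqrt{\tau}-1)\sqrt{q(T)}$, so no case split on $p(T)\le 4q(T)$ is needed.
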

Roughly, this means that if two distributions required $k$ samples to well-distinguish, then they are still well-distinguishable from only counting the number of samples in some region with $O(k \log k)$ samples. For a self-contained presentation, we give a two-page proof in Appendix~\ref{sec:pensia} (the constants in \Cref{thm:rdpi} reflect the constants given by our presentation). Our presentation is slightly shorter than that of \cite{pensia2023communication} since we do not require as general a result as theirs, but we emphasize that no technique is notably different from their method. We remark that a generalization of this reverse data-processing inequality is proven in \cite{kazemi2025sample} for ``TV-like $f$-divergences.''

\section{Minimum-distance estimates for Hellinger}\label{sec:min-dist-hel}
We now present a simple modified recipe in order to get Hellinger guarantees from a minimum-distance estimator. Our approach will leverage a combination of second-order uniform convergence guarantees (where deviations are of the order $\sqrt{\Ex[f]/n}$ instead of $\sqrt{1/n}$), and this reverse data processing inequality (this combination has also been fruitful recently in a different setting studying adaptive mean estimation \citep{compton2025attainability}; see \Cref{sec:related-work} for a discussion). 

The approach will be quite similar to the classical minimum-distance estimator proof, with some modifications. Let $\mathcal{F}$ be a class of densities, and suppose $\mathcal{H}$ is some concept class. We define a modified notion of distance quantifying how much two densities differ with respect to $\mathcal{H}$:
\begin{equation}
    \hdist(p,q) \triangleq \sup_{H \in \mathcal{H}} \frac{1}{2} \left(\sqrt{\Pr_p[x \in H]} - \sqrt{\Pr_q[x \in H]} \right)^2
\end{equation}
This distance is more well-equipped for Hellinger than the analogous estimator in \Cref{eq:tv-est-def}. We will choose an estimate $\hat{f}_n$ that looks as it should according to $\mathcal{H}$:
\begin{equation}
    \hat{f}_n \triangleq \argmin_{f_\theta \in \mathcal{F}} \hdist(f_\theta, f_n)
\end{equation}
Instead of the Yatracos class, we consider $\mathcal{H}$ to be what we call a \textit{ratio class}; this class contains all sets describing where the ratio between two densities in $\mathcal{F}$ exceeds some value:
\begin{equation}
    \mathcal{H} \triangleq \left\{ \left\{ x : \frac{f_\theta(x)}{f_{\theta'}(x)} \ge \tau \right\} :  f_\theta,f_{\theta'} \in \mathcal{F}, \tau \in \mathbb{R} \right\}.
\end{equation}
Note how this generalizes the Yatracos class (where $\tau = 1$). Our main contribution shows that if $\mathcal{H}$ is the ratio class, then we conclude a similar bound in Hellinger distance for the performance of $\hat{f}_n$:
\begin{theorem}\label{thm:hel-min-dist}
    Given $n$ samples from a distribution $f$, let $\hat{f}_n$ be the minimum-distance estimator with respect to the ratio class $\mathcal{H}$ for a set of densities $\mathcal{F}$ ($f$ need not be in $\mathcal{F}$). Suppose the VC dimension of $\mathcal{H}$ is $d\ge 1$. Then with probability $1-\delta$,
    \begin{equation}
        \dhsq(f,\hat{f}_n) \lesssim \inf_{f_\theta \in \mathcal{F}}\dhsq(f,f_\theta) \log(2/\dhsq(f,f_\theta)) + \frac{(d \log(n) + \log \frac{2}{\delta}) \log(n)}{n} 
    \end{equation}
\end{theorem}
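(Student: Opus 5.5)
Let $f_\theta \in \mathcal{F}$ be a near-minimizer of the infimum, write $\eps^2 = \dhsq(f,f_\theta)$, and let $\hat f = \hat f_n$. The plan mirrors the classical Yatracos argument of \Cref{lemma:tv-dist}, with a triangle inequality for square roots of probabilities replacing the triangle inequality for probabilities. For a set $H$ write $p(H)=\Pr_p[x\in H]$.

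\textbf{Step 1: uniform convergence over the ratio class.} Apply \Cref{cor:sqrt-converge} to $\mathcal{H}$ with the $n$ samples. With probability $1-\delta$, every $H\in\mathcal{H}$ satisfies $|\sqrt{f_n(H)}-\sqrt{f(H)}|\le 8\sqrt{(d\ln(2n+1)+\ln(8/\delta))/n}$. Squaring gives $\hdist(f,f_n)\lesssim (d\log n+\log(2/\delta))/n =: r$.

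\textbf{Step 2: the estimator is close to $f$ on $\mathcal{H}$.} For any $H$, data processing for Hellinger (map $x\mapsto \mathds{1}[x\in H]$) gives $\frac12(\sqrt{f(H)}-\sqrt{f_\theta(H)})^2\le \dhsq(f,f_\theta)=\eps^2$. Hence $\hdist(f_\theta,f_n)\le 2\eps^2+2r$, using the triangle inequality for the metric $\sqrt{\cdot}$ of the coordinatewise distance. By minimality, $\hdist(\hat f,f_n)\le 2\eps^2+2r$. One more triangle inequality gives $\hdist(\hat f,f_\theta)\lesssim \eps^2+r$.

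\textbf{Step 3: reverse data processing.} Apply \Cref{thm:rdpi} to the pair $(\hat f,f_\theta)$. It produces a threshold set $H=\{x:\hat f(x)/f_\theta(x)\ge\tau\}$, and this set lies in $\mathcal{H}$. This is precisely why the ratio class is used instead of the Yatracos class. The Bernoulli Hellinger distance equals $\frac12[(\sqrt{a}-\sqrt{b})^2+(\sqrt{1-a}-\sqrt{1-b})^2]$. The complement of $H$ is also in $\mathcal{H}$, since it can be written as a strict inequality in the reversed ratio. If strict versus non-strict inequality is an issue, I will handle it by a limit in $\tau$ or by including both forms. Therefore $\dhsq(f_\tau(\hat f),f_\tau(f_\theta))\le 2\hdist(\hat f,f_\theta)\lesssim \eps^2+r$. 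Let $D=\dhsq(\hat f,f_\theta)$. Then $D/\log(4/D)\lesssim \eps^2+r$.

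\textbf{Step 4: inverting and concluding.} The function $D/\log(4/D)$ is increasing, so $D\lesssim(\eps^2+r)\log(4/(\eps^2+r))$. Since $r\gtrsim 1/n$, this is at most $\eps^2\log(2/\eps^2)+r\log n$. Finally, Hellinger distance is a metric, so $\dhsq(f,\hat f)\le 2\eps^2+2D$, which gives the stated bound. Taking an approximate minimizer handles the case where the infimum or argmin is not attained.

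\textbf{Main obstacle.} I expect the delicate part to be Step 3. Two things must be checked there. First, controlling the Bernoulli Hellinger distance needs both $H$ and its complement to be controlled via $\hdist$. Second, the logarithmic inversion must be split carefully into the regimes $\eps^2\ge r$ and $\eps^2<r$, so that the log factor lands as $\log(2/\eps^2)$ on the approximation term and as $\log n$ on the statistical term. This split is what produces the extra $\log(n)$ factor in the bound.
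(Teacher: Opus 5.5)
Your proof is correct and follows essentially the same route as the paper: approximate triangle inequalities for $\hdist$, minimality of $\hat f_n$, the second-order uniform convergence bound of \Cref{cor:sqrt-converge}, and the reverse data processing inequality applied to the pair $(\hat f_n, f_\theta)$, whose threshold set lies in the ratio class. The differences are minor. You invert $D/\log(4/D)$ once at the end instead of propagating $x\log(e/x)$ through the triangle inequalities. You also control the complement summand of the Bernoulli distance by a limit over thresholds, whereas the paper notes from the proof of \Cref{thm:rdpi} that the single summand for the set $\{p/q\ge\tau\}$ (after possibly swapping $p$ and $q$) already suffices.
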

\begin{proof}
    Our proof will follow similarly to the textbook proof of \Cref{lemma:tv-dist}, with the correct tools now in-hand. Conveniently, $\hdist$ is nicely behaved; it immediately holds that $\hdist(f,g) \le \dhsq(f,g)$, and it satisfies an approximate triangle inequality $\hdist(a,c) \le 2 \cdot (\hdist(a,b)+\hdist(b,c))$. We track (unoptimized) constants for this statistical result in case it is helpful for future downstream applications. Starting off, by the approximate triangle inequality, for any $f_\theta \in \mathcal{F}$ it holds that:
    \begin{equation}
        \dhsq(f,\hat{f}_n) \le 2 \cdot (\dhsq(f,f_\theta) + \dhsq(f_\theta,\hat{f}_n)) \label{step:hel-begin}
    \end{equation}

    We will bound $\dhsq(f_\theta,\hat{f}_n)$ in terms of $\hdist$ by using the reverse data processing inequality (\Cref{thm:rdpi}). From the proof of \Cref{thm:rdpi}, we know for any $f_\theta,f_{\theta'} \in \mathcal{F}$ that $\hdist(f_\theta,f_{\theta'}) \ge \frac{\dhsq(f_\theta,f_{\theta'})}{50 \log(4/\dhsq(f_\theta,f_{\theta'}))}$ (the statement implies a bound worse by a factor of two, but the proof shows the maximum of the two Hellinger summands is at least this quantity, which gives the lower bound for $\hdist$). We want to use this to upper bound $\dhsq(f_\theta,\hat{f}_n)$ in terms of $\hdist(f_\theta,\hat{f}_n)$. Generally, if $0 \le x,y \le 1$ and $x \ge \frac{y}{50 \log(4/y)}$, then we can conclude
    \begin{equation}\label{eq:move-log}
       50x\log(e/x)\ge \frac{y}{\log(4/y)} \cdot \log\left(\frac{50e \log(4/y)}{y}\right) \ge y
    \end{equation}
    since $50x\log(e/x)$ is non-decreasing in terms of $x \in [0,1]$.
    Using this, we continue:
    \begin{align*}
        &2 \cdot (\dhsq(f,f_\theta) + \dhsq(f_\theta,\hat{f}_n)) \le 2\dhsq(f,f_\theta) + 100\hdist(f_\theta,\hat{f}_n) \log(e/\hdist(f_\theta,\hat{f}_n)) \intertext{Let $z(x) \triangleq x \log(e/x)$. Since $z$ is non-decreasing in $[0,1]$, is concave, and $z(0)=0$, then $z(x) \le 2(z(a)+z(b))$ if $0 \le x,a,b \le 1$ and $x \le 2(a+b)$:}
        &\le 2\dhsq(f,f_\theta) + 200\hdist(f,f_\theta) \log(e/\hdist(f,f_\theta)) + 200\hdist(f,\hat{f}_n) \log(e/\hdist(f,\hat{f}_n)) \\
        &\le 202\dhsq(f,f_\theta) \log(e/\dhsq(f,f_\theta)) + 400\hdist(f,f_n) \log(e/\hdist(f,f_n)) + 400\hdist(f_n,\hat{f}_n) \log(e/\hdist(f_n,\hat{f}_n)) \intertext{By the definition of $\hat{f}_n$, we know $\hdist(f_n, \hat{f}_n) \le \hdist(f_n,f_\theta)$:}
        &\le 202\dhsq(f,f_\theta) \log(e/\dhsq(f,f_\theta)) + 400\hdist(f,f_n) \log(e/\hdist(f,f_n)) +400\hdist(f_n,f_\theta) \log(e/\hdist(f_n,f_\theta))\\
        &\le 202\dhsq(f,f_\theta) \log(e/\dhsq(f,f_\theta)) + 1200\hdist(f,f_n) \log(e/\hdist(f,f_n)) +800\hdist(f,f_\theta) \log(e/\hdist(f,f_\theta))\\
        &\le 1002\dhsq(f,f_\theta) \log(e/\dhsq(f,f_\theta)) + 1200\hdist(f,f_n) \log(e/\hdist(f,f_n)) \intertext{
    To finish, we will use the second-order uniform convergence bound of \Cref{cor:sqrt-converge}:}
        & \le 1002\dhsq(f,f_\theta) \log(e/\dhsq(f,f_\theta)) + 1200 \cdot \left(32(d \ln(2n + 1) + \ln \frac{8}{\delta})/n \right) \log\left(\frac{en}{32(d \ln(2n + 1) + \ln \frac{8}{\delta}) }\right)\\
        &\le 1002\dhsq(f,f_\theta) \log(e/\dhsq(f,f_\theta)) + \frac{38400(d \ln(2n + 1) + \ln \frac{8}{\delta}) \log(n)}{n} %
    \end{align*}
\end{proof}
This yields a near-identical recipe for density estimation in Hellinger with minimum-distance estimators; to learn a class of densities, merely show that it is well-approximated by densities whose ratio class has small VC dimension.

\textbf{Sharpening the misspecification term. } The guarantee from \Cref{thm:hel-min-dist} has an extra logarithmic factor in the misspecification term that does not usually appear in this style of argument. We remark that it is possible to sharpen this guarantee to remove the logarithmic factor, if we use a modified version of $\hdist$ that corresponds to quantizing into $\Theta(\log(n))$ states instead of 2 states. The main idea is that either: (i) $\dhsq(f,f_\theta) \le \frac{1}{n}$, so the misspecification term can be absorbed by the other term, or (ii) the more complete version of Corollary 3.4 from \cite{pensia2023communication} will only lose a constant factor of Hellinger distance when quantized to $\Theta(\log(n))$ states. We state the sharper result here, but defer the proof to Appendix~\ref{sec:sharper-misspec}, since our efficient algorithms are based on the estimator of \Cref{thm:hel-min-dist}.

\begin{theorem}\label{thm:sharper-hel-min-dist}
    Given $n$ samples from a distribution $f$, let $\hat{f}_n$ be a modified minimum-distance estimator with respect to the ratio class $\mathcal{H}$ for a set of densities $\mathcal{F}$ ($f$ need not be in $\mathcal{F}$). Suppose the VC dimension of $\mathcal{H}$ is $d\ge 1$. Then with probability $1-\delta$,
    \begin{equation}
        \dhsq(f,\hat{f}_n) \lesssim \inf_{f_\theta \in \mathcal{F}}\dhsq(f,f_\theta) + \frac{(d \log(n) + \log \frac{2}{\delta}) \log(n)}{n} 
    \end{equation}
\end{theorem}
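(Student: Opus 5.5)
We define the modified distance by quantizing into $m = \Theta(\log n)$ states rather than two. For a pair $f_\theta, f_{\theta'} \in \mathcal{F}$ and thresholds $\tau_1 < \dots < \tau_{m-1}$, consider the partition of $\R$ into the level sets $\{x : \tau_{j-1} \le f_\theta(x)/f_{\theta'}(x) < \tau_j\}$. Each cell is a difference of two sets of the ratio class $\mathcal{H}$. Define
\[
\tilde{\operatorname{d}}(p,q) \triangleq \sup \frac{1}{2}\sum_{j=1}^m \left(\sqrt{p(B_j)}-\sqrt{q(B_j)}\right)^2,
\]
where the supremum is over all partitions $(B_1,\dots,B_m)$ of this form. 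The estimator is $\hat f_n = \argmin_{f_\theta\in\mathcal{F}} \tilde{\operatorname{d}}(f_\theta,f_n)$.

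Three structural properties are then needed.
\begin{enumerate}
\item By data processing, $\tilde{\operatorname{d}}(p,q) \le \dhsq(p,q)$.
\item $\tilde{\operatorname{d}}$ satisfies the approximate triangle inequality $\tilde{\operatorname{d}}(a,c)\le 2(\tilde{\operatorname{d}}(a,b)+\tilde{\operatorname{d}}(b,c))$. This holds because, for a fixed partition, the sum is a squared $\ell_2$ distance between vectors of square roots.
\item (Reverse DPI with $m$ states.) For $f_\theta,f_{\theta'}\in\mathcal{F}$, the full Corollary 3.4 of \cite{pensia2023communication} gives
\[
\tilde{\operatorname{d}}(f_\theta,f_{\theta'}) \gtrsim \min\left(\dhsq(f_\theta,f_{\theta'}),\ \tfrac{1}{n}\right).
\]
The intuition is that with $m$ states and thresholds geometrically spaced in the likelihood ratio, only a constant factor is lost, except for a contribution on the order of $e^{-\Omega(m)}$, or $\mathrm{poly}(1/n)$, coming from extreme ratios.
\end{enumerate}

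The proof then mirrors \Cref{thm:hel-min-dist}, without the $x\log(e/x)$ bookkeeping. Fix $f_\theta$ and split into two cases.

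\textbf{Case (i): $\dhsq(f,f_\theta)\le 1/n$.} This term is absorbed into the statistical term, so it suffices to prove a bound of the form $\dhsq(f,\hat f_n)\lesssim \dhsq(f,f_\theta) + (\text{statistical term})$ for every $f_\theta$.

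\textbf{Main chain.} Start from
\[
\dhsq(f,\hat f_n)\le 2\dhsq(f,f_\theta)+2\dhsq(f_\theta,\hat f_n).
\]
Apply the reverse DPI to the second term:
\[
\dhsq(f_\theta,\hat f_n) \lesssim \tilde{\operatorname{d}}(f_\theta,\hat f_n) + \frac{1}{n}.
\]
By the triangle inequality and the minimizing property of $\hat f_n$,
\[
\tilde{\operatorname{d}}(f_\theta,\hat f_n) \lesssim \tilde{\operatorname{d}}(f,f_\theta) + \tilde{\operatorname{d}}(f,f_n) \le \dhsq(f,f_\theta) + \tilde{\operatorname{d}}(f,f_n).
\]

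\textbf{Uniform convergence.} Apply \Cref{cor:sqrt-converge} to the class of all cells. Each cell is an intersection of a set in $\mathcal{H}$ with the complement of another, so this class has VC dimension $O(d)$. With a union bound, the following holds simultaneously for all cells $B$:
\[
\left(\sqrt{f_n(B)}-\sqrt{f(B)}\right)^2 \lesssim \frac{d\log n+\log(1/\delta)}{n}.
\]
Summing over the $m=\Theta(\log n)$ cells gives
\[
\tilde{\operatorname{d}}(f,f_n)\lesssim \frac{(d\log n+\log(2/\delta))\log n}{n},
\]
which is exactly the claimed statistical term. The extra $\log n$ factor is the price paid for having $m$ states.

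\textbf{Main obstacle.} The hard step is (3): extracting from \cite{pensia2023communication} the statement that $\Theta(\log n)$ likelihood-ratio thresholds preserve Hellinger distance up to a constant plus $O(1/n)$. I would first adapt the self-contained proof in Appendix~\ref{sec:pensia}, as follows.
\begin{enumerate}
\item Decompose $\dhsq$ over dyadic bands of the ratio $r=f_\theta/f_{\theta'}$, namely $r\in[2^{j},2^{j+1})$ for $|j|\le O(\log n)$.
\item Within a band, the ratio varies by at most a factor of $2$. Hence the quantity $(\sqrt{p(B)}-\sqrt{q(B)})^2$ computed for the band is within a constant factor of the Hellinger mass integrated over that band, except for bands where $r$ is near $1$; for those, one further subdivides finely enough near $r=1$.
\item The tail bands with $r>n^{c}$ or $r<n^{-c}$ are collapsed into single cells. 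Their contribution is either captured up to a constant factor (since $\sqrt{p(B)}\gg\sqrt{q(B)}$ there) or is at most $O(1/n)$.
\end{enumerate}
Handling the near-$1$ ratio region, where the per-cell Hellinger contribution is of second order, is the delicate part. It may require thresholds of the form $1\pm 2^{-j}$ in addition to powers of $2$; this still keeps $m=O(\log n)$.
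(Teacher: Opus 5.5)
Your proposal is essentially the paper's proof. It uses the same $\Theta(\log n)$-state quantized distance over likelihood-ratio cells of pairs in $\mathcal{F}$, and the same chain through the approximate triangle inequality and the minimizing property of $\hat f_n$. It also uses the same uniform convergence over cells (each an intersection of a set of $\mathcal{H}$ with a complement, so VC dimension $O(d)$), with the extra $\log n$ paid for the number of cells.

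The one real error is the displayed form of step (3). The bound $\tilde{\operatorname{d}} \gtrsim \min(\dhsq, 1/n)$ is not what your chain uses, and it fails in both regimes:
\begin{itemize}
\item When $\dhsq \ge 1/n$ it only gives $\tilde{\operatorname{d}} \gtrsim 1/n$. So it does not imply your chain step $\dhsq(f_\theta,\hat f_n) \lesssim \tilde{\operatorname{d}}(f_\theta,\hat f_n) + 1/n$.
\item When $\dhsq \ll 1/n$ it asks for constant-factor preservation that Corollary 3.4 of \cite{pensia2023communication} does not give. Its loss factor $\log(4/\dhsq)/D$ is unbounded as $\dhsq \to 0$.
\end{itemize}
The correct statement is $\tilde{\operatorname{d}} \gtrsim \dhsq$ whenever $\dhsq \ge 1/n$, i.e.\ $\dhsq \lesssim \tilde{\operatorname{d}} + 1/n$. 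The paper extracts exactly this by taking $D = \lceil \log(4n) \rceil$ and splitting on whether $\dhsq(f_\theta,\hat f_n) \ge 1/n$, so that $\log(4/\dhsq) \le D$ in the nontrivial case. The case split that matters is therefore on $\dhsq(f_\theta,\hat f_n)$; your Case (i) on $\dhsq(f,f_\theta)$ is never actually used.

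Your direct band argument also yields the correct form, but the $O(1/n)$ slack comes from ratios near one, not from extreme ratios. Stop the thresholds $1 \pm 2^{-j}$ at $2^{-j} \approx n^{-1/2}$. Then:
\begin{itemize}
\item The leftover region $1 \le r \le 1+n^{-1/2}$ contributes at most $(\sqrt{1+n^{-1/2}}-1)^2 \le 1/(4n)$, and symmetrically below one.
\item Each band $[1+2^{-(j+1)},1+2^{-j})$ is captured up to a constant factor, as in Case 1 of the appendix proof.
\item All of $\{r \ge 2\}$ is captured up to a constant by a single cell, as in Case 2.
\end{itemize}
So the dyadic bands $[2^j,2^{j+1})$ are unnecessary, and $m = O(\log n)$ cells suffice.
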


\subsection{Applications}\label{sec:stat-apps}
We can now immediately adapt arguments from minimum-distance estimators in total variation. Below are three examples where we recover (previously-known) bounds in Hellinger distance (details in Appendix~\ref{app:stat-corr}); all are tight up to logarithmic factors. We emphasize that these statistical corollaries are all previously attainable via $\rho$-estimation (up to logarithmic factors, see discussion in \Cref{sec:related-work}).

\begin{corollary}\label{cor:stat-log}
    Given $n$ samples from some distribution $f$, then a minimum-distance estimator $\hat{f}_n$, with probability $1-\delta$, achieves error
    \begin{equation*}
        \dhsq(f,\hat{f}_n) \lesssim \inf_{f_\theta \in \mathcal{LC}_k} \dhsq(f,f_\theta) + \frac{k^{4/5} \log^{4/5}(n)}{n^{4/5}} + \frac{\log(2/\delta) \cdot \log(n)}{n}.
    \end{equation*}
\end{corollary}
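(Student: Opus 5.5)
We apply \Cref{thm:sharper-hel-min-dist} (so the misspecification term carries no extra logarithm) with $\mathcal{F}$ taken to be a class $\mathcal{P}_{t}$ of piecewise-linear densities with $t$ pieces, rather than $\mathcal{LC}_k$ itself, and then optimize $t$. This needs two ingredients. The first is an approximation result: every $g\in\mathcal{LC}_k$ is $\eps^2$-close in squared Hellinger distance to some density in $\mathcal{P}_t$ with $t = O(k\,\eps^{-1/2}\log(1/\eps))$ or similar. The second is a VC bound: the ratio class of $\mathcal{P}_t$ has VC dimension $O(t)$.

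\textbf{Step 1 (VC bound).} Take $p,q\in\mathcal{P}_t$. On the common refinement of their breakpoints there are at most $2t$ intervals, and on each one both are affine. The set $\{x : p(x)\ge \tau q(x)\}$ restricted to such an interval is $\{x : p(x)-\tau q(x)\ge 0\}$, where $p-\tau q$ is affine, so it is a single subinterval. Hence every set in the ratio class is a union of at most $2t$ intervals, and the VC dimension of unions of $2t$ intervals is $4t$. So $d=O(t)$, which contributes $O(t\log^2 n / n)$. With the logs arranged more carefully, as in the intended bound, this becomes $t\log n/n$ per piece. I would check whether the statement's $\log^{4/5}$ requires tracking only one $\log n$: \Cref{cor:sqrt-converge} gives deviation $d\ln n/n$, and the second $\log n$ in \Cref{thm:sharper-hel-min-dist} comes from the quantization. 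Getting exactly $\log^{4/5}$ may require a slightly more careful accounting. In any case, we will aim for a variance term $\lesssim t\log n/n$ plus $\log(2/\delta)\log n/n$.

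\textbf{Step 2 (approximation), the main obstacle.} For a single log-concave density $g$ we need a $t$-piece linear $h$ with $\dhsq(g,h)\lesssim t^{-4}$ (up to logs), and mixtures of $k$ then cost $kt$ pieces. One route is to bound $\dhsq(g,h)\le \int (g-h)^2/g$, or to approximate $\sqrt{g}$ in $L^2$. The tails, where $g$ is exponentially small, carry total mass $\lesssim t^{-4}$ and can be set to zero at a Hellinger cost equal to their mass. Truncate $g$ where it drops below $\eps^2$ times its mode level; by log-concavity this is an effective range of $O(\log(1/\eps))$ scale lengths. In the bulk, place breakpoints adaptively so that on each piece the linear interpolation error relative to $\sqrt{g}$ is small. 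Log-concavity gives concavity of $\log g$, and the interpolation error of a convex/concave function over pieces chosen by equalizing $\int |(\sqrt g)''|^{2/5}$ yields an $L^2$ error of $t^{-4}$ in squared Hellinger. This is a standard free-knot spline rate, which also needs care at the jump discontinuities of $g'$. For the mixture, approximate each component with $t/k$ pieces, sum them (which is still piecewise linear with $O(t)$ pieces), and use convexity of $\dhsq$ in pairs: $\dhsq(\sum w_ig_i,\sum w_ih_i)\le\sum w_i\dhsq(g_i,h_i)$. The misspecification relative to the nearest $f_\theta\in\mathcal{LC}_k$ is handled by $\dhsq(f,h)\le 2\dhsq(f,f_\theta)+2\dhsq(f_\theta,h)$.

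\textbf{Step 3 (balancing).} Combining the steps gives $\dhsq(f,\hat f_n)\lesssim \inf \dhsq(f,f_\theta) + (k/t)^{4}\cdot\mathrm{polylog} + t\log n/n + \log(2/\delta)\log n/n$. Choosing $t \asymp k^{4/5}(n/\log n)^{1/5}$ balances the two middle terms at $(k\log n/n)^{4/5}$, as claimed. I expect Step 2 to be the hardest part. In particular, the $\log(1/\eps)$ factor from the effective support may need to be absorbed by non-uniform knot placement, or else accepted as a polylog loss.
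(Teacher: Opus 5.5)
Your plan matches the paper's: apply \Cref{thm:sharper-hel-min-dist} to $t$-piece linear densities, whose ratio class lies in unions of $O(t)$ intervals; approximate each mixture component with $t/k$ pieces; and take $t\asymp(k^4n/\log n)^{1/5}$. However, two steps are not established as written.

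\textbf{Step 2 (approximation).} This is only a sketch, and the paper does not re-derive it. It invokes \Cref{lemma:piece-linear}, a Hellinger version of Theorem 12 of \cite{diakonikolas2016efficient}, which gives $\dhsq\le\eps$ with $O(\eps^{-1/4})$ pieces and no logarithm. The logarithm is avoided by exactly the non-uniform allocation you only speculate about. Split the domain into geometric level sets $I_m$ with $f(I_m)=O(2^{-cm})$. To reach squared Hellinger error $\eta^2$, tolerate relative error $\eta 2^{cm/4}$ on $I_m$. This costs $O(\eta^{-1/2}2^{-cm/8})$ pieces and contributes $O(\eta^2 2^{-cm/2})$ error, so both sums over the $O(\log(1/\eta))$ levels are geometric. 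The relative-error linear fit within a level comes from the log-concave structure (Lemma 14 and Proposition 15 there). If you accept a polylogarithmic loss here, as you allow, the stated exponent is lost. Also, your knot rule based on $(\sqrt g)''$ suggests linearly interpolating $\sqrt g$; that produces a piecewise-quadratic density, not a member of $\mathcal{P}_t$. This is repairable, since piecewise quadratics still have a ratio class of VC dimension $O(t)$, but the rate itself would still need an argument.

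\textbf{Step 3 (variance term).} Your suspicion is correct, and it exposes a slip in the paper's own proof rather than something you are missing. With $d\asymp t$, \Cref{thm:sharper-hel-min-dist} gives a variance term $(t\log n+\log(2/\delta))\log n/n$. The paper's choice $t=(k^4n/\log n)^{1/5}$ balances $k^4/t^4$ against $t\log n/n$. At that $t$, however, the actual term $t\log^2 n/n$ equals $k^{4/5}\log^{9/5}(n)/n^{4/5}$. So your Step 3 and the paper's last line drop the same factor of $\log n$. Balancing correctly, with $t=(k^4n/\log^2 n)^{1/5}$, this route yields only $k^{4/5}\log^{8/5}(n)/n^{4/5}$, which is still optimal up to logarithms.

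A variance of order $t\log n/n$ requires removing the quantization factor $D=\lceil\log(4n)\rceil$. The paper's tool for that is the constant-loss reverse data processing inequality for $\lin$ (\Cref{lemma:custom-rdpi}), as used in \Cref{lemma:h1-to-err}. The approximants of \Cref{lemma:piece-linear} have the required factor-two endpoint property, and this property survives taking mixtures. Consequently the $\mathcal{H}_1$-minimizer over $\lin_{2t}$ satisfies $\dhsq(f,\hat f_n)\lesssim\inf_{f_\theta\in\mathcal{LC}_k}\dhsq(f,f_\theta)+(k/t)^4+t(\log n+\log(2/\delta))/n$. This gives the stated bound whenever $\log(2/\delta)\lesssim\log n$, and the form of \Cref{thm:fast-log-concave} in general.
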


\begin{corollary}\label{cor:stat-gaussian}
    Suppose $\mathcal{G}_k$ denotes the class of mixtures of $k$ Gaussians. Given $n$ samples from some distribution $f$, then a minimum-distance estimator $\hat{f}_n$, with probability $1-\delta$, achieves error
    \begin{equation*}
        \dhsq(f,\hat{f}_n) \lesssim \inf_{f_\theta \in \mathcal{G}_k} \dhsq(f,f_\theta)+ \frac{(k\log^2(n) + \log \frac{2}{\delta}) \log(n)}{n}.
    \end{equation*}
\end{corollary}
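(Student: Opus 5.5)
We will apply \Cref{thm:sharper-hel-min-dist} with $\mathcal{F} = \mathcal{G}_k$ itself. This gives the stated bound as soon as we show that the ratio class $\mathcal{H}$ of $\mathcal{G}_k$ has VC dimension $d \lesssim k \log n$. Indeed, with $d \lesssim k\log n$ the statistical term becomes $(k\log^2(n) + \log\frac{2}{\delta})\log(n)/n$, which is exactly the claimed form. If the VC bound only holds in the weaker form $d \lesssim k$ (up to constants), we are done a fortiori. Using \Cref{thm:sharper-hel-min-dist} rather than \Cref{thm:hel-min-dist} is what removes the extra $\log(2/\dhsq(f,f_\theta))$ factor on the misspecification term.

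\textbf{Main step: bounding the VC dimension of the ratio class.} A set in $\mathcal{H}$ has the form $\{x : g_1(x) - \tau g_2(x) \ge 0\}$ for some $g_1, g_2 \in \mathcal{G}_k$. Write this as the nonnegativity set of a signed sum $\phi(x)=\sum_{i=1}^{2k} c_i e^{-a_i x^2 + b_i x + e_i}$ of at most $2k$ Gaussian-type exponentials.

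We bound the number of sign changes of $\phi$. A classical Descartes-rule / Rolle argument for exponential polynomials, generalized to Gaussians (e.g.\ divide by one term and differentiate repeatedly, as in Tossavainen's bound), shows that such a $\phi$ has $O(k)$ real zeros. We should double check this: a degenerate variant allows the quadratic exponents to differ and yields about $4k$ zeros. Granting it, every set in $\mathcal{H}$ is a union of $O(k)$ intervals. The class $\mathcal{A}_{O(k)}$ of unions of $O(k)$ intervals has VC dimension $O(k)$, so $d \lesssim k$.

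\textbf{Fallback if the zero-count is hard to make rigorous.} We would instead discretize. A Gaussian with variance far outside the sample range, or with mean far away, is indistinguishable on the $n$ samples. So, with a $\log n$ loss, we can approximate each Gaussian on the relevant range by a piecewise polynomial with $O(\log n)$ pieces of degree $O(\log n)$. We then count sign changes of piecewise polynomials, giving $d \lesssim k \log n$ (or worse). This matches the $k\log^2 n$ in the statement, which suggests the authors used an approximation with a $\log n$ factor.

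\textbf{Expected obstacle.} The obstacle I expect is rigorously bounding the number of zeros of a difference of two $k$-mixtures with arbitrary variances. The Rolle-type induction must handle terms with equal variance and opposite-sign coefficients. My first attempt would be induction on the number of terms: multiply by $e^{a_1x^2 - b_1 x}$ and differentiate, so that each differentiation multiplies the remaining terms by polynomials of degree $1$. Track the resulting Gaussian-times-polynomial terms and bound zeros by total "degree", yielding $O(k)$ zeros or, at worst, $O(k\log n)$ after restricting to a bounded range of ratios.
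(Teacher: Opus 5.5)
\textbf{The gap is in your main step.} Reducing to \Cref{thm:sharper-hel-min-dist} with $\mathcal{F}=\mathcal{G}_k$ is legitimate. An $O(k)$ bound on the VC dimension of the ratio class would give the statement a fortiori, even with $k\log n$ in place of $k\log^2 n$. However, the Rolle/Descartes induction you sketch does not show that $\phi=\sum_{i=1}^{2k} c_i e^{-a_i x^2+b_ix+e_i}$ has $O(k)$ real zeros.

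That argument works for exponential sums $\sum_i c_i e^{\lambda_i x}$ because dividing by one term and differentiating leaves the other terms pure exponentials. Here, after dividing by $e^{q_1}$, each remaining exponent $q_i-q_1$ is still quadratic whenever $a_i\neq a_1$. So every derivative raises the degree of every remaining polynomial prefactor by one, and killing a prefactor of degree $D$ costs $D+1$ derivatives. Writing $N(d_1,\dots,d_m)$ for the resulting zero bound, you only get $N(d_1,\dots,d_m)\le d_1+1+N(d_2+d_1+1,\dots,d_m+d_1+1)$ with $N(d)=d$. Starting from $N(0,\dots,0)$, this gives $2^m-2$ zeros for $m$ terms, so only $2^{O(k)}$ intervals and a VC bound exponential in $k$.

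The linear bound is true: a nonzero combination of $m$ Gaussians with distinct parameters has at most $2(m-1)$ zeros (Kalai, Moitra, and Valiant, 2010). Its proof rests on the variation-diminishing property of Gaussian convolution (heat flow does not create zero crossings), not on Rolle. You would need to invoke that result. You would also need to handle tangential zeros, e.g.\ by replacing $\tau$ with a slightly smaller regular level that has the same trace on any given finite point set. Your remark that restricting ratios gives ``at worst $O(k\log n)$'' zeros has no supporting argument.

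Your fallback does not repair this within your framework. Approximating each Gaussian by a piecewise polynomial says nothing about the VC dimension of the ratio class of the exact class $\mathcal{G}_k$, since the sets $\{g_1\ge\tau g_2\}$ are not controlled by $\{p_1\ge\tau p_2\}$ for approximants $p_i$. To use approximation you must change the estimator's class, which is exactly the paper's proof:
\begin{enumerate}
\item Take $\mathcal{F}$ to be $3k$-piece polynomial densities of degree $O(\log n)$. Their ratio class consists of unions of $O(k\log n)$ intervals, since a degree-$O(\log n)$ polynomial has $O(\log n)$ sign changes on each piece.
\item Apply \Cref{thm:sharper-hel-min-dist} to that $\mathcal{F}$.
\item Pay the approximation error separately. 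Each Gaussian is within $1/n$ squared Hellinger distance of a $3$-piece degree-$O(\log n)$ polynomial (Lemma 36 of \cite{chan2014efficient}), hence so is the mixture by joint convexity. The approximate triangle inequality then turns $\inf_{\mathcal{F}}$ into $\inf_{\mathcal{G}_k}+O(1/n)$.
\end{enumerate}
The number of pieces per component must be $O(1)$. Your ``$O(\log n)$ pieces of degree $O(\log n)$'' would give VC dimension of order $k\log^2 n$ and miss the stated rate by a logarithm.

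If you instead complete your main route with the cited zero bound, you get a proper estimator and a slightly better rate. The paper's route avoids that nontrivial zero-counting lemma, needing only polynomial root counting plus a known approximation result.
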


\begin{corollary}\label{cor:stat-mix}
    Let $\mathcal{F}$ denote the set of convex combinations of $k$ fixed densities $p_1,\dots,p_k$. Given $n$ samples from some distribution $f$, then a minimum-distance estimator $\hat{f}_n$, with probability $1-\delta$, achieves error
    \begin{equation*}
        \dhsq(f,\hat{f}_n) \lesssim \inf_{f_\theta \in \mathcal{F}} \dhsq(f,f_\theta) + \frac{(k \log(n) + \log \frac{2}{\delta}) \log(n)}{n}.
    \end{equation*}
\end{corollary}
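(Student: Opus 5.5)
The plan is to apply \Cref{thm:sharper-hel-min-dist} (or \Cref{thm:hel-min-dist}, absorbing the extra logarithm) with $\mathcal{F} = \{\sum_i w_i p_i : w \in \Delta_k\}$ itself as the class. Then the misspecification term is exactly $\inf_{f_\theta \in \mathcal{F}} \dhsq(f,f_\theta)$, and everything reduces to bounding the VC dimension $d$ of the ratio class $\mathcal{H}$ by $O(k)$. Substituting $d = O(k)$ into the sharper theorem gives the claimed bound
\[
\frac{(k\log(n) + \log\frac{2}{\delta})\log(n)}{n}.
\]

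To bound the VC dimension, fix $f_\theta = \sum_i w_i p_i$, $f_{\theta'} = \sum_i w'_i p_i$, and $\tau \in \R$. The set $\{x : f_\theta(x) \ge \tau f_{\theta'}(x)\}$ equals
\[
\Bigl\{x : \sum_{i=1}^k (w_i - \tau w'_i)\, p_i(x) \ge 0\Bigr\},
\]
at least where $f_{\theta'}(x) > 0$. Points with $f_{\theta'}(x) = 0$ can be handled by a convention for the ratio (e.g.\ ratio $=+\infty$ when the numerator is positive), which changes the set only on the common zero set of the $p_i$ with positive weight. If needed, one can add a single extra coordinate to the linear representation. Hence every set in $\mathcal{H}$ has the form $\{x : \langle c, \phi(x)\rangle \ge 0\}$ with the feature map $\phi(x) = (p_1(x),\dots,p_k(x)) \in \R^k$ and some $c \in \R^k$. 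This is a class of homogeneous halfspaces pulled back through $\phi$. By the standard Dudley/Radon bound on the VC dimension of a $k$-dimensional vector space of functions thresholded at zero, the class has VC dimension at most $k$, or $O(k)$ after accounting for the boundary conventions. So $d \lesssim k$.

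The main obstacle is purely technical: making the ratio class precise where densities vanish, or where $\tau \le 0$. For $\tau \le 0$ the set is all of $\R$ (or the region where the ratio is defined), which adds only a constant to the VC dimension. Points where both densities vanish can be included by also allowing the constant function or the indicator of $\{f_{\theta'}=0\}$. Either way this adds $O(1)$ dimensions to the function space, so $d = O(k)$ still holds. After that, the conclusion follows immediately from \Cref{thm:sharper-hel-min-dist}.
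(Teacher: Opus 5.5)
Your proposal is correct and matches the paper's proof. You apply \Cref{thm:sharper-hel-min-dist} with $\mathcal{F}$ itself as the class, and you bound the VC dimension of the ratio class by $k$: every set $\{x : \sum_i (w_i - \tau w'_i)\, p_i(x) \ge 0\}$ is the nonnegativity set of a function in the span of $p_1,\dots,p_k$, which is what the paper takes from Section 8.2 of \cite{devroye2001combinatorial}.

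One small correction: your parenthetical fallback to \Cref{thm:hel-min-dist} would not give the stated bound. Its extra $\log(2/\dhsq(f,f_\theta))$ factor multiplies the misspecification term and cannot be absorbed, so the sharper theorem is genuinely needed.
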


\section{Learning densities approximated by piecewise polynomials}\label{sec:log-concave}
In this section, we discuss how our recipe also enables fast algorithms for learning in Hellinger. We will learn densities that are well-approximated by piecewise polynomials; notably, this enables sample-efficient learning for mixtures of Gaussians or log-concave densities in near-linear time.

\cite{acharya2017sample} provides a near-linear time algorithm for learning piecewise polynomial densities in total variation with near-optimal sample complexity. Their work employs a minimum-distance estimate approach, where they prove their densities of interest are well-approximated by piecewise polynomial functions, they observe that the Yatracos set (in this case, the union of disjoint intervals) has bounded VC dimension, and then they use a greedy partitioning approach to optimize the minimum-distance estimate in near-linear time. This approach is fruitfully applied to many classes of densities that are well-approximated by piecewise polynomials of bounded degree.

In our work, we will show that a similar program can be attained for Hellinger distance by employing ideas similar to \Cref{sec:min-dist-hel} and \cite{acharya2017sample}, with some additional observations:

\begin{corollary}\label{cor:poly-final-h1-opt}
Consider an interval $J=[a,b]$ and $n$ samples $f_n$ where $a \le x_1 \le \dots \le x_n \le b$. Consider $\mathcal{F}_d^t$: the class of $t$-piece degree-$d$ polynomials over $J$. There exists an algorithm that runs in time $O \left( (d^3 \log \log n + nd^2 + d^{\omega + 2}) \log^3 (n) \right)$ and outputs an estimate $\hat{f}_n \in \mathcal{F}_d^{4t}$ where with probability $1-\delta$,
\begin{equation*}
    \dhsq(f,\hat{f}_n) \lesssim  \inf_{f_\theta \in \mathcal{F}_d^t} d\dhsq(f,f_\theta) \log(2t / \dhsq(f,f_\theta)) + \frac{dt \log(n) \cdot (\log(n) + \log(2/\delta))}{n}.
\end{equation*}
\end{corollary}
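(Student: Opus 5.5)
The plan has a statistical half and an algorithmic half, glued together by a local version of the argument in \Cref{thm:hel-min-dist}. For the statistical half, I would work with the interval distance $\mathcal{H}_1$ from \Cref{eq:hk}, restricted to $J$ and its subintervals, instead of the full ratio class. The key structural claim is a \emph{local} reverse data processing inequality. For any $p \in \mathcal{F}_d^t$ and $q \in \mathcal{F}_d^{4t}$, the common refinement of their breakpoints has $O(t)$ pieces, and on each piece $I$ both are degree-$d$ polynomials. On such a piece, each threshold set $\{x : p(x)/q(x) \ge \tau\} = \{x : p(x) - \tau q(x) \ge 0\}$ is a union of at most $d+1$ intervals.

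Applying \Cref{thm:rdpi} to the normalized restrictions of $p,q$ to $I$, together with the mass term $\frac12(\sqrt{p(I)}-\sqrt{q(I)})^2$, I would obtain
\[
\dhsq(p_I,q_I) \lesssim \log(\cdot) \sup_{A} \tfrac12\bigl(\sqrt{p(A)}-\sqrt{q(A)}\bigr)^2 ,
\]
where $A$ ranges over unions of at most $d+1$ subintervals of $I$. A union of $d+1$ intervals loses at most a factor $O(d)$ against its best single interval, by convexity of $(\sqrt{a}-\sqrt{b})^2$ splitting across disjoint pieces. Summing over the $O(t)$ pieces then gives
\[
\dhsq(p,q) \lesssim d \cdot \sum_{I} \mathcal{H}_1(p_I,q_I)\log(2t/\cdot),
\]
which matches the factor $d\log(2t/\cdot)$ in the statement.

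The empirical side uses a partitioned quantity. Define $\mathcal{H}_{1,t}(p,q) = \sup \sum_{j} \mathcal{H}_1$ over partitions of $J$ into $O(t)$ intervals. The relevant concept class consists of unions of $O(t)$ intervals, which has VC dimension $O(t)$. Applying \Cref{cor:sqrt-converge} to single intervals (VC dimension $2$) bounds each interval's deviation by $O((\log n+\log(1/\delta))/n)$ in the Hellinger sense. Summing over $O(t)$ pieces, the sample error is $O(t(\log n + \log(2/\delta))/n)$, and the extra $d\log n$ factor comes from the steps above. Then I would rerun the chain of inequalities in the proof of \Cref{thm:hel-min-dist} verbatim, replacing $\hdist$ by $\mathcal{H}_{1,t}$. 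The ingredients are: the approximate triangle inequality, applied per interval and summed; the monotone concave function $x\log(e/x)$ to move the log; and the near-optimality of $\hat f_n$, namely $\mathcal{H}_{1,t}(f_n,\hat f_n) \le C\,\mathcal{H}_{1,t}(f_n,f_\theta)$ up to the constant approximation factor the algorithm guarantees.

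For the algorithmic half, I would adapt the greedy merging of \cite{acharya2017sample}. I would start from $n$ singleton intervals at sample points and, in each round, pair up neighbouring intervals. For each pair I would call a polynomial-fitting oracle that returns a degree-$d$ polynomial approximately minimizing $\mathcal{H}_1$ against $f_n$ on that interval. I would then keep split the $O(t)$ pairs with largest error and merge the rest. This gives $O(\log n)$ rounds and at most $4t$ final pieces. The standard charging argument then shows that the merged pieces not containing a breakpoint of the optimal $f_\theta$ have error at most that of $f_\theta$ restricted there, within constant factors. The oracle itself would mirror the $\mathcal{A}_1$ projection of \cite{acharya2017sample}. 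Because $\mathcal{H}_1$ constraints of the form $|\sqrt{p(I)}-\sqrt{f_n(I)}| \le \eta$ are equivalent to linear constraints $p(I) \in [(\sqrt{f_n(I)}-\eta)^2, (\sqrt{f_n(I)}+\eta)^2]$ for fixed $\eta$, I would binary search over $\eta$ (this accounts for a $\log\log n$ factor) and solve the resulting convex feasibility problem with a separation oracle, reusing their ellipsoid/cutting-plane implementation. That gives the $d^3$, $nd^2$ and $d^{\omega+2}$ terms, with $\log^3 n$ coming from rounds, binary search, and nested searching.

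I expect the main obstacle to be this oracle. Unlike $\mathcal{A}_1$, where violations are additive, the constraint here is multiplicative: the allowed band for $p(I)$ scales with $f_n(I)$. The separation step must therefore find, in near-linear time, an interval violating a nonuniform band, rather than merely the interval of maximum discrepancy. My first attempt would be to discretize the ratio $p(I)/f_n(I)$ into $O(\log n)$ dyadic scales and run the $\mathcal{A}_1$-style maximum-discrepancy search (via prefix sums) at each scale.
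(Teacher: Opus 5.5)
\textbf{Gap: the algorithm does not deliver the near-optimality your statistical argument assumes.}

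Your statistical half is a legitimate alternative to the paper's. Because you analyze a minimizer of the multi-interval quantity $\mathcal{H}_{1,t}$ (essentially $\mathcal{H}_{O(t)}$), the misspecification term $\mathcal{H}_{1,t}(f,f_\theta)\le\dhsq(f,f_\theta)$ is immediate. Per-piece reverse data processing, plus the $O(d)$ loss for unions of intervals, can be made to give the $d\log(2t/\cdot)$ factor. The problem is the interface between your two halves.

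You assume $\mathcal{H}_{1,t}(f_n,\hat f_n)\le C\,\mathcal{H}_{1,t}(f_n,f_\theta)$. Greedy merging with a one-interval $\mathcal{H}_1$ oracle does not deliver this, and your charging sketch only covers output pieces that contain no breakpoint of $f_\theta$.
\begin{itemize}
\item For the up to $t-1$ output pieces that do contain a breakpoint, the merging rule bounds the cost only by $\alpha$ times the cost of $f_\theta$ on \emph{some} breakpoint-free candidate, plus $\beta$. This is a max-type bound, exactly the $\alpha\opt_t+\beta$ guarantee of \Cref{thm:merge-guarantee}.
\item Nothing prevents each of these pieces from costing as much as the largest per-piece error of $f_\theta$. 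Summing gives $t\cdot\max_P \mathcal{H}_1((f_n)_P,(f_\theta)_P)$ over pieces $P$ of $f_\theta$. This can exceed $\mathcal{H}_{1,t}(f_n,f_\theta)$ by a factor $t$ when the misspecification of $f_\theta$ sits in one piece, so your final bound degrades to $dt\,\dhsq(f,f_\theta)\log(\cdot)$.
\item Separately, the supremum over partitions in $\mathcal{H}_{1,t}(f_n,\hat f_n)$ allows many witness intervals inside a single output piece. An oracle that controls one interval per piece does not bound this without a further loss of up to a factor $t$.
\end{itemize}

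\textbf{How the paper avoids this.} Its statistical lemma (\Cref{lemma:poly-h1-to-err}) needs only the global single-interval bound $\mathcal{H}_1(\hat f_n,f_n)\lesssim\inf_{f_\theta\in\mathcal{F}_d^{2t}}\mathcal{H}_1(f_\theta,f_n)+1/n$. Greedy merging does give this after the post-processing in \Cref{cor:final-h1-opt}, which rescales each output piece to its empirical mass so that fully contained pieces contribute nothing by \Cref{fact:dpi}. The factor $dt$ lost in $\mathcal{H}_{rdt}\le rdt\,\mathcal{H}_1$ is then cancelled by comparing against $\tilde f_\theta$ from \Cref{lemma:tilde-f}. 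This is $f_\theta$ split at the $t$-quantiles of its Hellinger error and rescaled to match $f$'s mass on each piece, so that $\mathcal{H}_1(f,\tilde f_\theta)\lesssim\dhsq(f,f_\theta)/t$.

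\textbf{How your route could be repaired.} Either use this comparator, or give a genuine averaging argument and restrict the analysis to the common refinement of $f_\theta$ and $\hat f_n$. For the averaging argument, keep $\Theta(t)$ (say $2t$) largest candidates. Then each merged breakpoint piece is bounded by the average over at least $t$ disjoint breakpoint-free candidates. As written, this step is missing.

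\textbf{Oracle.} Your separation oracle is also only a sketch. The paper buckets by total count rather than by ratio: it uses dyadic windows in the merged $\pm 1$ array, each at most four times longer than any interval it covers. Within each window it uses $|\sqrt{x}-\sqrt{y}|=\Theta(|x-y|/\sqrt{x+y})$ and range-min/max queries on prefix sums, which gives $O(s)$ time.
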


This implies our \Cref{corr:fast-gauss} for mixtures of Gaussians. Our \Cref{thm:fast-log-concave} for log-concave densities uses a slightly different result, where we improve a logarithmic factor by proving a tighter reverse data processing inequality for a restricted class of piecewise-linear densities that well-approximate log-concave densities.

\subsection{Technique overview} 
We now outline our technique for adapting the method of \cite{acharya2017sample} to Hellinger distance. We provide much more detail in \Cref{app:log-concave,app:piecewise-poly}, where we choose to carefully outline the required modifications of their method, instead of essentially rewriting all their corresponding 30 pages, since our modifications only require changing some cleanly-describable subroutines.

\textbf{Component 1: proving bounded empirical $\mathcal{H}_1$ distance implies small Hellinger distance. } Their approach uses a classical minimum-distance estimator argument; after establishing that a log-concave density may be well-approximated by a piecewise linear density, it suffices to output a $\hat{f}_n$ with small $\mathcal{A}_k$ distance (see \Cref{eq:ak}) from $f_n$. This roughly quantifies a distance between two densities with respect to what can be observed from the count of samples within $k$ disjoint intervals. 

In \Cref{eq:hk}, we define a notion of $\mathcal{H}_1$ distance. With an argument similar to \Cref{thm:hel-min-dist}, we will obtain a bound for $\dhsq(f,\hat{f}_n)$ in terms of $\mathcal{H}_1(\hat{f}_n,f_n)$. Naively this bound would need to be in terms of some $\mathcal{H}_k$ distance, but we modify the argument such that instead a bound on $\mathcal{H}_1$ suffices; we note that such a modification is not possible for the $\mathcal{A}_k$-based procedure in \cite{acharya2017sample}. We now turn towards efficiently optimizing $\mathcal{H}_1(f_n,\hat{f}_n)$.

\textbf{Component 2: optimizing $\mathcal{H}_1$ distance over $t$-piece densities via greedy merging, assuming an oracle for optimizing over 1-piece densities.} The method of \cite{acharya2017sample} gives an algorithmic procedure, where optimizing over $t$-piece densities may be reduced to the task of optimizing over 1-piece densities. We will use this same procedure. Afterwards, we apply an additional post-processing step that rescales each piece of $\hat{f}_n$ to have the same mass as $f_n$ within each piece, as this is crucial for our argument that allowed us to focus on only $\mathcal{H}_1$ instead of $\mathcal{H}_k$.

\textbf{Component 3: designing an oracle for optimizing $\mathcal{H}_1$ distance over 1-piece densities.} We now want to find a degree-$d$ polynomial that approximately minimizes the $\mathcal{H}_1$ distance with the empirical distribution within a given interval $J$. The core technical obstacle for adopting their technique reduces to the following: for some given polynomial $p$, find some interval $I \subseteq J$ that witnesses most of the distance $\mathcal{H}_1(p,f_n)$. In the analogous subroutine for \cite{acharya2017sample}, their task reduced to a previously-studied question of \cite{csuros2004maximum}. For our work, finding an interval that witnesses $\mathcal{H}_1$ distance will require a new approximation algorithm.

We first reduce our task to the following optimization problem over an array of $\pm 1$ values, where we want a constant-factor approximation in linear time: 
\begin{equation*}
\max_{l\le r \text{ where } l,r\in \{0,\dots,|A|-1\}} \frac{1}{2} \left(\sqrt{\text{(\# of -1's in $A_l,\dots,A_r$)} /n} - \sqrt{\text{(\# of +1's in $A_l,\dots,A_r$)} /n} \right)^2
\end{equation*}
Note how this is trivially solved exactly in quadratic time. Oversimplifying, we roughly approximate the objective by the quantity $\frac{| (\text{\# of -1's in }A_{l},\dots,A_{r}) - (\text{\# of +1's in }A_{l},\dots,A_{r}) |^2}{n \cdot (r-l+1)}$. The numerator is easy to optimize in $O(1)$ time within some interval via prefix sums and a range minimum query data structure. We prove that if we optimize for this numerator within a carefully chosen set of intervals, then one of the optimizer intervals witnesses the $\mathcal{H}_1$ distance sufficiently well.

When combined, these components yield a near-linear time algorithm for learning mixtures of Gaussians and log-concave densities with near-optimal samples (\Cref{corr:fast-gauss,thm:fast-log-concave}).

\acks{We thank Henry WJ Reeve for a helpful discussion relating to the sharpened misspecification term in \Cref{thm:sharper-hel-min-dist}. S.C. is supported by the NDSEG Fellowship Program, Tselil Schramm’s
NSF CAREER Grant no. 2143246, and Gregory Valiant’s Simons Foundation Investigator Award and NSF award
AF-2341890.}

\bibliography{ref}

\appendix

\section{Discussion}\label{sec:discussion}
In this work, we developed fast, sample-efficient algorithms for density estimation in Hellinger. Our primary applications focused on mixtures of log-concave or Gaussian densities; classical total variation guarantees for minimum-distance estimation seem ripe for adoption with our recipe, and we believe this can achieve nearly-optimal results for most, if not all, distribution families considered in prior work such as~\cite{acharya2017sample} (once you conclude the corresponding piecewise approximation guarantee). We find it conceptually interesting that \cite{acharya2017sample} required $\mathcal{A}_k$ distance involving unions of $k$ intervals, yet by looking at $\mathcal{H}_1$ distance with only one interval, we are able to recover their total variation results.

One practical consideration for using a density estimation method is whether it requires carefully tuning some parameters. For the piecewise polynomial approach, we observe how there are not too many important parameter configurations; since we only need to decide on the number of pieces and the degree, and our analyses work when these parameters are off by a constant factor, then only a polylogarithmic number of configurations are relevant. With this in mind, you could in near-linear time try all configurations, then use a standard hypothesis selection method, and obtain an adaptive algorithm that requires no parameter tuning at all. 

We note that there are similarly-named minimum-distance methods that are applied to Hellinger (e.g. \cite{wolfowitz1957minimum,beran1977minimum,donoho1988automatic}; see \cite{jana2025optimal} for a recent work), but these are different from the Yatracos-style estimator we discuss in this work. These estimators work by choosing an estimate in some model class that minimizes some distance with the empirical distribution; this is often some distance like Hellinger or KL. First, the analyses of these estimators follow a different approach that is not the VC-based analysis of Yatracos-style estimators. Second, depending on the setting, these may incur similar issues to the previously-discussed NPMLE (which is one such minimum-distance method, when you choose KL as your distance): for many rich classes this is an intractable optimization problem, or is not even well-defined (like for mixtures of Gaussians with arbitrary variances, the optimizer is not well-defined and must put point masses on empirical samples).

A natural question after reading \Cref{cor:poly-final-h1-opt} is whether there are classes where the degree $d$ of the approximating polynomial must be large, as this would negatively affect both the running time and the squared Hellinger distance error. Interestingly, from personal communication with the authors of \cite{acharya2017sample}, all degree-$d$ pieces may each be replaced with $O(d \log d)$  pieces of degree $O(\log (d/\eps))$, and this will only increase the squared Hellinger distance by $\eps$. This means the dependence on $d$ can be transformed into dependence on $t$, to less dramatically affect the running time and squared Hellinger distance error.

\section{Self-contained proof of the reverse data processing inequality of \cite{pensia2023communication} (\Cref{thm:rdpi})}\label{sec:pensia}

\begin{proof}
    For a self-contained presentation, we will present a proof in this text. Nothing in this presentation is notably different from the technique of \cite{pensia2023communication}, and we will restrict to discrete distributions only for ease of notation. Also, for ease of notation, let $N \triangleq \frac{1}{\dhsq(p,q)}$.

    The proof technique will be to decompose the Hellinger distance into $O(\log(2/\dhsq(p,q)))$ terms, so one must contain a large fraction of the distance, and show that the distance from any term is attainable by a likelihood ratio threshold. To start, by definition of Hellinger distance:
    \begin{equation*}
        \dhsq(p,q) \triangleq \frac{1}{2} \sum_x \left( \sqrt{p(x)} - \sqrt{q(x)} \right)^2 = \frac{1}{2} \sum_{x : p(x) > q(x)} \left( \sqrt{p(x)} - \sqrt{q(x)} \right)^2 + \frac{1}{2} \sum_{x : p(x) < q(x)} \left( \sqrt{p(x)} - \sqrt{q(x)} \right)^2
    \end{equation*}

    Without loss of generality (by symmetry), we will now assume:

    \begin{equation*}
        \sum_{x : p(x) > q(x)} \left( \sqrt{p(x)} - \sqrt{q(x)} \right)^2 \ge \dhsq(p,q)
    \end{equation*}

    We further observe that the summands where $1 < \frac{p(x)}{q(x)} \le 1 + \frac{1}{\sqrt{N}}$ have negligible contribution:

    \begin{align*}
        & \sum_{x: 1 \le \frac{p(x)}{q(x)} \le 1 + \frac{1}{ \sqrt{N}}} \left( \sqrt{p(x)} - \sqrt{q(x)} \right)^2 \\
        & \le \sum_x \left(\sqrt{\left(1 + \frac{1}{ \sqrt{N}}\right)q(x)} - \sqrt{q(x)} \right)^2 \\
        & = \left(\sqrt{1 + \frac{1}{ \sqrt{N}}} - 1 \right)^2 \le \frac{1}{4N} = \frac{\dhsq(p,q)}{4} \\
        & \implies \sum_{x : \frac{p(x)}{q(x)} \ge 1 + \frac{1}{ \sqrt{N}}} \left( \sqrt{p(x)} - \sqrt{q(x)} \right)^2 \ge \frac{1}{2} \dhsq(p,q)
    \end{align*}

    Consider intervals of the form $[1 + \frac{1}{2^{j+1}},1+\frac{1}{2^j})$ for integer $j \ge 0$. Covering $[1 + \frac{1}{\sqrt{N}},2)$ requires $\lceil \log(\sqrt{N}) \rceil$ such intervals. We may additionally cover $[2,\infty)$ by its own interval. Hence, there exists an interval $I^*$, either of the form $[1 + \frac{1}{2^{j+1}},1+\frac{1}{2^j})$ or $[2,\infty)$, where

    \begin{equation*}
        \sum_{x : \frac{p(x)}{q(x)} \in I^*} \left( \sqrt{p(x)} - \sqrt{q(x)} \right)^2 \ge \frac{1}{2 \cdot (\lceil \log(\sqrt{N}) \rceil + 1)} \dhsq(p,q) \ge \frac{1}{2 \cdot \log(4/\dhsq(p,q))} \dhsq(p,q)
    \end{equation*}

    All that remains is to show that for any such $I^*$, there exists an $f_\tau$ with approximately the Hellinger distance of the interval's contribution.

    \textit{Case 1: $I^* = [1+\frac{1}{2^{j+1}},1+\frac{1}{2^j})$. } First, we upper bound the value of the sum from $I^*$:
    \begin{align*}
        & \sum_{x : \frac{p(x)}{q(x)} \in [1+\frac{1}{2^{j+1}},1+\frac{1}{2^j})} \left( \sqrt{p(x)} - \sqrt{q(x)}\right)^2 \\
        & \le \sum_{x : \frac{p(x)}{q(x)} \in [1+\frac{1}{2^{j+1}},1+\frac{1}{2^j})} \left( \sqrt{\left(1+\frac{1}{2^j}\right)q(x)} - \sqrt{q(x)}\right)^2 \\
        & = \Pr_q\left[\frac{p(x)}{q(x)} \in \left[1+\frac{1}{2^{j+1}},1+\frac{1}{2^j}\right)\right] \cdot \left(\sqrt{1 + \frac{1}{2^j}} - 1 \right)^2 \\
        & \le \Pr_q\left[\frac{p(x)}{q(x)} \in \left[1+\frac{1}{2^{j+1}},1+\frac{1}{2^j}\right)\right] \cdot \frac{1}{2^{2(j+1)}}
    \end{align*}
    We may also lower bound the Hellinger distance using $f_\tau$ with $\tau = 1+\frac{1}{2^{j+1}}$:
    \begin{align*}
        \dhsq(f_\tau(p),f_\tau(q)) &\ge \frac{1}{2} \cdot \left( \sqrt{\Pr_p\left[\frac{p(x)}{q(x)} \ge 1+\frac{1}{2^{j+1}}\right]} - \sqrt{\Pr_q\left[\frac{p(x)}{q(x)} \ge 1+\frac{1}{2^{j+1}}\right]} \right)^2\\
        &\ge \frac{1}{2} \cdot \left( \sqrt{\left( 1 + \frac{1}{2^{j+1}} \right) \cdot \Pr_q\left[\frac{p(x)}{q(x)} \ge 1+\frac{1}{2^{j+1}}\right]} - \sqrt{\Pr_q\left[\frac{p(x)}{q(x)} \ge 1+\frac{1}{2^{j+1}}\right]} \right)^2\\
        & = \frac{1}{2} \cdot \Pr_q\left[\frac{p(x)}{q(x)} \ge 1+\frac{1}{2^{j+1}}\right] \cdot \left( \sqrt{1 + \frac{1}{2^{j+1}}} - 1 \right)^2 \\
        & \ge \frac{1}{8} \cdot \frac{1}{2^{2(j+1)}} \cdot \Pr_q\left[\frac{p(x)}{q(x)} \in \left[1+\frac{1}{2^{j+1}},1+\frac{1}{2^j}\right)\right]
    \end{align*}
    Together these imply
    \begin{equation*}
        \dhsq(f_\tau(p),f_\tau(q)) \ge \frac{1}{8} \cdot \sum_{x : \frac{p(x)}{q(x)} \in [1+\frac{1}{2^{j+1}},1+\frac{1}{2^j})} \left( \sqrt{p(x)} - \sqrt{q(x)}\right)^2 \ge \frac{1}{16 \cdot \log(4/\dhsq(p,q))} \dhsq(p,q)
    \end{equation*}
    \textit{Case 2: $I^* = [2,\infty)$. } Let us use $f_\tau$ with $\tau=2$:
    \begin{align*}
        \dhsq(f_\tau(p),f_\tau(q)) &\ge \frac{1}{2} \cdot \left(\sqrt{\Pr_p\left[\frac{p(x)}{q(x)} \ge 2\right]} -\sqrt{\Pr_q\left[\frac{p(x)}{q(x)} \ge 2\right]} \right)^2 \\ 
        &\ge \frac{1}{2} \cdot \left(1 - \sqrt{1/2}\right)^2 \cdot \Pr_p\left[\frac{p(x)}{q(x)} \ge 2\right]\\
        &= \frac{1}{2} \cdot \left(1 - \sqrt{1/2}\right)^2 \cdot \sum_{x : \frac{p(x)}{q(x)} \ge 2} p(x)\\
        & \ge \frac{1}{2} \cdot \left(1 - \sqrt{1/2}\right)^2 \cdot \sum_{x : \frac{p(x)}{q(x)} \ge 2} \left(\sqrt{p(x)} - \sqrt{q(x)}\right)^2\\
        & \ge \frac{\left(1 - \sqrt{1/2}\right)^2}{4 \cdot \log(4/\dhsq(p,q))} \dhsq(p,q) \ge \frac{1}{50 \cdot \log(4/\dhsq(p,q))} \dhsq(p,q) %
    \end{align*}
\end{proof}

\section{Proof of \Cref{cor:sqrt-converge}}\label{pf:sqrt-converge}
\begin{proof}

We first state the result of \cite{boucheron2005theory}, which will imply the desired guarantee after some minor adjustments: 
\begin{lemma}[Normalized uniform convergence; rephrased Theorem 5.1 of \cite{boucheron2005theory}]
\label{lem:vclemma}
Let $X_1, \ldots, X_n$ be i.i.d. random variables, and assume that the class $\mathcal{F}$ of $\{0, 1\}$-valued functions has VC dimension $d$. Then for any $\delta \in (0, 1)$, with probability at least $1 - \delta$, for all $f \in \mathcal{F}$, it holds that both
\begin{equation}\label{eq:unif-first}
\frac{\sum_{i=1}^n \Ex[f(X_i)] - \sum_{i=1}^n f(X_i)}{\sqrt{\sum_{i=1}^n \Ex[f(X_i)]}} \le 2 \sqrt{d \ln(2n + 1) + \ln \frac{8}{\delta}}
\end{equation}
and
\begin{equation}\label{eq:unif-second}
    \frac{\sum_{i=1}^n f(X_i) - \sum_{i=1}^n \Ex[f(X_i)]}{\sqrt{\sum_{i=1}^n f(X_i)}} \le 2 \sqrt{d \ln(2n + 1) + \ln \frac{8}{\delta}}.
\end{equation}
\end{lemma}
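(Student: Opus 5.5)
The plan is to obtain \Cref{lem:vclemma} directly from the relative-deviation inequalities of Theorem 5.1 of \cite{boucheron2005theory}, converting their normalized-average form into the sum form written here. Two further ingredients are needed: Sauer's lemma to control the growth function, and a union bound over the two one-sided statements. Write $Pf = \Ex[f(X_1)]$ and $P_n f = \frac1n\sum_{i=1}^n f(X_i)$. For a class of $\{0,1\}$-valued functions with growth function $S_{\mathcal{F}}$, Theorem 5.1 of \cite{boucheron2005theory} gives, for every $t>0$,
\begin{equation*}
\Pr\left[\sup_{f\in\mathcal{F}} \frac{Pf - P_n f}{\sqrt{Pf}} > t\right] \le 4 S_{\mathcal{F}}(2n) e^{-nt^2/4},
\end{equation*}
and the companion bound in which the numerator is $P_n f - Pf$ and the denominator is $\sqrt{P_n f}$. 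In both, a ratio with zero denominator is read as $0$. This is consistent, since the numerator is then nonpositive.

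First, in each of the two tail bounds I set the right-hand side equal to $\delta/2$. Solving for $t$ gives
\begin{equation*}
t = 2\sqrt{\frac{\ln S_{\mathcal{F}}(2n) + \ln(8/\delta)}{n}}.
\end{equation*}
By a union bound, with probability at least $1-\delta$ both one-sided statements hold simultaneously for all $f\in\mathcal{F}$.

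Second, I rescale. Since $\sum_i \Ex[f(X_i)] = nPf$ and $\sum_i f(X_i) = nP_n f$, we have
\begin{equation*}
\frac{nPf - nP_nf}{\sqrt{nPf}} = \sqrt{n}\,\frac{Pf-P_nf}{\sqrt{Pf}} \le \sqrt{n}\, t = 2\sqrt{\ln S_{\mathcal{F}}(2n) + \ln(8/\delta)}.
\end{equation*}
The second inequality is handled identically, with $P_n f$ in the denominator.

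Third, I bound the growth function with Sauer's lemma:
\begin{equation*}
S_{\mathcal{F}}(2n) \le \sum_{i=0}^{d}\binom{2n}{i} \le (2n+1)^d,
\end{equation*}
so $\ln S_{\mathcal{F}}(2n) \le d\ln(2n+1)$. Substituting this gives exactly \eqref{eq:unif-first} and \eqref{eq:unif-second}.

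The only real obstacle is bookkeeping against the source, since the argument itself is a direct restatement. I need to confirm that the cited theorem indeed provides both one-sided versions, and that its constants are $4S(2n)$ and $nt^2/4$ in the form I use. If it states only the first direction, I would prove the second from the same symmetrization argument, applied to the ghost-sample difference normalized by the empirical frequency. If its constants differ, only the absolute constants in the final bound change.
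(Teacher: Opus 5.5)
Your proposal is correct and is exactly the derivation implicit in the paper, which simply cites Theorem 5.1 of \cite{boucheron2005theory}. That theorem supplies both one-sided relative-deviation bounds, each with $\ln(4/\delta)$ at confidence $1-\delta$, so applying each at level $\delta/2$ and taking a union bound gives $\ln(8/\delta)$, and rescaling by $\sqrt{n}$ together with Sauer's bound $S_{\mathcal{F}}(2n)\le(2n+1)^d$ yields the stated form, making your fallback for the second direction unnecessary.
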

Now, we begin the adjustments:
\begin{claim}
    Under the uniform convergence event of \Cref{lem:vclemma}, it holds that for all $f \in \mathcal{F}$:
    \begin{equation*}
        \left| \sum_{i=1}^n \left(f(X_i) - \Ex[f(X_i)]\right)\right| \le 2 \sqrt{\left(d \ln(2n + 1) + \ln \frac{8}{\delta} \right) \cdot \left( \sum_{i=1}^n \Ex[f(X_i)]  \right)} + 4 \left(d \ln(2n + 1) + \ln \frac{8}{\delta}\right)
    \end{equation*}
\end{claim}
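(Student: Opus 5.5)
Write $S = \sum_{i=1}^n f(X_i)$, $E = \sum_{i=1}^n \Ex[f(X_i)]$, and $\beta = d \ln(2n+1) + \ln\frac{8}{\delta}$. The claim bounds $|S-E|$, so I will treat the two signs of $S-E$ separately, using one of the two inequalities of \Cref{lem:vclemma} for each.

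\textbf{Case $S \le E$.} Inequality \eqref{eq:unif-first} gives directly $E - S \le 2\sqrt{\beta E}$. This is already within the claimed bound. If $E=0$, then $S=0$ almost surely and there is nothing to prove.

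\textbf{Case $S > E$.} Inequality \eqref{eq:unif-second} gives $S - E \le 2\sqrt{\beta S}$. The right-hand side is in terms of $S$, so I need to convert it into a bound in terms of $E$. Set $u = \sqrt{S}$. Then
\[
u^2 - 2\sqrt{\beta}\,u - E \le 0,
\]
which is a quadratic inequality in $u$. It yields
\[
u \le \sqrt{\beta} + \sqrt{\beta + E},
\]
and hence
\[
S \le 2\beta + E + 2\sqrt{\beta(\beta+E)} \le E + 4\beta + 2\sqrt{\beta E},
\]
where the last step uses $\sqrt{\beta(\beta+E)} \le \beta + \sqrt{\beta E}$. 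Therefore $S - E \le 2\sqrt{\beta E} + 4\beta$.

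Combining the two cases gives the claimed bound on $|S-E|$.

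The only step needing any care is the quadratic inversion in the second case, and specifically tracking the constants so the final form is exactly $2\sqrt{\beta E} + 4\beta$. An alternative route is $S - E \le 2\sqrt{\beta S} \le \tfrac{S}{2} + 2\beta$, but that introduces a factor-2 loss on $E$, so I would stick with the quadratic inversion.
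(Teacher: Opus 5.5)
Your proposal is correct and follows essentially the same approach as the paper: the $E - S$ direction comes directly from \eqref{eq:unif-first}, and the $S - E$ direction comes from inverting the quadratic implied by \eqref{eq:unif-second}. The only cosmetic difference is that the paper solves the quadratic in $\Delta = S - E$ (using $\Delta^2 \le r^2(\Delta + E)$ with $r = 2\sqrt{\beta}$), while you solve it in $\sqrt{S}$; both give exactly $2\sqrt{\beta E} + 4\beta$.
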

\begin{proof}
    The direction $\sum_{i=1}^n \left(\Ex[f(X_i)] - f(X_i)\right)$ holds immediately. For the other direction, we manipulate as follows. For shorthand, let $a \triangleq \sum_{i=1}^n f(X_i)$, $b \triangleq \sum_{i=1}^n \Ex[f(X_i)]$, $\Delta \triangleq a-b$, and $r \triangleq 2 \sqrt{d \ln(2n+1) + \ln \frac{8}{\delta}}$.
    \begin{align*}
        & \frac{a-b}{\sqrt{a}} \le r \implies  \Delta \le r \sqrt{a} \implies \Delta^2 - r^2 a \le 0\\ 
        & \implies \Delta^2 - r^2 (\Delta + b) \le 0 \implies \Delta^2 - r^2 \Delta - r^2 b \le 0 \\
        & \implies \Delta \le \frac{r^2 + \sqrt{r^4 + 4r^2b}}{2} \implies \Delta \le r^2 + r \sqrt{b}  \rtag{root of the quadratic equation} %
    \end{align*}
\end{proof}

We will continue using this shorthand notation of $a,b,\Delta,r$. The proof will follow quickly via the same logic as Claim 2.22 of \cite{compton2025attainability}:

\begin{align*}
     \left| \sqrt{a} - \sqrt{b} \right| &\le \min \left(\frac{2 |a-b|}{\sqrt{b}} , \sqrt{|a-b|}\right) \rtag{for $a,b \ge 0$; first paragraph Claim 2.22 pf. \cite{compton2025attainability}}\\
     & \le \min \left(\frac{2 (r^2 + r \sqrt{b})}{\sqrt{b}} , \sqrt{r^2 + r \sqrt{b}}\right)\rtag{using $|a-b| \le r^2 + r \sqrt{b}$} \\
     & \le \min \left( \frac{2r^2}{\sqrt{b}} + 2r, r + \sqrt{r \sqrt{b}} \right) \intertext{If $b \ge r^2$ then the first term is at most $4r$. If $b \le r^2$ then the second term is at most $2r$. Hence:}
     & \le 4r %
\end{align*}

\end{proof}

\section{Hellinger data processing inequality}
We give a self-contained proof for the following preliminary: \begin{fact}[Hellinger data processing inequality]\label{fact:dpi}
    Consider two points $x,y \in \R_+^d$. Then,
    \begin{equation*}
        \sum_{i=1}^d \left(\sqrt{x_i} - \sqrt{y_i} \right)^2 \ge \left( \sqrt{\sum_{i=1}^d x_i} - \sqrt{\sum_{i=1}^d y_i} \right)^2
    \end{equation*}
\end{fact}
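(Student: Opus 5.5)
The plan is to prove the inequality by induction on $d$, so that everything reduces to the two-coordinate case. For $d=1$ the two sides coincide. Suppose the statement holds for $d-1$. Set $X' = \sum_{i=1}^{d-1} x_i$ and $Y' = \sum_{i=1}^{d-1} y_i$. The inductive hypothesis gives
\begin{equation*}
\sum_{i=1}^{d-1}\left(\sqrt{x_i}-\sqrt{y_i}\right)^2 \ge \left(\sqrt{X'}-\sqrt{Y'}\right)^2 .
\end{equation*}
It then suffices to show the $d=2$ case: for nonnegative $a_1,a_2,b_1,b_2$,
\begin{equation*}
\left(\sqrt{a_1}-\sqrt{b_1}\right)^2+\left(\sqrt{a_2}-\sqrt{b_2}\right)^2 \ge \left(\sqrt{a_1+a_2}-\sqrt{b_1+b_2}\right)^2 .
\end{equation*}
We apply it with $(a_1,b_1)=(X',Y')$ and $(a_2,b_2)=(x_d,y_d)$.

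To prove the two-term case, expand both sides. The terms $a_1+a_2+b_1+b_2$ appear on both sides and cancel. The claim becomes $\sqrt{a_1b_1}+\sqrt{a_2b_2} \le \sqrt{(a_1+a_2)(b_1+b_2)}$. This is exactly Cauchy--Schwarz applied to the vectors $(\sqrt{a_1},\sqrt{a_2})$ and $(\sqrt{b_1},\sqrt{b_2})$.

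In fact this makes the induction unnecessary. Expanding the general $d$-coordinate statement the same way, the linear terms cancel again. What remains is $\sum_i \sqrt{x_i y_i} \le \sqrt{\sum_i x_i}\sqrt{\sum_i y_i}$, which is Cauchy--Schwarz in $\R^d$ applied to $(\sqrt{x_i})_i$ and $(\sqrt{y_i})_i$. I will present this one-line direct argument and mention the induction only as intuition.

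There is no real obstacle here. The only care needed is to check that nonnegativity makes all the square roots well defined, and that no sign issue arises when expanding the squares.
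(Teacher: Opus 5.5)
Your proposal is correct and is essentially the paper's proof. The paper writes the left side as $\|\sqrt{x}-\sqrt{y}\|^2$ and applies Cauchy--Schwarz, $\langle \sqrt{x},\sqrt{y}\rangle \le \|\sqrt{x}\|\,\|\sqrt{y}\|$, to get $(\|\sqrt{x}\|-\|\sqrt{y}\|)^2$, which is your direct expansion argument; as you note, the induction detour is valid but unnecessary.
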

\begin{proof}
    For any non-negative vector $v$, let $\sqrt{v}$ denote the vector where each coordinate $i$ is the square root of the value of $v_i$.
    \begin{align*}
        \sum_{i=1}^d \left( \sqrt{x_i} - \sqrt{y_i} \right)^2 &= \| \sqrt{x} - \sqrt{y} \|^2 \\
        & \ge \| \sqrt{x}\|^2 + \|\sqrt{y}\|^2 - 2 \|\sqrt{x} \| \cdot \| \sqrt{y} \| \\
        & = ( \|\sqrt{x}\| - \| \sqrt{y}\|)^2 = \left(\sqrt{\sum_{i=1}^d x_i} - \sqrt{\sum_{i=1}^d y_i}  \right)^2 %
    \end{align*}
\end{proof}

\section{Proof of \Cref{thm:sharper-hel-min-dist}}\label{sec:sharper-misspec}
\begin{proof}
This proof will follow the same style as \Cref{thm:hel-min-dist}, after we introduce definitions corresponding to quantizing into a larger number of states. In the original distance, $\hdist$ quantizes into one state by choosing the $f,g \in \mathcal{F}$ and threshold $\tau \in \mathbb{R}$ that maximizes distance; our modified distance will choose $f,g \in \mathcal{F}$ and a collection of thresholds $0=\tau_0 < \tau_1 \le \tau_2 \le \dots \le \tau_{D-1} < \tau_D = \infty$ that maximize distance. More concretely:
\begin{equation}
    \hdist^D(p,q) \triangleq \sup_{f,g \in \mathcal{F}, \, \, 0=\tau_0 < \dots  < \tau_D = \infty} \frac{1}{2} \sum_{i=1}^D \left( \sqrt{\Pr_p\left[\frac{f(x)}{g(x)} \in [\tau_{i-1},\tau_i)\right]} - \sqrt{\Pr_q\left[\frac{f(x)}{g(x)} \in [\tau_{i-1},\tau_i)\right]}\right)^2
\end{equation}
Just like $\hdist$, the modified $\hdist^D$ satisfies $\hdist^D(f,g) \le \dhsq(f,g)$, and an approximate triangle inequality $\hdist^D(a,c) \le 2 \cdot (\hdist^D(a,b) + \hdist^D(b,c))$.

We now state the complete version of Corollary 3.4 of \cite{pensia2023communication} in this language:
\begin{theorem}[Rephrased Corollary 3.4 of \cite{pensia2023communication}]\label{thm:stronger-pensia}
    For any $p,q$ and $D \ge 2$, there exists $0=\tau_0 < \tau_1 \le \tau_2 \le \dots \le \tau_{D-1} < \tau_D = \infty$ such that the following holds:
    \begin{equation*}
        \frac{1}{2} \sum_{i=1}^D \left( \sqrt{\Pr_p\left[\frac{p(x)}{q(x)} \in [\tau_{i-1},\tau_i)\right]} - \sqrt{\Pr_q\left[\frac{p(x)}{q(x)} \in [\tau_{i-1},\tau_i)\right]}\right)^2 \ge \frac{\dhsq(p,q)}{1800 \max \left\{1, \frac{\log(4/\dhsq(p,q))}{D} \right\}}
    \end{equation*}
\end{theorem}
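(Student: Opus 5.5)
We plan to prove the statement by refining the dyadic decomposition from the self-contained proof of \Cref{thm:rdpi} in Appendix~\ref{sec:pensia}. As there, work with discrete distributions, set $N = 1/\dhsq(p,q)$, and write $h(x) = (\sqrt{p(x)}-\sqrt{q(x)})^2$. We may ignore ratios in $[N^{-1/2}\text{-neighborhood of }1]$, that is $1-\frac{1}{\sqrt N} < \frac{p(x)}{q(x)} < 1+\frac{1}{\sqrt N}$. By the same computation as in that proof, these $x$ contribute at most $\dhsq(p,q)/4$ on each side, so at least $\dhsq(p,q)$ of the mass $\sum_x h(x)$ comes from the remaining ratios. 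We then partition the remaining ratio values into dyadic bands:
\begin{itemize}
    \item above $1$: the bands $[1+2^{-(j+1)},1+2^{-j})$, together with the single band $[2,\infty)$;
    \item below $1$: symmetrically, the bands $(1-2^{-j},1-2^{-(j+1)}]$, together with $[0,1/2]$.
\end{itemize}
This gives $M \le 2\lceil\log\sqrt N\rceil+4 \lesssim \log(4/\dhsq(p,q))$ bands in total.

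\textbf{Case $D \ge$ number of bands (up to constants).} Take thresholds at all band endpoints, plus the threshold at $1$. The quantized Hellinger sum is then a sum over cells. The key local claim is that each band cell $B$ recovers a constant fraction of $\sum_{x\in B}h(x)$. The argument mirrors Case 1 of \Cref{thm:rdpi}:
\begin{itemize}
    \item For $x \in B$, the ratio $p(x)/q(x)$ lies in $[1+\eta, 1+2\eta)$ with $\eta = 2^{-(j+1)}$.
    \item So $\Pr_p[B] \ge (1+\eta)\Pr_q[B]$, which gives $(\sqrt{\Pr_p[B]}-\sqrt{\Pr_q[B]})^2 \gtrsim \eta^2 \Pr_q[B]$.
    \item Conversely, $\sum_B h \lesssim \eta^2\Pr_q[B]$.
\end{itemize}
The extreme bands are handled as in Case 2, and bands below $1$ are handled symmetrically by swapping the roles of $p$ and $q$. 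Unlike the single-threshold proof, we do not lose anything from summing complements, because each cell is its own term. This yields a constant-factor bound.

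\textbf{Case $D$ smaller than the number of bands.} Group consecutive bands into $D' \approx D/2$ blocks of roughly $M/D'$ bands each. By pigeonhole, it is not enough to keep only the heaviest bands. We also need each kept cell to retain its contribution, and merging bands destroys the per-band estimate. So instead we select the $D-1$ heaviest bands (by $h$-mass) and use thresholds at their endpoints, letting the gaps between selected bands become merged "junk" cells. Those junk cells contribute a nonnegative amount, which is fine since every summand is nonnegative. The selected bands carry at least a $\frac{D}{M}$ fraction of the total mass, and each recovers a constant fraction of its own mass. This gives the bound $\dhsq(p,q)\cdot\min(1, D/\log(4/\dhsq))/C$.

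One subtlety is that selecting a band needs two thresholds, its lower and its upper endpoint. We therefore choose about $D/2$ bands, and we may also need monotonicity $\tau_1\le\dots$, which holds after sorting the endpoints. A further point to check is that adding extra thresholds never decreases the quantized sum, by \Cref{fact:dpi}: merging cells is a data-processing step. This justifies freely refining the partition.

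The main obstacle is the per-cell lower bound for bands that are not adjacent to $1$ or $\infty$. Case 1 of \Cref{thm:rdpi} used the upper tail $\{p/q\ge\tau\}$, whose $q$-mass dominates the band's $q$-mass. Here we must lower-bound the Hellinger term of the band cell itself, and check the reverse direction, $\sqrt{1+\eta}-1 \gtrsim \eta$, with explicit constants. We expect the $1800$ to arise from multiplying the losses above: a factor $4$ from discarding the near-$1$ ratios, a factor $2$ from the two-thresholds-per-band issue, a factor of about $8$ per cell, and a factor of about $2$ from the two sides. We would track these constants loosely and set the final constant at the end.
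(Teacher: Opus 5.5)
Your route is the one the paper intends; it only says the proof ``follows similarly to \Cref{thm:rdpi}''. The ingredients are dyadic bands of the likelihood ratio on both sides of $1$, a constant-fraction recovery per band, and keeping the $\Theta(\min(D,M))$ heaviest bands so that they carry a $\gtrsim \min\{1, D/\log(4/\dhsq(p,q))\}$ fraction of the non-negligible mass. Your per-band estimates are correct.

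Your ``main obstacle'' is not one. If every $x$ in a cell $C$ has $p(x)/q(x)\ge 1+\eta$, then $\Pr_p[C]\ge(1+\eta)\Pr_q[C]$ immediately. Hence $(\sqrt{\Pr_p[C]}-\sqrt{\Pr_q[C]})^2\ge \Pr_q[C](\sqrt{1+\eta}-1)^2\ge \eta^2\Pr_q[C]/9$ for $\eta\le 1/2$. For $D\ge 3$, tracking your losses gives a constant of a few hundred, which is within $1800$.

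The concrete hole is $D=2$, which the statement includes. Your final scheme isolates each chosen band using both of its endpoints. With a single threshold you therefore cannot isolate a middle band $[1+2^{-(j+1)},1+2^{-j})$ as its own cell.

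The idea you set aside fixes this: merging a band $B$ with bands of \emph{more extreme} ratio does not destroy the per-band estimate. Suppose the cell $C\supseteq B$ contains only ratios at least the lower endpoint $1+\eta$ of $B$. Combine three facts:
\begin{itemize}
\item the display above;
\item $\Pr_q[C]\ge\Pr_q[B]$;
\item $\sum_{x\in B}(\sqrt{p(x)}-\sqrt{q(x)})^2\le \eta^2\Pr_q[B]$.
\end{itemize}
Together they show that $C$ still recovers a constant fraction of $B$'s contribution; this is exactly Case 1 of \Cref{thm:rdpi}.

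So one threshold per selected band suffices. Place it at the band's endpoint nearest $1$ (symmetrically below $1$, with $p$ and $q$ swapped), and let all near-$1$ ratios fall into the single junk cell between the two groups. This keeps $D-1$ bands for every $D\ge 2$ and also removes your factor-$2$ loss. Alternatively, cite \Cref{thm:rdpi} directly for $D=2$: its two-cell sum is exactly $\dhsq(f_\tau(p),f_\tau(q))\ge \dhsq(p,q)/(50\log(4/\dhsq(p,q)))$, which exceeds the required $\dhsq(p,q)/(900\log(4/\dhsq(p,q)))$.

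A minor slip: with your lower neighborhood $p/q>1-1/\sqrt N$, the discarded mass on that side is bounded only by $(1-\sqrt{1-1/\sqrt N})^2\le \dhsq(p,q)$, not by $\dhsq(p,q)/4$, when $\dhsq(p,q)$ is near $1$. This still leaves at least $\tfrac34\dhsq(p,q)$ of the total $2\dhsq(p,q)$. Alternatively, define that side via $q/p<1+1/\sqrt N$ for exact symmetry.
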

The proof of \Cref{thm:stronger-pensia} follows similarly to \Cref{thm:rdpi}. Our modified estimator will choose $D=\lceil \log(4n)\rceil $ and estimate:
\begin{equation}
    \hat{f}_n \triangleq \argmin_{f_\theta \in \mathcal{F}} \hdist^D(f_\theta, f_n)
\end{equation}

The desired error guarantee now follows by the same proof strategy as \Cref{thm:hel-min-dist}:
\begin{align*}
    \dhsq(f,\hat{f}_n) &\lesssim \dhsq(f,f_\theta) + \dhsq(f_\theta,\hat{f}_n) \intertext{Consider two cases. If $\dhsq(f_\theta,\hat{f}_n) \ge 1/n$, then \Cref{thm:stronger-pensia} implies $\dhsq(f_\theta,\hat{f}_n) \lesssim \hdist^D(f_\theta,\hat{f}_n)$. Otherwise, we can use the upper bound of $1/n$. Hence:}
    & \lesssim \dhsq(f,f_\theta) + \hdist^D(f_\theta,\hat{f}_n) + \frac{1}{n} \\
    & \lesssim \dhsq(f,f_\theta) + \hdist^D(f,\hat{f}_n) + \frac{1}{n} \\
    & \lesssim \dhsq(f,f_\theta) + \hdist^D(f,f_n) + \hdist^D(f_n,\hat{f}_n)  + \frac{1}{n} \\
    & \lesssim \dhsq(f,f_\theta) + \hdist^D(f,f_n) + \hdist^D(f_n,f_\theta)  + \frac{1}{n} \rtag{definition of $\hat{f}_n$} \\
    & \lesssim \dhsq(f,f_\theta) + \hdist^D(f,f_n) + \hdist^D(f,f_\theta)  + \frac{1}{n} \\
    & \lesssim \dhsq(f,f_\theta) + \hdist^D(f,f_n) + \frac{1}{n} \intertext{We now invoke \Cref{cor:sqrt-converge} on each summand of $\hdist^D(f,f_n)$. A priori, we only have a bound on the VC dimension of the ratio class, which corresponds to where the ratio exceeds a threshold, yet each summand corresponds to where the ratio is within an interval. The relevant concept class is then the intersection of $\mathcal{H}$ with the complement of $\mathcal{H}$, which has VC dimension $\lesssim d$ (see e.g. Corollary 1.1 of \cite{van2009note}).}
    & \lesssim \dhsq(f,f_\theta) + D \cdot \frac{d \log(n) + \log(1/\delta)}{n} + \frac{1}{n} \\
    & \lesssim \dhsq(f,f_\theta) + \frac{(d \log(n) + \log(2/\delta)) \log(n)}{n}  \rtag{$D = \lceil \log(4n)\rceil$}
\end{align*}
\end{proof}

\section{Statistical corollaries from new recipe}\label{app:stat-corr}
We provide the proof of \Cref{cor:stat-log}; this matches up to logarithmic factors the known rate for learning log-concave densities (see \cite{doss2016global,KimSamworth}):
\begin{proof}
    We will choose $\mathcal{F}$ as the class of $t$-piecewise linear densities. The ratio class will be the union of $\lesssim t$ intervals, and hence has VC dimension $\lesssim t$. Using \Cref{thm:sharper-hel-min-dist}:
    \begin{align*}
        & \dhsq(f,\hat{f}_n) \lesssim \inf_{f_\theta \in \mathcal{F}} \dhsq(f,f_\theta)  + \frac{(t \log(n) + \ln \frac{2}{\delta}) \log(n)}{n} \intertext{Use the piecewise linear approximation guarantee of \Cref{lemma:piece-linear}, where each mixture component is approximated by a mixture of $t/k$ pieces. This implies any mixture of $k$ log-concave densities has a $t$-piecewise linear approximation within $\lesssim \frac{k^4}{t^4}$ squared Hellinger distance: }
        &  \lesssim \inf_{f_\theta \in \mathcal{LC}_k} \dhsq(f,f_\theta)  + \frac{k^4}{t^4}  + \frac{(t \log(n) + \ln \frac{2}{\delta}) \log(n)}{n} \\
        & \lesssim \inf_{f_\theta \in \mathcal{LC}_k} \dhsq(f,f_\theta) + \frac{k^{4/5} \log^{4/5}(n)}{n^{4/5}} + \frac{\log(2/\delta)\log(n)}{n}\rtag{choosing $t =\left( \frac{k^4 n}{\log(n)} \right)^{1/5}$}
    \end{align*}
\end{proof}

We provide the proof of \Cref{cor:stat-gaussian}; this matches up to logarithmic factors the known rate for learning mixtures of Gaussians:
\begin{proof}
    We will choose $\mathcal{F}$ as the class of $3k$-piecewise degree $\lesssim \log(n)$ polynomial densities. The ratio class will be the union of $\lesssim k\log(n)$ intervals, and hence has VC dimension $\lesssim k\log(n)$. Using \Cref{thm:sharper-hel-min-dist}:
    \begin{align*}
        & \dhsq(f,\hat{f}_n) \lesssim \inf_{f_\theta \in \mathcal{F}} \dhsq(f,f_\theta)  + \frac{(k\log^2(n) + \log \frac{2}{\delta}) \log(n)}{n} \intertext{We use existing results that imply any Gaussian density is approximated within $\eps$ squared Hellinger distance by a 3-piecewise degree $\lesssim \log(1/\eps)$ polynomial (Lemma 36 of \cite{chan2014efficient}, or related discussion in Section 5.4 of \cite{timan}). Accordingly, for any mixture of $k$ Gaussians, there is a $3k$-piecewise degree $\lesssim \log(n)$ polynomial that approximates the mixture within $1/n$ squared Hellinger distance: }
        &  \lesssim \inf_{f_\theta \in \mathcal{G}_k} \dhsq(f,f_\theta)  + \frac{1}{n} + \frac{(k\log^2(n) + \log \frac{2}{\delta}) \log(n)}{n} \\
        &  \lesssim \inf_{f_\theta \in \mathcal{G}_k} \dhsq(f,f_\theta) + \frac{(k\log^2(n) + \log \frac{2}{\delta}) \log(n)}{n}
    \end{align*}
\end{proof}

The proof of \Cref{cor:stat-mix} follows immediately from \Cref{thm:sharper-hel-min-dist} and Section 8.2 of \cite{devroye2001combinatorial} (since their proof already bounds the VC dimension of the corresponding ratio class); this matches up to logarithmic factors the known rate (e.g. an in-expectation bound is given by Corollary 7 of \cite{baraud2011estimator}, and high-probability guarantees are given by $\rho$-estimation or sharp bounds up to constant factors in \cite{compton2026ratio}).

\section{Details for learning log-concave mixtures in near-linear time}\label{app:log-concave}
The proofs of our result \Cref{corr:fast-gauss} for mixtures of Gaussians, and \Cref{thm:fast-log-concave} for mixtures of log-concave densities, are slightly different because we will save some logarithmic factor for the log-concave result (via employing a sharper reverse data processing inequality). In this section, we will give the details for our log-concave result, and later in \Cref{app:piecewise-poly} we will give details for more general piecewise polynomial results (which will use many results given in this section).

First, in \Cref{subsec:bounding-hel-from-h1}, we will prove that approximately minimizing $\mathcal{H}_1$-distance (a distance corresponding to the concept class of intervals) implies an upper bound on the Hellinger distance. This will roughly follow the structure of the proof of \Cref{thm:hel-min-dist}, except non-trivial modifications will be made so that we may minimize $\mathcal{H}_1$-distance instead of some more complicated $\hdist$ metric. Interestingly, in this specific setting we will be able to leverage a tailored version of the reverse data processing inequality that will not lose a logarithmic factor.

Second, in \Cref{subsec:greedy-merge}, we will outline the greedy merging method of \cite{acharya2017sample}. Suppose you are given an oracle that computes a degree-$d$ polynomial which approximately minimizes the $\mathcal{H}_1$ distance with the empirical distribution restricted to an interval $\mathcal{I}$. Then, with this oracle, the greedy merging method of \cite{acharya2017sample} will yield a method for constructing a $2t$-piecewise degree-$d$ polynomial with $\mathcal{H}_1$ distance from the empirical distribution that is not much larger than the best $t$-piecewise degree-$d$ polynomial. After a minor modification, this will yield a near-linear time algorithm that approximately minimizes $\mathcal{H}_1$-distance. 

Third, in \Cref{subsec:1-piece-opt}, we will construct the required oracle. The procedure will follow from the machinery of  \cite{acharya2017sample} that efficiently determines an approximately optimal polynomial by leveraging fast separation oracles. Our key difficulty is to provide a new separation oracle for $\mathcal{H}_1$ distance: given a certain polynomial, we must find a way to approximately compute the $\mathcal{H}_1$ distance from the empirical samples in near-linear time.

Both these second and third parts are presented in terms of degree-$d$ polynomials because this will be helpful for further applications in \Cref{app:piecewise-poly}; although in this section we only need $d=1$. Finally, in \Cref{subsec:combine-ingreds} we combine our ingredients to conclude \Cref{thm:fast-log-concave}.

\subsection{Bounding squared Hellinger distance from $\mathcal{H}_1$ distance}\label{subsec:bounding-hel-from-h1}
Our goal will be to output a piecewise-linear estimate $\hat{f}_n$ which has small $\mathcal{H}_1$ distance from the empirical samples. We will show that such an estimate has small squared Hellinger distance with the true distribution $f$. First, let us formally define $\mathcal{H}_1$ distance:

\begin{equation*}
    \mathcal{H}_1(f,g) \triangleq \sup_{l \le r} \frac{1}{2} \left( \sqrt{\Pr_f\left[x \in [l,r] \right]} - \sqrt{\Pr_g\left[x \in [l,r] \right]} \right)^2
\end{equation*}

We will find it helpful to more generally define $\mathcal{H}_i$ distance:

\begin{equation*}
    \mathcal{H}_i(f,g) \triangleq \sup_{l_1 \le r_1 < l_2 \le r_2 < \dots < l_i \le r_i} \frac{1}{2} \sum_{j=1}^i\left( \sqrt{\Pr_f\left[x \in [l_j,r_j] \right]} - \sqrt{\Pr_g\left[x \in [l_j,r_j] \right]} \right)^2
\end{equation*}

Following the outline of \Cref{thm:hel-min-dist}, we will need results showing that log-concave mixtures are well-approximated by piecewise-linear functions (see \Cref{sec:piecewise-linear}), and a custom reverse data processing inequality that shows $\dhsq(f,g) = \Theta(1) \cdot  \mathcal{H}_k(f,g)$ for the piecewise-linear densities of interest (see \Cref{sec:tight-rdpi}). This result will leverage that log-concave densities may be well-approximated by a restricted class of piecewise-linear densities. We define this restricted class as $\lin_t$, which consists of all $t$-piecewise linear densities where the value of each line at the start of its piece is within a factor of two of its value at the end of the piece.

Given these two tools, we are ready to state the result:
\begin{lemma}\label{lemma:h1-to-err}
    Let $f$ be distribution from which we receive $n$ samples. Suppose $\hat{f}_n \in \lin_{rt}$ (for some positive constant integer $r$) is a density where $\mathcal{H}_1(\hat{f}_n,f_n) \le \alpha \cdot \inf_{f_\theta \in \lin_{2t}} \mathcal{H}_1(f_\theta,f_n) + \beta$. Then with probability $1-\delta$,
    \begin{equation*}
        \dhsq(f,\hat{f}_n) \lesssim  \alpha \cdot \inf_{f_\theta \in \lin_t} \dhsq(f,f_\theta) +  \alpha t \cdot \frac{\log(n) + \log(2/\delta)}{n} +  \beta
    \end{equation*}
\end{lemma}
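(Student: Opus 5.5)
We follow the template of the proof of \Cref{thm:hel-min-dist}, with two changes: the reverse data processing step uses the tailored inequality of \Cref{sec:tight-rdpi}, and the witnesses are unions of intervals rather than arbitrary ratio sets. Fix any $f_\theta \in \lin_t$. By the approximate triangle inequality for $\dhsq$,
\[
\dhsq(f,\hat f_n) \le 2\left(\dhsq(f,f_\theta)+\dhsq(f_\theta,\hat f_n)\right).
\]
The difference of $f_\theta \in \lin_t$ and $\hat f_n\in\lin_{rt}$ is piecewise linear with $O(t)$ pieces, so the ratio $f_\theta/\hat f_n$ crosses any threshold only $O(t)$ times. The tailored reverse DPI of \Cref{sec:tight-rdpi} gives $\dhsq(f_\theta,\hat f_n)\lesssim \mathcal{H}_{k}(f_\theta,\hat f_n)$ with $k=O(t)$ and no logarithmic loss; here the factor-two condition defining $\lin$ is what lets each region be taken as an interval. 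Then the approximate triangle inequality for $\mathcal{H}_k$ gives
\[
\mathcal{H}_k(f_\theta,\hat f_n)\lesssim \mathcal{H}_k(f_\theta,f)+\mathcal{H}_k(f,f_n)+\mathcal{H}_k(f_n,\hat f_n).
\]
The first term is at most $\dhsq(f,f_\theta)$ by the data processing inequality (\Cref{fact:dpi}).

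For $\mathcal{H}_k(f,f_n)$, apply \Cref{cor:sqrt-converge} to the class of intervals (VC dimension $2$). With probability $1-\delta$, every interval $I$ satisfies
\[
\tfrac12\left(\sqrt{\Pr_f[I]}-\sqrt{\Pr_{f_n}[I]}\right)^2\lesssim \frac{\log n+\log(2/\delta)}{n}.
\]
Summing over $k$ intervals gives $\mathcal{H}_k(f,f_n)\lesssim t\,(\log n+\log(2/\delta))/n$. This is the source of the additive term without an extra $\log n$: we need only the intervals, not a rich ratio class.

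The main obstacle is the term $\mathcal{H}_k(f_n,\hat f_n)$. The hypothesis controls only $\mathcal{H}_1$, and naively $\mathcal{H}_k\le k\,\mathcal{H}_1$ loses a factor of $t$ on the misspecification term. The intended fix is the rescaling post-processing step: $\hat f_n$ agrees with $f_n$ in mass on each of its $rt$ pieces. For each of the $k$ intervals in the $\mathcal{H}_k$ witness, we split its contribution into the part inside the pieces of $\hat f_n$ that it partially overlaps and the part on the pieces it fully covers. The fully covered pieces contribute zero discrepancy, since the masses agree exactly. Merging these together with \Cref{fact:dpi} reduces the total to a sum over $O(t)$ boundary fragments.

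Each boundary fragment is an interval, so its contribution is at most $\mathcal{H}_1(\hat f_n,f_n)$. This still gives a factor of $t$, so we need to charge more carefully. Since $\mathcal{H}_1(\hat f_n,f_n)\le \alpha\,\mathcal{H}_1(f_\theta',f_n)+\beta$ for a suitable $f_\theta'\in\lin_{2t}$ (for example, the rescaled $f_\theta$), the plan is to pass back to $f$ and pay $\alpha\,\dhsq(f,f_\theta)+\alpha\,\mathcal{H}_1(f,f_n)$. Here the key point is that $\mathcal{H}_1$ of a sum of pieces is controlled by the Hellinger distance restricted to those pieces, rather than $t$ copies of the global quantity. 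I expect this accounting, which requires choosing the comparison function in $\lin_{2t}$ (hence the $2t$) so that it shares the breakpoints, to be the delicate step.

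Collecting the terms and taking the infimum over $f_\theta$ yields the stated bound.
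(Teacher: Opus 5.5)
Your skeleton (the triangle inequality, \Cref{lemma:custom-rdpi} to get $\dhsq(f_\theta,\hat f_n)\lesssim\mathcal{H}_{O(t)}(f_\theta,\hat f_n)$ with no logarithmic loss, the approximate triangle inequality, and uniform convergence over intervals for $\mathcal{H}_{O(t)}(f,f_n)$) matches the paper. However, the step you leave as ``delicate'' is exactly the missing idea, and the route you sketch toward it does not work. First, the lemma does not assume that $\hat f_n$ matches $f_n$ in mass on each of its pieces. That rescaling lives in \Cref{cor:final-h1-opt}, where it is used to establish the global $\mathcal{H}_1$ hypothesis. It also buys nothing here: $\mathcal{H}_k(f_n,\hat f_n)\le k\,\mathcal{H}_1(f_n,\hat f_n)$ holds with no structure on $\hat f_n$. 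Since the hypothesis only controls a supremum over single intervals, the factor $t$ in front of $\mathcal{H}_1(f_n,\hat f_n)$ cannot be removed on the $\hat f_n$ side. It has to be paid for by exhibiting a comparator in $\lin_{2t}$ whose $\mathcal{H}_1$ distance to $f$ is smaller than the misspecification by a factor of $t$.

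That comparator is the content of \Cref{lemma:tilde-f}. Starting from $f_\theta\in\lin_t$, insert $t-1$ breakpoints at the $t$-quantiles of the cumulative Hellinger contribution $\frac{1}{2}(\sqrt f-\sqrt{f_\theta})^2$, so that each piece carries at most $\dhsq(f,f_\theta)/t$. Then rescale each piece so its mass equals that of $f$ on that piece. \Cref{claim:rescale-ok} shows the rescaling costs only a constant factor per piece, and subdividing or rescaling a line preserves the factor-two endpoint condition, so $\tilde f_\theta\in\lin_{2t-1}\subseteq\lin_{2t}$. This is where the $2t$ comes from, not from sharing breakpoints with $\hat f_n$. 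Your ``fully covered pieces contribute zero'' argument via \Cref{fact:dpi} should then be applied to the pair $(f,\tilde f_\theta)$ rather than to $(f_n,\hat f_n)$. Any interval reduces to at most two single-piece fragments, which gives $\mathcal{H}_1(f,\tilde f_\theta)\lesssim\dhsq(f,f_\theta)/t$, and hence $t\,\mathcal{H}_1(f_n,\hat f_n)\lesssim \alpha t\,\mathcal{H}_1(f,f_n)+\alpha\,\dhsq(f,f_\theta)+t\beta$. A merely rescaled $f_\theta$, as you suggest, fails: all the misspecification can sit in a single piece, so its $\mathcal{H}_1$ distance to $f$ can be of order $\dhsq(f,f_\theta)$, and you end with $\alpha t\,\dhsq(f,f_\theta)$. (Strictly, this step produces $t\beta$ rather than $\beta$, in the paper's proof as well. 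This is harmless, since $\beta=O(1/n)$ where the lemma is applied.)
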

\begin{proof}
    Consider any $f_\theta \in \lin_t$. It holds that
    \begin{align*}
        \dhsq(f,\hat{f}_n) &\lesssim  \dhsq(f,f_\theta) + \dhsq(f_\theta,\hat{f}_n) \intertext{Observe how for any two densities in $\lin_{rt}$, the domain can be decomposed into $2rt$ intervals satisfying the conditions of our custom reverse data processing result of \Cref{lemma:custom-rdpi}. This means that for each of the $\le 2rt$ pieces where $f_\theta$ and $\hat{f}_n$ are each one line, there is an interval that captures a constant-fraction of the squared Hellinger distance over that piece. This yields:}
        & \lesssim  \dhsq(f,f_\theta) + \mathcal{H}_{2rt}(f_\theta,\hat{f}_n) \\
        & \lesssim  \dhsq(f,f_\theta) + \mathcal{H}_t(f_\theta,\hat{f}_n) \rtag{via $\mathcal{H}_{2rt}(a,b) \le 2r \mathcal{H}_t(a,b)$} \\
        & \lesssim  \dhsq(f,f_\theta) + \mathcal{H}_t(f,\hat{f}_n) \rtag{via $\mathcal{H}_t(a,b) \le \dhsq(a,b)$} \\
        & \lesssim \dhsq(f,f_\theta) + \mathcal{H}_t(f,f_n)  + \mathcal{H}_t(f_n,\hat{f}_n) \\
        & \lesssim \dhsq(f,f_\theta) + t \mathcal{H}_1(f,f_n)  + t \mathcal{H}_1(f_n,\hat{f}_n) \rtag{via $\mathcal{H}_t(a,b) \le t \mathcal{H}_1(a,b)$} \numberthis \label{eq:intermediate-lin}
    \end{align*}
    At this point, we would like to design a modification of $f_\theta$ where each piece of the modified version has the same integral as $f$ over its domain, and the Hellinger distance contribution of each piece is roughly evenly-split. Later, this modification will be helpful because it will have small $\mathcal{H}_1$ distance from $f$.

    \begin{lemma}\label{lemma:tilde-f}
        Consider any two non-negative functions $p,q$, where $q$ is a $k$-piecewise-$\mathcal{F}$ function, and $\mathcal{F}$  is a family closed under rescaling. Then, there is a $(2k-1)$-piecewise-$\mathcal{F}$ function $\tilde{q}$, where $\dhsq(p,\tilde{q}) \lesssim \dhsq(p,q)$,  $\max_{\textrm{piece } \mathcal{I} \textrm{ of $\tilde{q}$}} \dhsq(p_\mathcal{I},\tilde{q}_\mathcal{I}) \lesssim \frac{\dhsq(p,q)}{k}$, and the integral of $p$ and $\tilde{q}$ match over every piece of $\tilde{q}$.
    \end{lemma}
    \begin{proof}
        We will construct $\tilde{q}$ by modifying $q$. Consider adding $k-1$ breakpoints at the $k$-quantiles of $\dhsq(p,q)$: meaning the squared Hellinger distance contribution from each piece is now at most $\frac{1}{k}$ fraction of the original Hellinger distance $\dhsq(p,q)$. Then, for each of these pieces in $\tilde{q}$, we will rescale the piece such that its integral is exactly the same as the integral of $p$ over the corresponding interval. This is our $\tilde{q}$: all that remains is to show $\dhsq(p,\tilde{q})$ is sufficiently small. We will do so by proving the squared Hellinger distance between $p$ and $\tilde{q}$ over any piece of $\tilde{q}$, is not much larger than the squared Hellinger distance between $p$ and $q$ in that interval.

        \begin{claim}\label{claim:rescale-ok}
            Given any two non-negative functions $f,g$ with total mass $w_f,w_g > 0$, then $\dhsq(f,\frac{w_f}{w_g} g) \lesssim \dhsq(f,g)$.
        \end{claim}
        \begin{proof}
            Let us separate into two cases: (i) the ratio $\max(\frac{w_f}{w_g},\frac{w_g}{w_f}) \ge 1.5$, and (ii) otherwise.
            
            \underline{Case (i): the ratio $\max(\frac{w_f}{w_g},\frac{w_g}{w_f}) \ge 1.5$.} First, we provide an upper bound for $\dhsq(f,\frac{w_f}{w_g}g)$:
            \begin{equation*}
                \dhsq(f, \frac{w_f}{w_g} g) \lesssim \tv (f, \frac{w_f}{w_g} g) \lesssim w_f
            \end{equation*}
            Second, we provide a lower bound for $\dhsq(f,g)$. Without loss of generality, let $A$ be the function with larger total mass, and $B$ be the other function. Define $S$ as the subset of the domain where $A(x) \ge 1.1 B(x)$. It must be the case that $\int_S A(x) \ge 0.1 \cdot w_A$. Otherwise $w_B \ge \frac{0.9 w_A}{1.1}$ which violates our ratio casework. Using this: 
            \begin{equation*}
                \dhsq(f,g) = \dhsq(A,B) \gtrsim  \int_S \left(\sqrt{A(x)} - \sqrt{B(x)} \right)^2 \gtrsim \int_S A(x) \gtrsim w_A = \max(w_f,w_g)
            \end{equation*}

            Combining our upper and lower bounds, we immediately conclude $\dhsq(f, \frac{w_f}{w_g} g) \lesssim \dhsq(f,g)$.

            \underline{Case (ii): the ratio $\max(\frac{w_f}{w_g},\frac{w_g}{w_f}) < 1.5$.} In this case, let us say $\frac{w_f}{w_g} = 1+r$ for some $|r| \le \frac{1}{2}$.
            \begin{align*}
                \dhsq(f,\frac{w_f}{w_g} g) &= \frac{1}{2} \int \left( \sqrt{f(x)} - \sqrt{(1+r) g(x)} \right)^2 \\
                &\lesssim \int \left( \sqrt{f(x)} - \sqrt{g(x)} \right)^2  + \int \left( \sqrt{(1+r) g(x)} - \sqrt{ g(x)} \right)^2 \\ 
                & \lesssim \dhsq(f,g) + \left(\sqrt{(1+r) \cdot w_g} - \sqrt{w_g} \right)^2 \\
                & = \dhsq(f,g) + \left(\sqrt{w_f} - \sqrt{w_g} \right)^2 \\
                & \lesssim \dhsq(f,g) \rtag{via \Cref{fact:dpi}} %
            \end{align*}
        \end{proof}

        We may use this to complete the proof of our lemma. For any interval $\mathcal{I}$ corresponding to a piece of $\tilde{q}$, we know $\dhsq(p_\mathcal{I},q_\mathcal{I}) \le \frac{\dhsq(p,q)}{k}$. Combining with \Cref{claim:rescale-ok}, we know 
        \begin{equation*}
            \dhsq(p_\mathcal{I},\tilde{q}_\mathcal{I}) \lesssim \dhsq(p_\mathcal{I},q_\mathcal{I}) \le \frac{\dhsq(p,q)}{k}%
        \end{equation*}
    \end{proof}

    Let's resume from \Cref{eq:intermediate-lin} with this modified density $\tilde{f_\theta}$ in hand. By definition, we have a bound for the distance $\mathcal{H}_1(f_n,\hat{f}_n)$ in terms of any other density in $\lin_{2t}$, including $\tilde{f_\theta}$:
    \begin{align*}
        & \dhsq(f,f_\theta) + t \mathcal{H}_1(f,f_n)  + t \mathcal{H}_1(f_n,\hat{f}_n)  \lesssim \dhsq(f,f_\theta) + t \mathcal{H}_1(f,f_n)  + \alpha t \mathcal{H}_1(f_n,\tilde{f_\theta}) + \beta\\
        & \lesssim   \dhsq(f,f_\theta) +  \alpha t \mathcal{H}_1(f,f_n)  +  \alpha t \mathcal{H}_1(f,\tilde{f_\theta})+  \beta \intertext{Leveraging \Cref{fact:dpi} and how the integral of $f$  and $\tilde{f_\theta}$ is the same for any piece of $\tilde{f_\theta}$, we observe the distance $\mathcal{H}_1(f,\tilde{f_\theta})$ is at most a constant factor larger than the same $\mathcal{H}_1$ distance when we restrict to intervals within one piece of $\tilde{f_\theta}$. This quantity is at most the maximum squared Hellinger distance between $f$ and $\tilde{f_\theta}$ over one piece. By the properties of $\tilde{f_\theta}$ given by \Cref{lemma:tilde-f}, this implies a bound of $\mathcal{H}_1(f,\tilde{f_\theta}) \lesssim \frac{1}{t} \cdot \dhsq(f,f_\theta)$:}
        & \lesssim   \alpha \dhsq(f,f_\theta) +  \alpha t \mathcal{H}_1(f,f_n)+  \beta \\
        & \lesssim   \alpha \dhsq(f,f_\theta) +  \alpha t \cdot \frac{\log(n) + \log(2/\delta)}{n} +  \beta
    \end{align*}
    The last line holds with probability $1-\delta$ from the uniform convergence guarantee \Cref{cor:sqrt-converge}.
\end{proof}

\subsubsection{Piecewise-linear approximation}\label{sec:piecewise-linear}

We now present the required piecewise-linear approximation result for log-concave densities. This result follows from the proof of Theorem 12 of \cite{diakonikolas2016efficient}:
\begin{lemma}[Piecewise-linear approximation for log-concave; follows from \cite{diakonikolas2016efficient}]\label{lemma:piece-linear} Let $f  : I \rightarrow \mathbb{R}_+$ be any log-concave density, where $I = [a,b]$ is an arbitrary (not necessarily finite) interval. For any $\eps \in (0,\nicefrac{1}{2})$, there exists a $t$-piecewise linear density $\tilde{f}$ where $t = O(1/{\eps^{1/4}})$ and $\dhsq(f,\tilde{f}) \le \eps$. Moreover, for each piece of $\tilde{f}$, the values of $\tilde{f}(x)$ are within a factor of $2$ at the start/end of the piece.
\end{lemma}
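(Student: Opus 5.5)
Our plan is to take the piecewise-linear approximation of \cite{diakonikolas2016efficient} (proof of their Theorem 12) and check that their construction already gives a Hellinger guarantee and the factor-of-two property. Recall the shape of that argument. A log-concave density $f$ on $I$ is unimodal, and $\log f$ is concave. The domain is cut into $O(\eps^{-1/4})$ intervals on which $f$ changes by at most a constant factor. On each such interval we approximate $f$ by a line, either the chord or the tangent. The error is controlled by the curvature of $\log f$, which is small over the interval in an aggregate sense.

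\textbf{Step 1: tails.} Log-concave densities have exponentially decaying tails. We will therefore restrict to the central region where $f \ge \eps \cdot \max f$, roughly. The two tail regions have mass $O(\eps)$. In each tail we either put the zero piece or a few pieces obtained by the same geometric slicing. Any region of mass $m$ contributes at most $m$ to the squared Hellinger distance, so the tails cost $O(\eps)$.

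\textbf{Step 2: geometric slicing of the range.} On each monotone side of the mode, we cut at the points where $f$ crosses $M/2^j$, with $M = \max f$. Inside each such level band we subdivide further into $O(\eps^{-1/4})$ total pieces, allocated as in \cite{diakonikolas2016efficient}. Every resulting piece then has $f$ varying by at most a factor of $2$. On a piece $[u,v]$ we take $\tilde f$ to be the chord of $f$. Its endpoint values are $f(u)$ and $f(v)$, which are within a factor of $2$, giving the ``moreover'' clause. At the end we renormalize to make $\tilde f$ a density. The total mass changes by at most $O(\sqrt{\eps})$ in relative terms, and \Cref{claim:rescale-ok} shows the rescaling costs only a constant factor in Hellinger.

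\textbf{Step 3: Hellinger error per piece.} On a piece where $f/\tilde f \in [1/2,2]$ (chord and $f$ are both between the endpoint values, up to concavity effects), we have $(\sqrt f-\sqrt{\tilde f})^2 \asymp (f-\tilde f)^2/f$. So the squared Hellinger contribution is a $\chi^2$-type quantity, which is in turn bounded by $\int_{\text{piece}} f \cdot \bigl(\log f - \log\tilde f\bigr)^2$. Because $\log f$ is concave and varies by at most $\log 2$ on the piece, the chord error in log-space is bounded by the ``bend'' of $\log f$: the drop in slope of $\log f$ across the piece times its length. Write $\kappa_i$ for this dimensionless bend on piece $i$. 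The contribution is then at most $\Pr_f[\text{piece } i]\cdot \kappa_i^2$, with $\kappa_i^2$ in place of the $\kappa_i$ that arises in the TV analysis. The squaring is what takes the count from $\eps^{-1/2}$ pieces down to $\eps^{-1/4}$. Summing, and using that the total bend across each factor-$2$ level band is $O(1)$ by concavity, a balanced allocation of breakpoints (splitting until $\Pr_f[\text{piece}]\,\kappa^2 \le \eps^{5/4}$, say) yields $O(\eps^{-1/4})$ pieces with total error $O(\eps)$. The number of levels, $O(\log(1/\eps))$, is absorbed because pieces needed at low levels decay geometrically in mass.

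\textbf{Main obstacle.} The delicate step is the counting in Step 3. It must get exactly $\eps^{-1/4}$ pieces rather than $\eps^{-1/4}\log(1/\eps)$. That requires exploiting that levels far from the mode carry geometrically less mass, so they need only $O(1)$ pieces each at a decaying cost, rather than a uniform allocation. If the direct count fails, we would fall back on citing the exact bookkeeping of \cite{diakonikolas2016efficient}, whose $\chi^2$/Hellinger-style error bound per piece is what we are re-deriving, while verifying that their pieces are cut at factor-$2$ level sets.
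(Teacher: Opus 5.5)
\textbf{There is a genuine gap in Step 3.} You assert that the log-space error of the chord of $f$ on a piece is controlled by the bend $\kappa_i$ of $\log f$. That is false, because the chord is linear in $f$, not in $\log f$. Write $f=e^{\phi}$. Then $f''=f\,(\phi''+(\phi')^2)$, so the relative chord error on a piece is of order $\kappa_i+(\Delta_i\phi)^2$, where $\Delta_i\phi$ is the variation of $\log f$ across the piece. On a factor-$2$ piece the second term is about $(\ln 2)^2$, a constant, however small the bend.

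A concrete failure is $f(x)=e^{-x}$ on $[0,\infty)$. Here $\log f$ is linear, so every $\kappa_i=0$ and your splitting rule never refines the factor-$2$ bands. Yet on a band $[u,u+\ln 2]$ the chord exceeds $f$ at the midpoint by the factor $\frac{3/4}{1/\sqrt 2}\approx 1.06$. By self-similarity of the exponential, each band contributes a fixed fraction of $\Pr_f[\text{band}]$ to $\dhsq$. The total error is therefore an absolute constant, not $O(\eps)$. Renormalizing does not help, since the defect is in the shape on each band, not in the total mass. Separately, the counting claim that a band's total bend is $O(1)$ is unjustified as stated: the slope of $\log f$ can drop sharply inside a single band.

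\textbf{What is missing.} Pieces must also be short enough that $f$ varies by only a factor $1+O(\sqrt{\eta})$, where $\eta$ is the target relative accuracy. This is exactly the $O(((x-y)\alpha_0)^2)$ term in the expansion in Lemma 14 of \cite{diakonikolas2016efficient}. Once pieces are this short, using a chord versus the tangent-type line $g_i(x)=f(y)+f(y)(x-y)\alpha_0$ is immaterial.

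\textbf{How the paper proceeds.} It uses that lemma, after fixing a typo, to get the pointwise multiplicative bound $|f-g_i|\lesssim \eps f$. This gives $\dhsq\lesssim \eps^2\|f\|_1$ on each region. It then reruns the allocation of \cite{diakonikolas2016efficient}: on the region $I_m$, of mass $O(2^{-cm})$, it uses accuracy $\eps 2^{cm/4}$ with $O(\eps^{-1/2}2^{-cm/8})$ pieces. That totals $O(\eps^{-1/2})$ pieces and $O(\eps^2)$ error, and the paper finally substitutes $\sqrt{\eps}$ for $\eps$.

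Several parts of your proposal are fine: the high-level idea that squaring the relative error turns $\eps^{-1/2}$ into $\eps^{-1/4}$, the tail truncation at $f\ge \eps\max f$, the factor-$2$ endpoint property of chords, and the renormalization via \Cref{claim:rescale-ok}. Your fallback, however, does not work as stated. \cite{diakonikolas2016efficient} proves only $L_1$ bounds, not a ``$\chi^2$/Hellinger-style'' per-piece bound. Extracting the pointwise relative-error form from their Lemma 14 is precisely the step the paper's proof supplies.
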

\begin{proof}
    The proof will follow almost exactly the same analysis given in Theorem 12 of \cite{diakonikolas2016efficient}. We will assume familiarity with this proof, and now discuss the small modifications we require. We will aim to modify their analysis to get the guarantee of $O(\eps^2)$ squared Hellinger distance with $t = O(1/{\sqrt{\eps}})$ pieces. This would immediately imply our desired guarantee.

    First, we will modify the guarantee of their Lemma 14 such that $\dhsq(f,g) \lesssim \eps^2 \| f \|_1$. After modifying a small typo in the final equation of their proof of Lemma 14, we may conclude $|f(x) - g_i(x) | \lesssim \eps \cdot f(x)$. We provide the modified equation:
    \begin{align*}
        &f(x) = (1 + O(\eps) ) f(y) (1 + (x-y) \alpha_0 + O((x-y) \alpha_0)^2) \\
        &= (1 + O(\eps) ) f(y) (1 + (x-y) \alpha_0 + O(\eps)) = f(y) + f(y) \cdot (x-y) \cdot \alpha_0 + f(y) \cdot O(\eps)
    \end{align*}

    Hence, with $g_i(x) = f(y) + f(y) \cdot (x-y) \alpha_0$, we surmise $|g_i(x) - f(x)| \lesssim f(x) \cdot \eps$. This immediately implies $\dhsq(f,g) \lesssim \eps^2 \|f\|_1$.

    Their proof of Proposition 15 remains unchanged. 

    Finally, how they invoke the modified Lemma 14 and Proposition 15 is almost exactly the same as what we require. \textit{We now repeat exactly their proof, only modifying parameters and replacing $L_1$ error bounds with the analogous squared Hellinger error bounds.} The remaining proof replaces their proof section starting with ``for $m=1,\dots,2\log(1/\eps)/c$, we use Lemma 14 ...''.
    
    For $m=1,\ldots,2\log(1/\eps)/c$, we use Lemma 12 to approximate $f$ in $I_m$ by a piecewise linear function $g_m$ so that $g_m$ has at most $O(\eps^{-1/2}2^{-cm/8})$ pieces and so that the squared Hellinger distance between $f$ and $g_m$ on $I_m$ is at most $f(I_m)O(\eps^2 2^{cm/2})= O(\eps^2 2^{-cm/2}).$ Let $g$ be the piecewise linear function that is $g_m$ on $I_m$ for $m\leq c\log(1/\eps)/2$, and $0$ elsewhere. $g$ is piecewise linear on
    $$
    \sum_{m=1}^{2\log(1/\eps)/c} O(\eps^{-1/2}2^{-cm/8}) = O(\eps^{-1/2})
    $$
    intervals.
    
    Furthermore the squared Hellinger error between $f$ and $g$ on the $I_m$ with $m\leq 2\log(1/\eps)/c$ is at most
    $$
    \sum_{m=1}^{2\log(1/\eps)/c} O(\eps^2 2^{-cm/2}) = O(\eps^2 ).
    $$
    
    The squared Hellinger error from other intervals is at most
    $$
    \sum_{m=2\log(1/\eps)/c}^\infty O(f(I_m)) = \sum_{m=2\log(1/\eps)/c}^\infty O(2^{-cm}) = O(\eps^2).
    $$
    Therefore, $\dhsq(f,g)=O(\eps^2)$. 
    
    We finally remark that each piece of the density satisfies our bounded ratio condition between the start/end by their construction.
\end{proof}

\subsubsection{Tighter reverse data processing inequality}\label{sec:tight-rdpi}

\begin{lemma}\label{lemma:custom-rdpi}
    Consider any two densities $p,q$ that are linear densities supported on (without loss of generality) $[0,1]$. Further assume a ratio condition between the endpoints: $\max(\frac{p(0)}{p(1)},\frac{p(1)}{p(0)},\frac{q(0)}{q(1)},\frac{q(1)}{q(0)}) \le 2$. Then, there exists some interval $\mathcal{I} \subseteq [0,1]$ where
    \begin{equation*}
        \left( \sqrt{\int_\mathcal{I} p(x) } - \sqrt{\int_\mathcal{I} q(x)} \right)^2 \gtrsim \dhsq(p,q)
    \end{equation*}
\end{lemma}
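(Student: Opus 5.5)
Let $p(x)$ and $q(x)$ be linear on $[0,1]$, with each endpoint ratio at most $2$. Then $p,q$ take values in $[m_p,2m_p]$ and $[m_q,2m_q]$ respectively, where $m_p,m_q$ are the minima. Since both integrate to at most $1$ (they need not be normalized on the piece), these scales are comparable to the masses $P=\int_0^1 p$ and $Q=\int_0^1 q$: we have $P\in[m_p,2m_p]$ and similarly for $Q$. I split into two cases according to whether the masses are far apart.

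\textbf{Case 1: $\max(P/Q,Q/P)\ge C$ for a large constant $C$.} Take $\mathcal{I}=[0,1]$. Say $P\ge CQ$. Then $(\sqrt P-\sqrt Q)^2\ge(1-1/\sqrt C)^2P\gtrsim P+Q$. On the other hand $\dhsq(p,q)\le \tfrac12\int(p+q)=\tfrac12(P+Q)$. This case is done.

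\textbf{Case 2: $P,Q$ within a factor $C$.} Then $p/q$ is bounded between absolute constants on $[0,1]$, so $(\sqrt p-\sqrt q)^2=(p-q)^2/(\sqrt p+\sqrt q)^2\asymp (p-q)^2/P$, since $p,q\asymp P$ pointwise. Hence $\dhsq(p,q)\asymp \frac1P\int_0^1 (p-q)^2$. Similarly, for any interval, $(\sqrt{p(\mathcal I)}-\sqrt{q(\mathcal I)})^2\asymp (p(\mathcal I)-q(\mathcal I))^2/(P|\mathcal I|)$, because $p(\mathcal I),q(\mathcal I)\asymp P|\mathcal I|$. So it suffices to find $\mathcal I$ with
\[
\frac{\left(\int_{\mathcal I}(p-q)\right)^2}{|\mathcal I|}\gtrsim \int_0^1 (p-q)^2 .
\]
Now $h=p-q$ is an affine function $h(x)=a+bx$, and the task reduces to a one-dimensional lemma: for any affine $h$ on $[0,1]$, some interval $\mathcal I$ satisfies $(\int_{\mathcal I}h)^2/|\mathcal I|\gtrsim\int_0^1 h^2$.

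To prove this lemma, I would let $M=\max_{[0,1]}|h|$, attained at an endpoint, say $x=1$ with $h(1)=M>0$. Note $\int h^2\le M^2$. On $[3/4,1]$, because $h$ is affine and $h(1)=M$, either $h\ge M/2$ throughout, or the slope is large. In the first situation, $\mathcal I=[3/4,1]$ gives $(\int h)^2/|\mathcal I|\ge (M/8)^2\cdot 4=M^2/16$, which suffices. Otherwise, let $z$ be the point where $h=M/2$, so $z\in[3/4,1)$. Then $h\ge M/2$ on $[z,1]$, giving $(\int h)^2/|\mathcal I|\ge M^2(1-z)/4$. The slope is $M/(2(1-z))$, and since $|h|\le M$ the function $|h|$ drops at this rate; so $\int h^2\lesssim M^2(1-z)$, because $|h|$ exceeds $M\varepsilon$ only on a set of length $O(1-z)$. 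Thus $\mathcal I=[z,1]$ works.

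The main obstacle is this affine-function lemma together with the uniform comparison constants in Case 2. Once the pointwise comparability of $p,q$ to $P$ is established, the affine lemma is elementary. I expect the calculation there to be the only fiddly step.
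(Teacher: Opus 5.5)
Your proof is correct, and it takes a genuinely different route from the paper. The paper first reduces to $p\ge q$ by splitting at a crossing point. It then works with the likelihood ratio $r=p/q$: since $p,q$ are affine, $r'=(ad-bc)/q^2$ has constant sign, and the endpoint condition on $q$ makes $r'$ vary by at most a factor of $4$. The paper splits at the point $x^*$ where $r$ reaches $2$. It takes $\mathcal I=[x^*/2,x^*]$ when the region $r<2$ carries most of the distance, and $\mathcal I=[x^*,1]$ otherwise, mirroring the two-regime (ratio below/above $2$) structure of the general reverse data processing proof.

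You instead split on the ratio of total masses. When $P,Q$ are far apart, the whole interval gives $(\sqrt P-\sqrt Q)^2\gtrsim P+Q\ge 2\dhsq(p,q)$. When they are comparable, pointwise comparability $p,q\asymp P$ lets you linearize both sides: $\dhsq(p,q)\asymp P^{-1}\int_0^1(p-q)^2$ and $(\sqrt{p(\mathcal I)}-\sqrt{q(\mathcal I)})^2\asymp (\int_{\mathcal I}(p-q))^2/(P|\mathcal I|)$. This reduces everything to an interval-average-versus-$L^2$ inequality for $h=p-q$.

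Your route avoids the crossing split and the derivative bookkeeping, and it isolates exactly where linearity is used: only in that final one-dimensional inequality. So it would apply verbatim to any pieces on which $p,q$ are pointwise within constant factors of their averages. For example, bounded-ratio degree-$d$ pieces lose only a factor polynomial in $d$ via Markov's inequality. The paper's route keeps the likelihood-ratio thresholding viewpoint, which makes visible how the $\log(4/\dhsq(p,q))$ dyadic ratio scales of the general argument collapse into a single interval.

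One simplification for your affine lemma: the second subcase is vacuous. The conditions $h(1)=M$ and $|h(0)|\le M$ force the slope to be at most $2M$, so $h\ge M/2$ on $[3/4,1]$ always, and $\mathcal I=[3/4,1]$ gives $(\int_{\mathcal I}h)^2/|\mathcal I|\ge M^2/16\ge\frac1{16}\int_0^1h^2$ directly. Your stated reason in that subcase (about where $|h|$ exceeds $M\eps$) is also not the right one. What actually holds is that $|h|\le M$ confines the whole domain to length at most $2M/\text{slope}=4(1-z)$.
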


Note how here we are proving a reverse data processing inequality that crucially only loses a constant fraction of squared Hellinger distance, instead of a $\log(2/\dhsq(p,q))$ term. 

\begin{proof}

Without loss of generality, suppose $p(x) \ge q(x)$ in the support. (If $p,q$ intersect, we can break the density into two intervals, and focus on the interval with at least half of the original squared Hellinger distance.) Let $r(x) = \frac{p(x)}{q(x)}$. From earlier, we know $r(x) \ge 1$.

If $p(x) = ax + b$ and $q(x) = cx+d$, then observe how $r'(x) = \frac{ad-bc}{q(x)^2}$ is monotonic. Without loss of generality, we flip the density such that $r(x)$ is non-decreasing. From our ratio condition between the endpoints, we know:

\begin{equation*}
    \frac{\max_{x \in [0,1]} r'(x)}{\min_{x \in [0,1]} r'(x)} =\frac{\max_{x \in [0,1]} q(x)^2}{\min_{x \in [0,1]} q(x)^2} \le 4
\end{equation*}

Then, for ease of notation we will denote $v \triangleq \max_{x \in [0,1]} r'(x)$ and surmise that for all $x \in [0,1]$ we know $v/4 \le r'(x) \le v$.

Similar to the proof of the original reverse data processing inequality (Appendix~\ref{sec:pensia}), we will consider two cases: (i) when most distance comes from the summand where the ratio is less than two, or (ii) where most distance comes from the summand with ratio at least two. Case (i) will leverage structure from the linear densities, while case (ii) will follow exactly as in the proof of the original reverse data processing inequality.

With this in mind, let us define $x^*$ as the smallest value in  $[0,1]$ where $r(x) \ge 2$ (if $r(x) < 2$ for all $x \in [0,1]$, then set $x^*=1$). By definition, we know
\begin{equation}
    \dhsq(p,q) = \frac{1}{2}\int_0^{x^*} \left( \sqrt{p(x)} - \sqrt{q(x)}  \right)^2 +  \frac{1}{2} \int_{x^*}^{1} \left( \sqrt{p(x)} - \sqrt{q(x)}  \right)^2
\end{equation}

This will let us focus on our two cases.

\textbf{Case (i): $\int_0^{x^*} \left( \sqrt{p(x)} - \sqrt{q(x)}  \right)^2 \ge \dhsq(p,q)$.} In this case, our plan is to choose the interval $\mathcal{I} = [x^*/2,x^*]$ and then bound the desired quantities. First, let us upper bound the squared Hellinger distance term:

\begin{align*}
    \int_0^{x^*} \left( \sqrt{p(x)} - \sqrt{q(x)}  \right)^2 &=  \int_0^{x^*} \left( \sqrt{r(x) \cdot q(x)} - \sqrt{q(x)}  \right)^2  \\
    & \le \int_0^{x^*} \left( \sqrt{(r(0) + v x^*) \cdot q(x)} - \sqrt{q(x)}  \right)^2   \\
    & \le (r(0) + vx^* - 1)^2  \cdot \int_0^{x^*} q(x) \label{eq:custom-rdpi-ub}\numberthis
\end{align*}

Next, let us lower bound the distance yielded by the chosen interval:
\begin{align*}
    &\left( \sqrt{\int_{x^*/2}^{x^*} p(x) } - \sqrt{\int_{x^*/2}^{x^*} q(x)} \right)^2 \\
    &= \left( \sqrt{\int_{x^*/2}^{x^*} r(x) \cdot q(x) } - \sqrt{\int_{x^*/2}^{x^*} q(x)} \right)^2 \\
    & \ge \left( \sqrt{\int_{x^*/2}^{x^*} (r(0) + vx^*/8) \cdot q(x) } - \sqrt{\int_{x^*/2}^{x^*} q(x)} \right)^2\\
    & \ge (r(0) + vx^*/8 - 1)^2/9 \cdot \int_{x^*/2}^{x^*} q(x) \rtag{using $\sqrt{1+z} \ge 1+z/3$ for $z \in [0,1]$} \label{eq:custom-rdpi-lb} \numberthis
\end{align*}

Hence, combining \Cref{eq:custom-rdpi-ub,eq:custom-rdpi-lb} we conclude:
\begin{align*}
    &\left( \sqrt{\int_{x^*/2}^{x^*} p(x) } - \sqrt{\int_{x^*/2}^{x^*} q(x)} \right)^2 \\
    &\gtrsim \dhsq(p,q) \cdot \frac{\left( \sqrt{\int_{x^*/2}^{x^*} p(x) } - \sqrt{\int_{x^*/2}^{x^*} q(x)} \right)^2}{\int_0^{x^*} \left( \sqrt{p(x)} - \sqrt{q(x)}  \right)^2} \\
    & \ge \dhsq(p,q) \cdot \frac{(r(0) + vx^*/8 - 1)^2/9 \cdot \int_{x^*/2}^{x^*} q(x)}{(r(0) + vx^* - 1)^2  \cdot \int_0^{x^*} q(x)} \\
    & \gtrsim  \dhsq(p,q) \cdot \frac{\int_{x^*/2}^{x^*} q(x)}{\int_{0}^{x^*} q(x)}  \rtag{using for non-negative $a,b,c,d>0$ that $\frac{a+b}{a+c} \le 1 + \frac{b}{c} \implies \frac{a+b}{a+c} \ge \frac{1}{1 + \frac{c}{b}}$}\\
    & \gtrsim \dhsq(p,q) \rtag{via ratio condition between endpoints for $q$}
\end{align*}

\textbf{Case (ii): $ \int_{x^*}^{1} \left( \sqrt{p(x)} - \sqrt{q(x)}  \right)^2 \ge \dhsq(p,q)$.} We will choose $\mathcal{I} = [x^*,1]$ and our claim will follow by the same method as the original proof of the reverse data processing inequality:
\begin{align*}
    & \left( \sqrt{\int_{x^*}^{1} p(x) } - \sqrt{\int_{x^*}^{1} q(x)} \right)^2  = \left( \sqrt{\int_{x^*}^{1} p(x) } - \sqrt{\int_{x^*}^{1} p(x) / r(x)} \right)^2 \ge \left( \sqrt{\int_{x^*}^{1} p(x) } - \sqrt{\int_{x^*}^{1} p(x) / 2} \right)^2 \\
    & \gtrsim \int_{x^*}^{1} p(x) \ge \int_{x^*}^{1} \left( \sqrt{p(x)} - \sqrt{q(x)} \right)^2 \ge \dhsq(p,q) %
\end{align*}
\end{proof}

\textit{Remark on conditions.} We expect the ratio condition between the endpoints is crucial for this modified reverse data processing inequality where we only lose a constant fraction of squared Hellinger distance, instead of a $\log(2/\dhsq(p,q))$ term. Informally, we suggest considering the densities $p(x) = 2+x$ and $q(x) = 1+x$ (distributed over $[0,z]$ for a very large value of $z$, and then normalized) for a counter-example without this ratio condition.

\subsection{Greedy merging}\label{subsec:greedy-merge}
This section will outline the greedy merging procedure in Section 4 of \cite{acharya2017sample}. We will state some of their results slightly more generally than they state in their work, but the results hold from the same logic given in their proof.

The main purpose of the greedy merging procedure is to quickly optimize over $t$-piecewise functions when we only have an oracle that is able to optimize over single-piece functions. We now introduce the relevant definitions/oracles and then describe their greedy merging procedure.

\begin{definition}[Non-decreasing cost]
    Consider a cost function $L$ over intervals defined with respect to a family of densities $\mathcal{F}$. Suppose the function $L(\mathcal{I},f_n,\hat{f})$ assigns a non-negative cost for an interval $\mathcal{I}$ depending on the empirical samples $f_n$ and an $\hat{f} \in \mathcal{F}$. Further, we define $L(\mathcal{I},f_n) = \inf_{\hat{f} \in \mathcal{F}} L(\mathcal{I},f_n,\hat{f})$. Then, we say $L$ is \textit{non-decreasing} if $L(\mathcal{I}_1) \le L(\mathcal{I}_2)$ whenever $\mathcal{I}_1 \subseteq \mathcal{I}_2$.
\end{definition}

For intuition, you should think of $\mathcal{F}$ as a class of single-piece functions (e.g. degree-$d$ polynomials). We define an oracle that approximately computes $L$ and computes an approximate optimizer $\hat{f} \in \mathcal{F}$:

\begin{definition}[$\mathcal{O}_{(\alpha,\beta)-opt}(\mathcal{I},f_n)$ oracle]\label{def:oracle-find}
    We define an oracle $\mathcal{O}_{(\alpha,\beta)-opt}(\mathcal{I},f_n)$ as one that takes in empirical samples and an interval $\mathcal{I}$, and then outputs: (i) $\hat{\ell}$, an estimate of the cost, and (ii) $\hat{f} \in \mathcal{F}$, an approximate optimizer. In particular, it should hold that $\hat{\ell} \le L(\mathcal{I},f_n) \le L(\mathcal{I},f_n,\hat{f}) \le \alpha \hat{\ell} + \beta$. Further, we define $T_{(\alpha,\beta)-opt}(n)$ as an upper bound on the maximum total time it would take to run $\mathcal{O}_{(\alpha,\beta)-opt}(\mathcal{I},f_n)$ on at most $n$ samples partitioned into at most $2n$ intervals.
\end{definition}

With these definitions, we may state the guarantee of the greedy merging procedure of \cite{acharya2017sample}:

\begin{theorem}[Guarantees for greedy merging of \cite{acharya2017sample} Section 4]\label{thm:merge-guarantee}
Consider an interval $[a,b]$ and $n$ samples where $a \le x_1 \le \dots \le x_n \le b$. Let $\operatorname{OPT}_t$ denote minimum-possible maximum cost of an interval if we partition $[a,b]$ into $t$ pieces. Suppose for an initial partition $[a,x_1],(x_1,x_2], \dots, (x_n,b]$, it holds that the cost of each interval is at most $\beta$. Then, \Cref{alg:greedy-merge} runs in $O(T_{(\alpha,\beta)-opt}(n) \log(n))$ time and produces a partition into at most $2t$ intervals where the maximum cost of an interval is at most $\alpha \operatorname{OPT}_t + \beta$.    
\end{theorem}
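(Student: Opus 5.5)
This theorem re-proves the merging guarantee of Section 4 of \cite{acharya2017sample}. The proof only uses that the cost $L$ is non-decreasing and that the oracle is $(\alpha,\beta)$-accurate, never the specific form of $L$. So the plan is to restate \Cref{alg:greedy-merge} and redo their argument abstractly.

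The algorithm runs in rounds. It starts from the initial partition of at most $n+1$ intervals. In each round, consecutive intervals are paired up. The oracle $\mathcal{O}_{(\alpha,\beta)-opt}$ is called on each merged pair $\mathcal{I}_{2j-1}\cup\mathcal{I}_{2j}$ and returns an estimate $\hat{\ell}_j$. All pairs are merged except the roughly $t$ pairs with the largest $\hat{\ell}_j$; those are left split. Rounds continue until at most $2t$ intervals remain.

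\textbf{Cost invariant.} I would show by induction that every interval in the current partition has cost at most $\alpha\opt_t+\beta$. Initially each interval has cost at most $\beta$ by hypothesis. A merged interval $\mathcal{I}$ came from a pair whose estimate $\hat{\ell}$ was not among the top $t$. Fix an optimal $t$-piece partition. Among the more than $t$ pairs with estimates at least $\hat{\ell}$, including $\mathcal{I}$'s pair, some pair must lie entirely inside a single optimal piece, since $t$ pieces have at most $t-1$ breakpoints. Call that pair $\mathcal{J}$, with estimate $\hat{\ell}'$. Since $L$ is non-decreasing, $L(\mathcal{J},f_n)\le\opt_t$. The oracle guarantee then gives
\[
\hat{\ell}\le\hat{\ell}'\le L(\mathcal{J},f_n)\le\opt_t,
\qquad
L(\mathcal{I},f_n,\hat f)\le\alpha\hat{\ell}+\beta\le\alpha\opt_t+\beta .
\]
Intervals that are not merged keep their earlier guarantee. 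To make the counting work with at most $t-1$ breakpoints, the number of pairs kept unmerged should be exactly $t$ (or $t-1$, matching their convention).

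\textbf{Count and runtime.} With $m$ intervals there are about $m/2$ pairs. All but $O(t)$ of them merge, so $m$ shrinks geometrically while $m\gg t$. This gives $O(\log n)$ rounds. Each round presents the oracle with at most $n$ samples partitioned into at most $2n$ intervals, so a round costs $T_{(\alpha,\beta)-opt}(n)$. Selecting the top $t$ estimates takes linear time. The total is $O(T_{(\alpha,\beta)-opt}(n)\log n)$.

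\textbf{Main obstacle.} The delicate step is the final bookkeeping: the process must terminate with at most $2t$ intervals rather than $O(t)$. A naive stopping rule may leave up to $3t$ intervals (with $t$ pairs kept split), or the final round may be unable to shrink. I would follow their stopping rule exactly: stop once the count is at most $2t$, or adapt the number of pairs kept split in the last round. This choice needs to be checked against their precise procedure.
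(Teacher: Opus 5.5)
Your cost invariant is the paper's argument. For a merged pair $J$, the pairs kept split all have estimates at least $\hat{\ell}(J)$. Together with $J$ they form disjoint intervals that outnumber the at most $t-1$ breakpoints of an optimal partition. So one of them, $I^*$, lies inside an optimal piece, and $L(J,f_n,\hat f(J))\le\alpha\hat\ell(J)+\beta\le\alpha\hat\ell(I^*)+\beta\le\alpha L(I^*,f_n)+\beta\le\alpha\operatorname{OPT}_t+\beta$. Split pairs revert to intervals that already satisfied the bound. You only need \emph{at least} $t$ such pairs, not more than $t$, so keeping $t-1$ pairs split (the algorithm's actual rule) already suffices for this step.

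The step you left open is a genuine gap, and only one of your two options closes it. If $t$ pairs are kept split, the loop ``while $|\mathcal{I}_j|>2t$'' can stall. With $2t+1$ intervals there are $t+1$ pairs, the last being the unpaired singleton. If that singleton has the smallest estimate, it is the only thing ``merged,'' so the partition never changes; loosening the stopping rule instead would leave more than $2t$ pieces.

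With $t-1$ pairs kept split, one phase gives $s_{j+1}\le\lceil s_j/2\rceil+(t-1)$, hence $s_{j+1}-2t\le (s_j-2t-1)/2$. So the excess over $2t$ drops below half each phase. Starting from $n+1$ intervals, the loop terminates with at most $2t$ intervals after $O(\log n)$ phases. Each phase costs $T_{(\alpha,\beta)-opt}(n)$ plus linear-time selection.

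Your ``geometric shrinkage while $m\gg t$'' does not cover the regime $m=\Theta(t)$, which is exactly where the $2t$ bound is decided. This excess recurrence is what the paper means by ``halving the number of excess intervals.''
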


\begin{algorithm}[htb]
\caption{Greedy merging algorithm (Algorithm 2 in \cite{acharya2017sample}).}
\label{alg:greedy-merge}
\SetAlgoNoEnd
\DontPrintSemicolon
\LinesNumbered
\SetKwProg{Fn}{Function}{:}{}
\SetKwInOut{KwIn}{Input}

\KwIn{Interval $[a,b]$, piece parameter $t$, and $n$ samples where
$a \le x_1 \le \dots \le x_n \le b$.}

\Fn{\textsc{Greedy-Merging}$([a,b], t, (x_1,\dots,x_n))$}{

Form the empirical distribution $f_n$ from these samples.\;

Let $\mathcal{I}_0 \gets \{ [a,x_1], (x_1,x_2], (x_2,x_3], \dots, (x_n,b] \}$ be the initial partition. $j \gets 0$\;
  
  \While{$|\mathcal{I}_j| > 2t$}{
    Let $\mathcal{I}_j = \{I_{1,j}, I_{2,j}, \ldots, I_{s_j-1,j}, I_{s_j,j}\}$.\;

    \For{$k \gets 1$ \KwTo $\left\lceil \frac{s_j}{2} \right\rceil$}{
      
      $I'_{k,j+1} \gets I_{2k-1,j} \cup I_{2k,j}$\;
      \tcp{Set this to $I_{s_j,j}$ for last $k$ if $s_j$ is odd}

      $\hat{\ell}_{k,j}, \hat{f}_{k,j} \gets \mathcal{O}_{(\alpha,\beta)\text{-opt}}(I'_{k,j+1}, f_n)$\;
    }

    Let $S$ be the set of $k \in \{1,2,\ldots,\lceil s_j/2\rceil\}$ with the $t-1$ largest costs $\hat{\ell}_{k,j}$.\;
    
    Let $M$ be the complement of $S$.\;

    $\mathcal{I}_{j+1} \gets \bigcup\limits_{k \in S} \{I_{2k-1,j}, I_{2k,j}\}$\;
    \tcp{Do not merge the intervals in $S$}

    $\mathcal{I}_{j+1} \gets \mathcal{I}_{j+1} \cup \{I'_{k,j+1} \mid k \in M\}$\;
    \tcp{Merge the intervals in $M$}

    $j \gets j+1$\;
  }

  \Return{$\mathcal{I}=\mathcal{I}_j$ and the functions $\hat{f}$ for all intervals in $\mathcal{I}$.}\;
}
\end{algorithm}

\textbf{Algorithm description and proof sketch. } In each phase, \Cref{alg:greedy-merge} considers merging pairs of consecutive intervals. Before doing so, we use the oracle to approximately compute the cost of the potentially merged intervals. By merging only proposed intervals whose computed costs were less than the $t-1$ largest costs, it is our goal to upper bound the cost on new intervals in relation to $OPT_t$.

Consider some optimal $t$-piece interval decomposition corresponding to the solution value of $\operatorname{OPT}_t$; naturally, this decomposition has at most $t-1$ breakpoints between pieces. This means that one of the $t$ largest cost intervals computed by the oracle must not contain a breakpoint. Call one such interval $I^*$; we know $L(I^*,f_n)\le OPT_t$ by the definition of non-decreasing cost, since $I^*$ is a subset of an interval in an optimal decomposition. This is sufficient for us to upper bound the loss of any new interval $J$ that is ever merged in $M$. For ease of notation, for any interval $I$ let us denote the $\hat{f}$ computed by the oracle as $\hat{f}(I)$, and similarly the approximate cost as $\hat{\ell}(I)$. Then, for any newly merged interval $J \in M$:

\begin{equation*}
    L(J,f_n,\hat{f}(J)) \le \alpha \hat{\ell}(J) + \beta \le \alpha \hat{\ell}(I^*) + \beta \le \alpha L(I^*,f_n)+\beta \le \alpha OPT_t + \beta 
\end{equation*}

Thus, since every time we merge intervals it forms an interval with $\hat{f}$ costing at most $\alpha \opt_t + \beta$, and since we know all the original intervals start with cost at most $\beta$, then we may conclude the maximum cost $\hat{f}$ for an interval in our final output is at most $\alpha \opt_t + \beta$.

The runtime follows because the number of phases is bounded by $O(\log(n))$. This is because in each phase, all but at most $2t-1$ intervals are merged (halving the number of excess intervals).

\subsection{$\mathcal{H}_1$ optimization over 1-piece functions}\label{subsec:1-piece-opt}

We now turn towards constructing the oracle given in \Cref{def:oracle-find}, where we hope to compute a 1-piece function with approximately minimal $\mathcal{H}_1$ distance from the empirical distribution restricted to a certain interval. Our algorithm will follow directly from the algorithm and analysis given in Sections 5 and 6 of \cite{acharya2017sample}; the only key difference is that we will require a different separation oracle. In this section, we will state the theorem implied by the algorithm/analysis of \cite{acharya2017sample}, we will then describe their approach, and then we will design the modified separation oracle we require:

\begin{theorem}[analogous to Theorem 18 of \cite{acharya2017sample}]\label{thm:p-opt}

    Fix an interval $J \subseteq [-1,1]$. Consider $\mathcal{F}_d$: the class of degree-$d$ densities over $J$ (in the case of $d=1$, we may further restrict to lines where the value at the endpoints of $J$ are within a factor of two of each other). There exists an $\mathcal{H}_1$ oracle $\mathcal{O}_{(O(1),O(1/n))-opt}$ for $\mathcal{F}_d$ that runs in time
    \begin{equation*}
        O \left( (d^3 \log \log n + sd^2 + d^{\omega + 2}) \log^2 (n) \right)
    \end{equation*}
    where $s$ is the number of samples in the interval $J$.
\end{theorem}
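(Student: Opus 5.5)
The plan follows the architecture of Sections 5 and 6 of \cite{acharya2017sample}, changing only the separation oracle. Their Theorem 18 reduces approximate $\mathcal{A}_k$-fitting of a degree-$d$ density over $J$ to a convex feasibility problem over coefficient vectors. They parametrize the polynomial in a well-conditioned (Chebyshev/Legendre) basis on $J$, so that the feasible coefficients lie in a bounded region with bounded aspect ratio. They then run a cutting-plane / ellipsoid-type method for $O(d\log(\cdot))$ iterations, plus an outer binary search over the target cost, which contributes the $\log\log n$ and $\log^2 n$ factors. Each iteration calls a separation oracle costing $O(sd^2)$ plus linear algebra costing $O(d^{\omega+2})$ and $O(d^3)$. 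The cutting-plane analysis only needs the oracle to be approximate, up to constant factors in the objective and additive $O(1/n)$. So I will reuse their machinery verbatim and supply an approximate separation oracle for the $\mathcal{H}_1$ objective.

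\emph{Convexity.} First I will check that the problem we feed the solver is convex. For a fixed interval $I$, the map $p\mapsto \int_I p$ is linear. The constraint $\frac12(\sqrt{\int_I p}-\sqrt{f_n(I)})^2\le \gamma$ is equivalent to $\int_I p\in[(\sqrt{f_n(I)}-\sqrt{2\gamma})_+^2,(\sqrt{f_n(I)}+\sqrt{2\gamma})^2]$. That is a pair of linear constraints, so the feasible set $\{p:\mathcal{H}_1(p,f_n)\le\gamma\}$ is an intersection of halfspaces, and a violated interval yields a separating hyperplane. Nonnegativity on $J$, unit mass, and (for $d=1$) the factor-two endpoint ratio are also convex. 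These are handled exactly as in \cite{acharya2017sample}; for $d=1$ the ratio condition is a pair of linear inequalities. The outer binary search over $\gamma$ from $1/n$ up to $1$ costs $O(\log n)$ rounds. A $\gamma$-feasible point certifies $\hat\ell\le L$ up to constants, giving the $(O(1),O(1/n))$ guarantee.

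\emph{Approximate separation.} Given a candidate $p$, I will either find an interval $I$ with $\frac12(\sqrt{p(I)}-\sqrt{f_n(I)})^2\gtrsim\gamma$ or certify $\mathcal{H}_1(p,f_n)\lesssim\gamma+1/n$, in $\tilde O(s d)$ time after evaluating the CDF of $p$ at the sample points. Following \cite{acharya2017sample}, I discretize $J$ into $O(s)$ cells whose endpoints are the samples together with extra breakpoints chosen so that each cell carries $p$-mass at most $1/n$. Rounding an interval to cell boundaries then changes $p(I)$ by $O(1/n)$ and $f_n(I)$ by at most one sample, which affects the square-root objective only by the additive $O(1/n)$ slack. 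This yields the $\pm1$-style array problem from the technique overview: maximize over $l\le r$ the quantity $(\sqrt{P[l,r]}-\sqrt{Q[l,r]})^2$ with prefix sums $P,Q$. I will use $(\sqrt a-\sqrt b)^2\asymp (a-b)^2/(a+b)$. I group candidate intervals by dyadic scale of their mass $a+b\in[2^j,2^{j+1})$; for each scale, $O(n/2^j)$ anchor windows each of mass $\Theta(2^j)$ suffice, with $O(\log n)$ scales in total. Within each window I maximize $|P-Q|$ over subintervals using prefix sums and range-min/max queries in $O(1)$ per window, then divide by the scale. Any optimal interval of scale $2^j$ is contained in some window of scale $2^{j+1}$. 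A maximizer found inside that window has $|a-b|$ at least as large, and its mass is at most $O(2^j)$, so the normalized objective is within a constant factor. This keeps the separation oracle near-linear.

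The main obstacle is this last step. The normalization depends on the interval's own mass, so the numerator maximizer inside a window might have much smaller mass than the window. That only helps the true objective, but it breaks the scale bookkeeping when rounding. I will handle it by evaluating the exact objective on every returned candidate and keeping the best. The remaining concern is verifying that the approximation loss composes with the cutting-plane analysis, which tolerates constant-factor separation, giving the stated running time with $s$ samples in $J$.
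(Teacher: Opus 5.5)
Your approximate maximization is the paper's argument. The paper builds an array of samples ($+1$) and ``special points'' where the CDF of $p$ crosses multiples of $1/n$ ($-1$); its length is exactly your combined mass $a+b$ in units of $1/n$. It then uses the same half-overlapping dyadic windows, range min/max queries on prefix sums, and the comparison $(\sqrt{x}-\sqrt{y})^2\asymp (x-y)^2/\max(x,y)$. The step you call the main obstacle is not one. Every subinterval of a window has mass at most the window's mass, which is at most four times the optimum's, so the denominator is bounded directly, as in the paper's chain of inequalities.

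The genuine gap is the size of the discretized instance, and hence the running time.
\begin{itemize}
\item Your windows number $\sum_j O(n/2^j)=O(n)$ per separation call.
\item Your claim that $J$ splits into $O(s)$ cells of $p$-mass at most $1/n$ holds only if $\int_J p=O(s/n)$. You never establish this, and your unit-mass constraint contradicts it.
\item The stated bound depends on $s$, not $n$. Greedy merging invokes the oracle on $\Theta(n)$ intervals, each containing few samples, so an $O(n)$-per-call separation step makes the total time quadratic.
\end{itemize}

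The missing idea is the paper's reduction, which has three parts.
\begin{itemize}
\item Here $\mathcal{F}_d$ should be nonnegative polynomials on $J$ with no mass normalization. A unit-mass piece on a sub-interval with $s\ll n$ samples already has cost about $\tfrac12$ from $I=J$, which would break greedy merging.
\item The zero polynomial has $\mathcal{H}_1$-cost $s/(2n)$, so the binary search only needs $\gamma\le s/(2n)$, not $\gamma$ up to $1$.
\item Before building the array, check the single constraint $I=J$. Return $J$ as the separating hyperplane whenever $\int_J p>4s/n$; this violates the $I=J$ constraint once $\gamma\le s/(2n)$.
\end{itemize}
Otherwise there are at most $4s$ special points, the array has length at most $5s$, and there are $O(s)$ windows. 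Each separation call then costs $O(s)$ on top of computing the $c_i$, which is what yields the $sd^2$ term in the stated running time.
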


\textbf{Description of the approach of \cite{acharya2017sample}.} We redescribe in more detail the approach in \cite{acharya2017sample}. In their analogous result, they are searching for a polynomial whose density has small $\mathcal{A}_k$ distance with the empirical distribution over $J$; for this discussion, let us just describe the case of $\mathcal{A}_1$. We will do so by approximately determining whether there is a polynomial with $\mathcal{A}_1$ distance at most $\tau$, and employing a binary search over the correct value of $\tau$.

Any degree-$d$ polynomial can be defined by its coefficients in $\mathbb{R}^{d+1}$. Let $\mathcal{C}_\tau \subset \mathbb{R}^{d+1}$ denote the set of such polynomials. These polynomials must: (i) be non-negative over $J$, and (ii) must satisfy an upper and lower bound on the integral of $p$ for every $I \subset J$ (meaning, based on $\tau$ and the number of samples in $I$, there is a minimum/maximum possible integral for $p$ over $I$). Observe how each of these two types of constraints are linear in the coefficients of $p$. The key technical difficulty is that the number of these constraints may be quite large; there are many values of $x$ in the domain for constraints of type (i), and naively there are $\Theta(s^2)$ relevant intervals producing constraints of type (ii), yet we hope for a running time that is near-linear in terms of $s$.

Accordingly, their work does not explicitly define all these constraints. Instead, for some fixed polynomial $p$, we either find a constraint that is violated (thus giving a halfspace that separates $p$ from $\mathcal{C}_\tau$, the primitive enabling e.g. the cutting plane method of \cite{vaidya1989new}), or we conclude $p$ at least approximately satisfies all constraints. Given efficient subroutines that can quickly find such a violated constraint, then \cite{acharya2017sample} are able to conclude the desired algorithm to find a near-optimal $p$.

We will not restate the details of \cite{acharya2017sample}, as we only need to modify a subroutine. First, in the special-case where $d=1$ and we want the endpoints to satisfy the bounded ratio constraint for $\lin$, we note this can be encoded as two linear constraints that we explicitly check at each stage of the algorithm. We do not modify their machinery for handling the polynomial non-negativity constraints of type (i). We only modify the procedure for constraints of type (ii), where instead we must now check whether $p$ has bounded $\mathcal{H}_1$ distance from the empirical distribution (instead of $\mathcal{A}_1$). The subroutine for $\mathcal{A}_1$ distance mostly followed from a previously studied related question of \cite{csuros2004maximum}; for $\mathcal{H}_1$ distance we will need an alternative algorithm.

\subsubsection{$\mathcal{H}_1$ separation oracle}

In this section, we will prove the following result, which when used as a separation oracle for the algorithm of \cite{acharya2017sample} will imply the guarantees of \Cref{thm:p-opt}. Note that the analogous result in \cite{acharya2017sample} for $\mathcal{A}_1$ distance only has additive error, and this is why our $\mathcal{H}_1$ oracle also incurs multiplicative error. Additionally, since the runtime of this subroutine is $O(s)$ it is not a dominating term in the final runtime.

\begin{theorem}
Fix an interval $J =[a,b] \subseteq [-1,1]$. Suppose there are $s$ samples in this range $a \le x_1 \le \dots \le x_s \le b$. For some non-negative polynomial $p$, suppose we are given access to $c_0,\dots,c_s$ where $c_0 = \int_a^{x_1} p(t) \, dt$,  $c_1 = \int_{x_1}^{x_2} p(t) \, dt$, $\dots$, and $c_s = \int_{x_s}^{b} p(t) \, dt$. Assume $\sum_{i=0}^s c_i \le 4s/n$. Then, for some universal constant $C \ge 1$ there is an algorithm that runs in $O(s)$ time and outputs an interval $I' \subseteq J$ where
\begin{equation*}
    \frac{1}{2} \left(\sqrt{\int_{I'} p(t) \, dt} - \sqrt{\text{(\# of $x_i$ in $I'$)}/n} \right)^2 \ge \sup_{I \subseteq J}\frac{1}{2C} \left(\sqrt{\int_I p(t) \, dt} - \sqrt{\text{(\# of $x_i$ in $I$)}/n} \right)^2 - \frac{C}{n}.
\end{equation*}
\end{theorem}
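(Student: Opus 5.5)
The plan is to replace the Hellinger-type objective by a ratio that a range-max/min structure can optimize, then search only a linear number of candidate windows spread over dyadic scales.

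\textbf{Reductions.} Write $a(I)=\int_I p$ and $b(I)=(\#x_i\in I)/n$. First reduce to intervals whose endpoints lie on sample points or on $a,b$. Inside a gap $(x_i,x_{i+1})$, moving an endpoint changes only $a(I)$, and it does so monotonically. Moreover, $(\sqrt{a}-\sqrt{b})^2$ is convex in $\sqrt{a}$. So the maximum over endpoint positions within a gap is attained at an end of the gap, with the usual care about whether each endpoint sample is included. This leaves $O(s)$ cells, the gaps $c_i$ and the unit masses $1/n$ of the samples. Encode them as a weighted array and form prefix sums
\[
D(k)=\sum_{\text{cells}\le k}(\text{model mass}-\text{empirical mass}),\qquad W(k)=\sum_{\text{cells}\le k}(\text{model mass}+\text{empirical mass}).
\]
These are computed in $O(s)$. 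Next use
\[
\frac{(a-b)^2}{2(a+b)}\le(\sqrt a-\sqrt b)^2=\frac{(a-b)^2}{(\sqrt a+\sqrt b)^2}\le\frac{(a-b)^2}{a+b}.
\]
This shows the objective is, up to a factor $2$, the ratio $(D(r)-D(l))^2/(W(r)-W(l))$.

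\textbf{Scales and windows.} For each dyadic scale $j=0,1,\dots,O(\log s)$, where total weight is at most $5s/n$, greedily cut the cell array into blocks of weight at most $2^j/n$. A block is closed as soon as adding the next cell would exceed $2^j/n$. A single cell heavier than $2^j/n$ forms its own block, and every single cell is also tested separately as a candidate interval. Consecutive blocks then have combined weight more than $2^j/n$. Hence the number of blocks at scale $j$ is $O(1+s/2^j)$, and the total number of blocks over all scales is $O(s)$.

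Take any interval $I$ with weight in $[2^j/n,2^{j+1}/n)$ that contains no heavy cell of larger weight. Such an $I$ is covered by a window of $O(1)$ consecutive blocks of scale $j$, and the window has weight $O(2^j/n)$. For each window, take the unordered extremes $\max D-\min D$ over its index range. These come from one linear-time-preprocessed RMQ structure, so each query costs $O(1)$. The difference $\max D-\min D$ is realized by an actual interval $I'$ inside the window, regardless of which endpoint comes first. It satisfies $|a(I')-b(I')|\ge|a(I)-b(I)|$, and its weight is $O(2^j/n)$. Therefore
\[
(\sqrt{a(I')}-\sqrt{b(I')})^2\gtrsim\frac{(a(I)-b(I))^2}{2^j/n}\gtrsim(\sqrt{a(I)}-\sqrt{b(I)})^2.
\]
Evaluate the true objective on every candidate using prefix sums, and output the best one. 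The total time is $O(s)$ preprocessing plus $O(1)$ per window, which is $O(s)$.

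\textbf{Leftover cases.} Intervals of weight below $C/n$ contribute objective at most $C/n$, and these are absorbed into the additive term. Intervals containing a heavy cell of weight comparable to or larger than their own need a separate argument. Here either the heavy cell dominates the numerator, so the cell itself is a good candidate, or one handles it by noting that a cell is the union of two at-most-two-block windows of a coarser scale.

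\textbf{Expected obstacle.} I expect the main difficulty to be making the block covering airtight when cell weights are very uneven. That is, one must guarantee that every interval of weight $\asymp 2^j/n$ lies inside some $O(1)$-block window whose weight is still $O(2^j/n)$, even when it straddles a heavy cell. I would first try to resolve this by assigning each interval to the scale set by its largest cell or its total weight, whichever is bigger, and checking that the greedy blocks at that scale still give the needed containment.
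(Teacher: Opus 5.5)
\textbf{Comparison with the paper.} Your route differs from the paper's, and its core is sound. The paper never works with weighted cells. It replaces the mass of $p$ by ``special points'' placed where the CDF of $p$ crosses multiples of $1/n$. This turns the problem into a $\pm1$ array of length at most $5s$, in which the weight of a window is just its length divided by $n$. The candidate windows are then the arithmetic dyadic intervals $[k2^{i-1},k2^{i-1}+2^i-1]$, and every interval lies inside one that is at most four times longer. The price is an $O(1/n)$ additive loss and a constant from the approximate triangle inequality.

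You keep the exact masses $c_i$ and $1/n$ after a correct cell-alignment reduction (convexity of $(\sqrt a-\sqrt b)^2$ in $a$). This avoids the discretization error: your only additive loss comes from intervals of weight below $1/n$. In exchange, you must build weight-adaptive blocks. Both proofs then rely on the same two ingredients: $(\sqrt a-\sqrt b)^2\asymp (a-b)^2/(a+b)$, and range max/min queries on prefix sums.

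\textbf{First fix: the heavy-cell obstacle does not exist.} Delete the ``Leftover cases'' paragraph; your window argument already covers every interval of weight at least $1/n$. After your reduction, $I$ and all blocks are unions of whole cells. Take $w(I)\in[2^j/n,2^{j+1}/n)$ and let $B_u,\dots,B_v$ be the scale-$j$ blocks meeting $I$.
\begin{itemize}
\item A singleton block cannot meet $I$ only partially. So $B_u$ and $B_v$ either lie inside $I$ or are multi-cell blocks of weight at most $2^j/n$.
\item Interior blocks lie in $I$. Consecutive blocks weigh more than $2^j/n$ together, so there are at most three of them.
\end{itemize}
Hence $v-u+1\le 5$ and $w(B_u\cup\dots\cup B_v)\le w(I)+2\cdot 2^j/n<4\cdot 2^j/n$, whatever heavy cells $I$ contains. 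This gives the ratio $1/8$ directly. Your suggestion that a cell is a union of two coarser windows cannot be right, since cells are atomic.

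\textbf{Second fix: the running-time accounting omits block construction.} A fresh greedy pass over all $2s+1$ cells at each of the $\Theta(\log s)$ scales costs $\Theta(s\log s)$, not $O(s)$. Instead, build the scale-$(j+1)$ blocks by running the same greedy over the scale-$j$ blocks rather than over cells. This preserves both properties you use:
\begin{itemize}
\item A scale-$j$ block heavier than $2^{j+1}/n$ must be a single cell, so non-singleton blocks still weigh at most $2^{j+1}/n$.
\item Consecutive blocks still weigh more than $2^{j+1}/n$ together.
\end{itemize}
Scale $j+1$ then costs $O(1+s/2^j)$, which sums to $O(s)$. Alternatively, adopt the paper's discretization, where windows are computed by arithmetic alone.

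\textbf{Minor point.} Take each window's $\max D-\min D$ over the prefix indices including the one just before the window's first cell.
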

\begin{proof}
In this theorem, we are replacing the component of the algorithm from Section 6.3.2 of \cite{acharya2017sample}. At this point, the algorithm has already computed these $c_i$ values so we may also use them.

Let us justify the $\sum_{i=0}^s c_i \le 4s/n$ assumption in the theorem. First, we may assume $\tau \le s/n$ in the context of the algorithm, as otherwise we would immediately know the all-zero polynomial is in $\mathcal{C}_\tau$ and we would not need to call this subroutine. Second, we may assume $\int_J p \le 4s/n$, because otherwise we could immediately return the entire interval $J$ as a type (ii) constraint that $p$ violates for $\tau \le s/n$.

At this point, we will reduce the task to a discrete and unweighted problem.  We observe that for any interval $I \subseteq J$, the integral $\int_I p$ is additively within $1/n$ of [the number of times the CDF of $p$ exceeds a multiple of $1/n$ in $I$] multiplied by $1/n$. Call each of these points where the CDF of $p$ exceeds a multiple of $1/n$ as the ``special points.'' Instead of considering the $\mathcal{H}_1$ distance between $p$ and the empirical distribution, we will consider the $\mathcal{H}_1$ distance between the empirical distribution and these special points (each weighted $1/n$ like a sample). In particular, we will find an approximate optimizer for:
\begin{equation}\label{eq:easier-opt}
\max_{I \subseteq J} \frac{1}{2} \left(\sqrt{\text{(\# special points in $I$)}/n} - \sqrt{\text{(\# samples in $I$)}/n} \right)^2
\end{equation}

The value of this quantity for any interval $I$ will be close to the corresponding quantity for the $\mathcal{H}_1$ distance between $p$ and the empirical distribution (they will be within a multiplicative constant factor and an $O(1/n)$ additive term). This follows from the approximate triangle inequality $(\sqrt{a}-\sqrt{b})^2 \lesssim (\sqrt{a}-\sqrt{c})^2 + (\sqrt{c}-\sqrt{b})^2$, where we use $|\text{(\# special points in $I$)}/n - \int_I p(t) \, dt | \le 1/n$.

The form of \Cref{eq:easier-opt} is amenable to being viewed as a discrete and unweighted problem. Let us define $d_i = \left\lfloor \frac{\sum_{j=0}^i c_j}{1/n}\right\rfloor - \left\lfloor \frac{\sum_{j=0}^{i-1} c_j}{1/n}\right\rfloor$; meaning, $d_i$ denotes the number of special points between $x_{i}$ and $x_{i+1}$. Now, consider flattening the samples and the special points into an array $A$, where the samples are $+1$ and special points are $-1$. More formally, $A$ is $d_0$ $-1$'s, then one $+1$, then $d_1$ $-1$'s, then one $+1$, $\dots$, and then $d_s$ $-1$'s. The optimization problem of \Cref{eq:easier-opt} is equivalent to:

\begin{equation}\label{eq:array-opt}
\max_{l\le r \text{ where } l,r\in \{0,\dots,|A|-1\}} \frac{1}{2} \left(\sqrt{\text{(\# of -1's in $A_l,\dots,A_r$)} /n} - \sqrt{\text{(\# of +1's in $A_l,\dots,A_r$)} /n} \right)^2
\end{equation}

Note how since $\int_J p \le 4s/n$, we know there are at most $4s$ special points and thus $|A|\le 5s$. We could thus naively compute \Cref{eq:array-opt} exactly in $O(s^2)$ time. However, we will instead design a procedure running in $O(s)$ time that optimizes \Cref{eq:array-opt} within a constant factor. 

\textit{Algorithm intuition. } In the beginning of the proof of Claim 2.22 in \cite{compton2025attainability}, they show for non-negative numbers $x,y$ that $| \sqrt{x} - \sqrt{y} | \le \frac{2 |x-y|}{\sqrt{\max(x,y)}}$. Further, it is clear $| \sqrt{x} - \sqrt{y} |  = \int_{\min(x,y)}^{\max(x,y)} \frac{1}{2 \sqrt{t}} dt\ge \frac{|x-y|}{2\sqrt{\max(x,y)}}$. Together, this implies $| \sqrt{x} - \sqrt{y} | = \Theta(1) \cdot \frac{|x-y|}{\sqrt{\max(x,y)}} = \Theta(1) \cdot \frac{|x-y|}{\sqrt{x+y}}$. 

This guides our algorithm: if we are able to consider sets of intervals where the $x+y$ values are within a constant factor of each other, then it is sufficient to choose the interval with maximum $|x-y|$ value. This is much more convenient to work with. Consider the following optimization task where we hope to find the interval with largest $|x-y|$ within some larger interval $[l,r]$:

\begin{equation}\label{eq:diff-opt}
    \max_{l',r' \in \{0,\dots,|A|-1\} \text { s.t. } l \le l' \le r' \le r}  \left| (\text{\# of -1's in }A_{l'},\dots,A_{r'}) - (\text{\# of +1's in }A_{l'},\dots,A_{r'}) \right|
\end{equation}

Let us define a prefix sum array $B$, where $B_0 = 0$ and then $B_i = B_{i-1} + A_{i-1}$ for $i \in \{1,\dots,|A|\}$. Then \Cref{eq:diff-opt} is equivalent to:

\begin{equation}\label{eq:prefix-opt}
    \max_{i \in \{l,\dots,r+1\}} B_i - \min_{i \in \{l,\dots,r+1\}} B_i
\end{equation}

Accordingly, finding the maximum absolute value difference within some interval $[l,r]$ reduces to simply computing a range minimum and range maximum query. Our plan is as follows: (i) construct a set of $O(s)$ intervals $[l,r]$, where for any $[l^*,r^*]$ there is a corresponding $[l,r]$ interval that contains $[l^*,r^*]$ and is only a constant factor larger, and (ii) consider the interval optimizing \Cref{eq:diff-opt} for each $[l,r]$, and take the best interval among these optimizers. We remark part (ii) can be done in $O(s)$ time because $O(n)$ range minimum (and maximum) queries on an array of length $n$ can be solved in $O(n)$ time (e.g. via \cite{gabow1984scaling} and \cite{harel1984fast}); the data structure can also be modified to return the index of the optimizer in addition to the value (so we may reconstruct the interval).

\textit{Algorithm. } Let us first construct our set of considered intervals $[l,r]$. Include all intervals with a single element: $[j,j]$ for $j \in \{0,\dots,|A|-1\}$. Further, for all $i \in \{1,\dots,\lceil \log(|A|) \rceil \}$, consider adding intervals of the form $[k \cdot 2^{i-1}, \min(k \cdot 2^{i-1} + 2^i - 1,|A|-1)]$, for all non-negative integer values of $k$ where $k \cdot 2^{i-1} < |A|$. 

Once we have this set of intervals, our algorithm simply considers the intervals given by the range query optimizers of \Cref{eq:prefix-opt}, and chooses the optimizing interval that maximizes \Cref{eq:array-opt}.

\textit{Analysis. } We observe how the number of these intervals is $O(s)$. Additionally, we claim for any interval $[l^*,r^*]$, there is an interval $[l,r]$ in our set where $(r-l+1) \le 4 (r^*-l^*+1)$ (meaning, the length of the interval $[l,r]$ is at most four times the length of $[l^*,r^*]$). Let $i'$ denote the smallest integer where $2^{i'} \ge (r^*-l^*+1)$. Then, the last interval of length $2^{i' + 1}$ starting before $[l^*,r^*]$ must contain this interval, and it will only be at most four times its length. 

Let $[l^*,r^*]$ denote the interval that optimizes \Cref{eq:array-opt}. Let $[l,r]$ denote the interval in our set that is at most four times the length of $[l^*,r^*]$. Finally, let $[l',r']$ denote the optimizer of \Cref{eq:prefix-opt} within $[l,r]$. We now argue the value of the objective for \Cref{eq:array-opt} with $[l',r']$ is at least a constant fraction of the value for $[l^*,r^*]$, which immediately implies our desired result:

\begin{align*}
    &\frac{1}{2} \left(\sqrt{\text{(\# of -1's in $A_{l'},\dots,A_{r'}$)} /n} - \sqrt{\text{(\# of +1's in $A_{l'},\dots,A_{r'}$)} /n} \right)^2 \\
    &\ge \frac{1}{2} \left( \frac{\left| \text{(\# of -1's in $A_{l'},\dots,A_{r'}$)} /n - \text{(\# of +1's in $A_{l'},\dots,A_{r'}$)} /n\right|}{2 \sqrt{\max(\text{(\# of -1's in $A_{l'},\dots,A_{r'}$)} /n,\text{(\# of +1's in $A_{l'},\dots,A_{r'}$)} /n)}} \right)^2 \\
    &\ge \frac{1}{2} \left( \frac{\left| \text{(\# of -1's in $A_{l'},\dots,A_{r'}$)} /n - \text{(\# of +1's in $A_{l'},\dots,A_{r'}$)} /n\right|}{2 \sqrt{\text{(\# of -1's in $A_{l'},\dots,A_{r'}$)} /n + \text{(\# of +1's in $A_{l'},\dots,A_{r'}$)} /n}} \right)^2 \\
    &\ge \frac{1}{2} \left( \frac{\left| \text{(\# of -1's in $A_{l^*},\dots,A_{r^*}$)} /n - \text{(\# of +1's in $A_{l^*},\dots,A_{r^*}$)} /n\right|}{2 \sqrt{\text{(\# of -1's in $A_{l'},\dots,A_{r'}$)} /n + \text{(\# of +1's in $A_{l'},\dots,A_{r'}$)} /n}} \right)^2 \rtag{since $[l',r']$ is optimizer}\\
    &\ge \frac{1}{2} \left( \frac{\left| \text{(\# of -1's in $A_{l^*},\dots,A_{r^*}$)} /n - \text{(\# of +1's in $A_{l^*},\dots,A_{r^*}$)} /n\right|}{4 \sqrt{\text{(\# of -1's in $A_{l^*},\dots,A_{r^*}$)} /n + \text{(\# of +1's in $A_{l^*},\dots,A_{r^*}$)} /n}} \right)^2 \rtag{factor of four length bound}\\
    &\ge \frac{1}{64} \left( \frac{\left| \text{(\# of -1's in $A_{l^*},\dots,A_{r^*}$)} /n - \text{(\# of +1's in $A_{l^*},\dots,A_{r^*}$)} /n\right|}{\sqrt{\max(\text{(\# of -1's in $A_{l^*},\dots,A_{r^*}$)} /n , \text{(\# of +1's in $A_{l^*},\dots,A_{r^*}$)} /n)}} \right)^2 \\
    &\ge \frac{1}{256}  \left| \sqrt{\text{(\# of -1's in $A_{l^*},\dots,A_{r^*}$)} /n} - \sqrt{\text{(\# of +1's in $A_{l^*},\dots,A_{r^*}$)} /n} \right|^2 \rtag{$|\sqrt{x} - \sqrt{y}| \le \frac{2 |x-y|}{\sqrt{\max(x,y)}}$} \\
    & = \frac{1}{128} \cdot \left( \frac{1}{2}  \left| \sqrt{\text{(\# of -1's in $A_{l^*},\dots,A_{r^*}$)} /n} - \sqrt{\text{(\# of +1's in $A_{l^*},\dots,A_{r^*}$)} /n} \right|^2 \right) %
\end{align*}
\end{proof}

\subsection{Combining ingredients}\label{subsec:combine-ingreds}

We plan to invoke the greedy merging algorithm of \Cref{thm:merge-guarantee} with our oracle given by \Cref{thm:p-opt}. This gives us an algorithm that approximately minimizes the maximum $\mathcal{H}_1$ distance between $\hat{f}_n$ and $f_n$ \textit{within any piece}. We will manipulate this estimate $\hat{f}_n$ to give a bound on the distance $\mathcal{H}_1(\hat{f}_n,f_n)$ without restricting the intervals:

\begin{corollary}\label{cor:final-h1-opt}
Consider an interval $J=[a,b]$ and $n$ samples $f_n$ where $a \le x_1 \le \dots \le x_n \le b$. Consider $\mathcal{F}_d^t$: the class of $t$-piece degree-$d$ densities over $J$ (in the case of $d=1$, we may further restrict to lines where the value at the endpoints of $J$ are within a factor of two of each other). There exists an algorithm that runs in time 
\begin{equation*}
        O \left( (d^3 \log \log n + nd^2 + d^{\omega + 2}) \log^3 (n) \right)
\end{equation*}
and outputs an estimate $\hat{f}_n \in \mathcal{F}_d^{4t}$ where
\begin{equation*}
    \mathcal{H}_1(\hat{f}_n,f_n) \lesssim \inf_{f_\theta \in \mathcal{F}_{d}^{2t}} \mathcal{H}_1(f_\theta,f_n) + \frac{1}{n}.
\end{equation*}
\end{corollary}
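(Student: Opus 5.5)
We would run \Cref{alg:greedy-merge} with the oracle of \Cref{thm:p-opt}, taking as interval cost $L(\mathcal{I},f_n,\hat f)$ the $\mathcal{H}_1$ distance between $\hat f$ and $f_n$ with both restricted to $\mathcal{I}$ (intervals ranging only inside $\mathcal{I}$). Then we post-process by rescaling each output piece. The steps are as follows.

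First, we check the hypotheses of \Cref{thm:merge-guarantee}. The cost is non-decreasing: if $\mathcal{I}_1\subseteq\mathcal{I}_2$, restricting any $\hat f$ that is good on $\mathcal{I}_2$ to $\mathcal{I}_1$ only removes witnessing intervals, so $L(\mathcal{I}_1,f_n)\le L(\mathcal{I}_2,f_n)$. The initial partition $[a,x_1],(x_1,x_2],\dots$ has cost $O(1/n)$ per interval. Each such interval contains at most one sample, so the zero density (or a tiny one) has $\mathcal{H}_1$ cost at most $\frac12\cdot\frac1n$. The merge theorem then returns a partition into at most $2t$ intervals, each with a degree-$d$ piece of cost $O(\opt_{t'})+O(1/n)$. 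Here $\opt_{t'}$ is bounded by the best $2t$-piece $f_\theta$: restricted to any of its pieces, $f_\theta$ has local $\mathcal{H}_1$ distance at most $\mathcal{H}_1(f_\theta,f_n)$. Running with piece parameter $2t$ gives $4t$ output pieces. The running time is $O(T\log n)$, and summing the oracle time of \Cref{thm:p-opt} over disjoint intervals totalling $n$ samples gives the stated bound.

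Second, we rescale each output piece so that its mass matches the empirical mass $f_n(\mathcal{I})$; pieces with no samples become zero. We must show the rescaled piece still has local $\mathcal{H}_1$ cost $O(\opt)+O(1/n)$. Taking the whole piece as the witnessing interval shows $\frac12(\sqrt{\hat f(\mathcal{I})}-\sqrt{f_n(\mathcal{I})})^2$ is at most the cost. For any subinterval $I$, rescaling multiplies $\hat f(I)$ by $\lambda=f_n(\mathcal{I})/\hat f(\mathcal{I})$. We would bound $(\sqrt{\lambda \hat f(I)}-\sqrt{\hat f(I)})^2=(\sqrt\lambda-1)^2\hat f(I)\le(\sqrt{f_n(\mathcal{I})}-\sqrt{\hat f(\mathcal{I})})^2$ using $\hat f(I)\le\hat f(\mathcal{I})$, and then apply the approximate triangle inequality for $(\sqrt a-\sqrt b)^2$. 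Rescaling preserves degree, as well as the endpoint-ratio constraint when $d=1$.

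Third, and this is the main step, we pass from per-piece costs to a global $\mathcal{H}_1(\hat f_n,f_n)$ bound. Take any interval $[l,r]$. It decomposes into a partial piece at each end and a union of full pieces in the middle. On each full piece the masses of $\hat f_n$ and $f_n$ agree exactly after rescaling, so the middle block contributes equal mass to both. By \Cref{fact:dpi}, or directly, $(\sqrt{a+m}-\sqrt{b+m})^2\le(\sqrt a-\sqrt b)^2$, the common mass can be dropped. What remains is the two end fragments, whose combined contribution is at most twice the sum of their local costs, via $(\sqrt{a_1+a_2}-\sqrt{b_1+b_2})^2\le(\sqrt{a_1}-\sqrt{b_1})^2+(\sqrt{a_2}-\sqrt{b_2})^2$ (again \Cref{fact:dpi}). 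Each local cost is $O(\inf\mathcal{H}_1(f_\theta,f_n))+O(1/n)$, which yields the corollary. The obstacle I anticipate is the bookkeeping around closed versus half-open endpoints and samples sitting on breakpoints. This only shifts $f_n$ masses by $1/n$ and is absorbed into the additive $O(1/n)$.
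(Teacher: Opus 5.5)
Your proposal is correct and follows the paper's proof essentially step for step: greedy merging with the $\mathcal{H}_1$ oracle gives per-piece cost $\lesssim \opt_{2t}+1/n$; each piece is rescaled to its empirical mass, with the subinterval error controlled by the whole-piece term; and the suffix/full-pieces/prefix decomposition uses \Cref{fact:dpi} to remove the matched middle mass. You also verify points the paper leaves implicit (monotonicity of the cost, the $O(1/n)$ cost of the initial partition, and $\opt_{2t}\le\inf_{f_\theta\in\mathcal{F}_d^{2t}}\mathcal{H}_1(f_\theta,f_n)$), and there is one caveat that the paper's one-line runtime justification shares: the merging makes roughly $n$ oracle calls, each paying the $d^3\log\log n + d^{\omega+2}$ overhead of \Cref{thm:p-opt}, so summing does not literally give the stated bound (those terms pick up essentially a factor of $n$), which is harmless for constant $d$ but not for $d\asymp\log n$.
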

\begin{proof}
Combining \Cref{thm:merge-guarantee,thm:p-opt} gives us an algorithm with the desired runtime where the maximum $\mathcal{H}_1$ distance between $\hat{f}_n$ and $f_n$ \textit{within any piece} is at most $\lesssim \opt_{2t} + \frac{1}{n}$. However, we aim for an $\mathcal{H}_1$ bound that is not just within pieces, but is over all intervals.

Modify $\hat{f}_n$ by rescaling each piece such that its mass is the same as the empirical mass of $f_n$ in this piece. We will quickly show that this does not increase the maximum $\mathcal{H}_1$ distance between the new $\hat{f}_n$ and $f_n$ by more than a constant factor within any piece. Let $J$ denote the interval corresponding to some piece, and let $I$ denote some interval $I \subseteq J$. We define $w_I(p)$ as the total mass in $I$ for some density $p$. Then, for any $I$ we know the $\mathcal{H}_1$ contribution after rescaling is still upper bounded:
\begin{align*}
    &\frac{1}{2} \cdot \left( \sqrt{\frac{w_J(f_n)}{w_J(\hat{f}_n)} \cdot w_I(\hat{f}_n)} - \sqrt{w_I(f_n)}\right)^2  \\
    & \lesssim \left( \sqrt{w_I(\hat{f}_n)} - \sqrt{w_I(f_n)}\right)^2  + \left( \sqrt{\frac{w_J(f_n)}{w_J(\hat{f}_n)} \cdot w_I(\hat{f}_n)} - \sqrt{w_I(\hat{f}_n)}\right)^2  \\
    &  \lesssim \left( \sqrt{w_I(\hat{f}_n)} - \sqrt{w_I(f_n)}\right)^2  + \left( \sqrt{\frac{w_J(f_n)}{w_J(\hat{f}_n)} \cdot w_J(\hat{f}_n)} - \sqrt{w_J(\hat{f}_n)}\right)^2 \\
    &  = \left( \sqrt{w_I(\hat{f}_n)} - \sqrt{w_I(f_n)}\right)^2  + \left( \sqrt{w_J(f_n)} - \sqrt{w_J(\hat{f}_n)}\right)^2 \\
    & \lesssim \opt_{2t} + \frac{1}{n}
\end{align*}
The last step followed from how the $\mathcal{H}_1$ distance between $\hat{f}_n$ and $f_n$ within each piece, before rescaling, was $\lesssim \opt_{2t} + \frac{1}{n}$. Hence, after rescaling, it still holds that the $\mathcal{H}_1$ distance within pieces is $\lesssim \opt_{2t} + \frac{1}{n}$.

Now, observe how any interval $I$ that overlaps with multiple pieces can be described as: a suffix of some piece, followed by some number of entirely contained pieces, followed by a prefix of some piece. Recall how we rescaled $\hat{f}_n$ so the integral over each piece is the same as the number of samples. Thus, if we use the upper bound on the $\mathcal{H}_1$ objective for $I$ given by \Cref{fact:dpi}, then the completely-contained pieces contribute $0$. This means the $\mathcal{H}_1$ objective for $I$ is at most a factor of two larger than the maximum $\mathcal{H}_1$ for intervals within a single piece. In total, this proves that after rescaling $\mathcal{H}_1(\hat{f}_n,f_n) \lesssim \inf_{f_\theta \in \mathcal{F}_{d}^{2t}} \mathcal{H}_1(f_\theta,f_n) + \frac{1}{n}$.
\end{proof}

We are now ready to prove \Cref{thm:fast-log-concave}:
\begin{proof}
For log-concave densities, we will use \Cref{cor:final-h1-opt} with $\lin_{2t}$. Since $d=1$, this gives us an algorithm running in $O(n \log^3 n)$ time with the stated upper bound on $\mathcal{H}_1(\hat{f}_n,f_n)$. Thus, we may use \Cref{lemma:h1-to-err} to conclude the estimate $\hat{f}_n$ has error at most
\begin{equation*}
    \dhsq(f,\hat{f}_n) \lesssim \inf_{f_\theta \in \lin_t} \dhsq(f,f_\theta) + t \cdot \frac{\log(n) + \log(2/\delta)}{n}.
\end{equation*}

Since we may approximate each mixture component with $t/k$ pieces, we use \Cref{lemma:piece-linear} to imply:
\begin{equation*}
    \inf_{f_\theta \in \lin_t} \dhsq(f,f_\theta) \lesssim \inf_{f_\theta \in \lin_t, g_\theta \in \mathcal{LC}_k} \dhsq(f,g_\theta) + \dhsq(f_\theta,g_\theta) \lesssim \inf_{g_\theta \in \mathcal{LC}_k} \dhsq(f,g_\theta) + \left( \frac{k}{t} \right)^4
\end{equation*}

Together these imply:

\begin{align*}
    \dhsq(f,\hat{f}_n) &\lesssim  \inf_{f_\theta \in \mathcal{LC}_k} \dhsq(f,f_\theta)  + \left( \frac{k}{t} \right)^4 + t \cdot \frac{\log(n) + \log(2/\delta)}{n} \\
    & \lesssim  \inf_{f_\theta \in \mathcal{LC}_k} \dhsq(f,f_\theta)  + \left( \frac{k \cdot (\log(n) + \log(2/\delta))}{n}\right)^{4/5} \rtag{choose $t = \left(\frac{nk^4}{\log(n) + \log(2/\delta)}\right)^{1/5}$}
\end{align*}
\end{proof}

\section{Details for piecewise polynomial results}\label{app:piecewise-poly}

This section will mostly follow from the tools of \Cref{app:log-concave}. We first state an analogous result to \Cref{lemma:h1-to-err}: we show that an $\mathcal{H}_1$ bound between $\hat{f}_n$ and $f_n$ implies a squared Hellinger bound between $\dhsq(f,\hat{f}_n)$. The only technical difference is in this setting with degree-$d$ polynomials, we no longer have a stronger custom reverse data processing inequality, so extra logarithmic terms appear.

\begin{lemma}\label{lemma:poly-h1-to-err}
     Suppose we receive $n$ samples from the distribution $f$, and that $\hat{f}_n \in \mathcal{F}_d^{4t}$ is a density where $\mathcal{H}_1(\hat{f}_n,f_n) \le \alpha \cdot \inf_{f_\theta \in \mathcal{F}_d^{2t}} \mathcal{H}_1(f_\theta,f_n) + \beta$ for $\alpha \ge 1$ and $\beta > 0$. Then with probability $1-\delta$,
    \begin{equation*}
        \dhsq(f,\hat{f}_n) \lesssim  \inf_{f_\theta \in \mathcal{F}_d^t}\alpha d\dhsq(f,f_\theta) \log(\alpha t e / \dhsq(f,f_\theta)) + \alpha dt \cdot \frac{\log(n) + \log(2/\delta)}{n} \cdot \log(\alpha e n) +  \beta dt \log(\alpha e / \beta)
    \end{equation*}
\end{lemma}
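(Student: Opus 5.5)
Fix any $f_\theta \in \mathcal{F}_d^t$. The plan is to copy the structure of the proof of \Cref{lemma:h1-to-err}, replacing the custom constant-loss reverse data processing inequality (\Cref{lemma:custom-rdpi}) with the general one (\Cref{thm:rdpi}) applied piecewise. The same concave-function trick as in \Cref{thm:hel-min-dist} then moves the logarithm onto the final terms.

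\textbf{Step 1: start from the approximate triangle inequality.} We have $\dhsq(f,\hat{f}_n)\lesssim \dhsq(f,f_\theta)+\dhsq(f_\theta,\hat f_n)$.

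\textbf{Step 2: localize to intervals.} Both $f_\theta$ and $\hat f_n$ are piecewise degree-$d$ polynomials, with $t$ and $4t$ pieces respectively. Refine the domain into $O(t)$ pieces on which both are single polynomials. On each such piece, the likelihood ratio $f_\theta/\hat f_n$ is a ratio of two degree-$d$ polynomials. Hence each threshold set $\{x: f_\theta(x)/\hat f_n(x)\ge\tau\}$ restricted to the piece is a union of $O(d)$ intervals, since $f_\theta-\tau\hat f_n$ has at most $d$ roots.

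Apply the proof of \Cref{thm:rdpi} to the restrictions to each piece; these are unnormalized, so rescale first or note that the argument is homogeneous. Summing over pieces gives
\[
\dhsq(f_\theta,\hat f_n)\lesssim \sum_{\text{pieces }P} h_P\log(e/h_P),
\]
where $h_P$ is the best single-threshold Hellinger contribution on $P$. By \Cref{fact:dpi} and pigeonhole over the $O(d)$ intervals, each $h_P$ is at most $O(d)$ times a single-interval term. Let $H\triangleq\mathcal{H}_{O(t)}(f_\theta,\hat f_n)$. Using concavity of $z(x)=x\log(e/x)$ (Jensen over the $O(t)$ pieces) together with the $O(d)$ loss, I aim for
\[
\dhsq(f_\theta,\hat f_n)\lesssim d\, H \log(et/H).
\]
Since $\mathcal{H}_{O(t)}\le O(t)\,\mathcal{H}_1$, this is $\lesssim dt\,\mathcal{H}_1(f_\theta,\hat f_n)\log(e/\mathcal{H}_1)$, up to how the $t$ enters the log.

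\textbf{Step 3: chain as in \Cref{lemma:h1-to-err}.} Split $\mathcal{H}_1(f_\theta,\hat f_n)\lesssim \mathcal{H}_1(f,f_n)+\mathcal{H}_1(f_n,\hat f_n)+\dhsq(f,f_\theta)$. Then use the hypothesis $\mathcal{H}_1(f_n,\hat f_n)\le\alpha\,\mathcal{H}_1(\tilde f_\theta,f_n)+\beta$, where $\tilde f_\theta\in\mathcal{F}_d^{2t}$ comes from \Cref{lemma:tilde-f}. Here \Cref{lemma:tilde-f} needs degree-$d$ polynomials to be closed under rescaling, which they are. As in \Cref{lemma:h1-to-err}, $\mathcal{H}_1(f,\tilde f_\theta)\lesssim \dhsq(f,f_\theta)/t$, and by \Cref{cor:sqrt-converge} on intervals (VC dimension 2), $\mathcal{H}_1(f,f_n)\lesssim(\log n+\log(2/\delta))/n$.

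Each resulting term appears inside $z(\cdot)$, multiplied by $dt$. Using monotonicity and the subadditivity-type bound $z(x)\le2(z(a)+z(b))$ from \Cref{thm:hel-min-dist}, distribute $z$ across the three summands:
\begin{itemize}
\item the misspecification term $\alpha\dhsq(f,f_\theta)/t$ gives $\alpha d\,\dhsq\log(\alpha te/\dhsq)$;
\item the sampling term gives $\alpha dt\cdot\frac{\log n+\log(2/\delta)}{n}\log(\alpha en)$;
\item the $\beta$ term gives $dt\,\beta\log(\alpha e/\beta)$.
\end{itemize}
Pulling $\alpha$ out of $z(\alpha x)\le\alpha z(x)+\alpha x\log\alpha$ accounts for the $\alpha$ inside the logs.

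\textbf{Main obstacle.} I expect Step 2 to be the hardest part: getting the per-piece reverse data processing inequality and the aggregation over pieces to produce exactly a factor $d$ (not $d^2$) with a log of the stated form. The care needed is in choosing, per piece, the threshold interval set and converting the $O(d)$-interval union into an $\mathcal{H}$ term while retaining concavity.
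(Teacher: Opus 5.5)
\textbf{Step 2 is fine. The gap is in Step 3.} On a single piece, $f_\theta$ and $\hat f_n$ restrict to sub-probability measures, and the proof of \Cref{thm:rdpi} works verbatim for those. Monotonicity plus Jensen for $z(x)=x\log(e/x)$ then give $\dhsq(f_\theta,\hat f_n)\lesssim d\,H\log(et/H)$ with $H=\mathcal{H}_{O(t)}(f_\theta,\hat f_n)$, with only one factor of $d$. The paper instead applies \Cref{thm:rdpi} once globally and bounds the optimal threshold set, a union of $O(dt)$ intervals, by $\mathcal{H}_{rdt}$ via \Cref{fact:dpi}; it reaches the same intermediate bound.

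In Step 3 the problem is the order of operations. You first pass to $dt\,z(\mathcal{H}_1(f_\theta,\hat f_n))$ and only then split $\mathcal{H}_1(f_\theta,\hat f_n)\lesssim \dhsq(f,f_\theta)+\mathcal{H}_1(f,f_n)+\mathcal{H}_1(f_n,\hat f_n)$ inside $dt\,z(\cdot)$. The first summand then contributes $dt\,\dhsq(f,f_\theta)\log(e/\dhsq(f,f_\theta))$, which your three-item list silently drops. This term is not dominated by the right-hand side of the lemma. For $\alpha=1$ and $\dhsq(f,f_\theta)\le e/t$, it exceeds $\alpha d\,\dhsq(f,f_\theta)\log(\alpha te/\dhsq(f,f_\theta))$ by a factor of at least $t/2$. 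For small $\beta$ and, say, $\dhsq(f,f_\theta)\approx n^{-1/2}$, the sampling and $\beta$ terms do not absorb it either. The $1/t$ saving from \Cref{lemma:tilde-f} only applies to misspecification routed through $\mathcal{H}_1(f_n,\hat f_n)$, not to this direct term.

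\textbf{The fix.} Pass through $f$ before paying the factor $t$, as the paper does. $\mathcal{H}_k$ obeys the same termwise approximate triangle inequality, and $\mathcal{H}_k\le\dhsq$ by \Cref{fact:dpi}. So
$H\le 2\bigl(\dhsq(f,f_\theta)+\mathcal{H}_{O(t)}(f,\hat f_n)\bigr)$.
The map $x\mapsto x\log(et/x)$ is concave, nondecreasing on the relevant range, and vanishes at $0$, hence subadditive. So the misspecification enters as $d\,\dhsq(f,f_\theta)\log(et/\dhsq(f,f_\theta))$, with no extra $t$. Only afterwards use $\mathcal{H}_{O(t)}(f,\hat f_n)\le O(t)\,\mathcal{H}_1(f,\hat f_n)$, and continue exactly as in your Step 3: split through $f_n$, apply the hypothesis with $\tilde f_\theta$, then apply \Cref{cor:sqrt-converge}. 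With this reordering the rest of your argument gives the stated bound.

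For the $\alpha$-bookkeeping, work with $z_\alpha(x)=x\log(\alpha e/x)$, which is monotone on $[0,\alpha]$. This is cleaner than manipulating $z(\alpha x)$, since $z$ itself decreases beyond $1$.
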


\begin{proof}
    Consider any $f_\theta \in \mathcal{F}_d^t$. It holds that
    \begin{align*}
        & \dhsq(f,\hat{f}_n) \lesssim  \dhsq(f,f_\theta) + \dhsq(f_\theta,\hat{f}_n) \intertext{Observe how for any two densities in $\mathcal{F}_d^{t}$ and $\mathcal{F}_d^{4t}$, the domain can be decomposed into $\lesssim t$ intervals where within each interval both densities are one-piece functions in $\mathcal{F}_d$. Further, for each of these pieces and for any value $c>0$, the regions where $f_\theta(x) \ge c \cdot \hat{f}_n(x)$ consist of at most $\lesssim d$ intervals (this is because it can be written as the region where a degree-$d$ polynomial is non-negative, which only changes at its $\le d$ roots). Accordingly, this characterizes the ratio class $\mathcal{H}$ as a subset of the union of at most $rdt$ intervals (for some constant $r$),  and by \Cref{thm:rdpi} and \Cref{eq:move-log} we observe:}
        & \lesssim  \dhsq(f,f_\theta) + \hdist(f_\theta,\hat{f}_n)  \log(e/\hdist(f_\theta,\hat{f}_n)) \intertext{As in the proof of \Cref{thm:hel-min-dist}, recall that $z(x) \triangleq x \log(e/x)$ is non-decreasing for $x\in[0,1]$, and $z(x) \le 2(z(a)+z(b))$ if both $0 \le x,a,b \le 1$ and $x \le 2(a+b)$.}
        & \lesssim  \dhsq(f,f_\theta) + \mathcal{H}_{rdt}(f_\theta,\hat{f}_n) \log(e/\mathcal{H}_{rdt}(f_\theta,\hat{f}_n)) \\
        & \lesssim  \dhsq(f,f_\theta) \log(e/\dhsq(f,f_\theta)) + \mathcal{H}_{rdt}(f,\hat{f}_n) \log(e/\mathcal{H}_{rdt}(f,\hat{f}_n)) \rtag{via $\mathcal{H}_{rdt}(a,b) \le \dhsq(a,b)$}\\
        & \lesssim   \dhsq(f,f_\theta) \log(e/\dhsq(f,f_\theta)) + dt\mathcal{H}_{1}(f,\hat{f}_n) \log(e/\mathcal{H}_{1}(f,\hat{f}_n))\rtag{using $\mathcal{H}_{rdt}(a,b) \le rdt\mathcal{H}_1(a,b)$}\\
        & \lesssim   \dhsq(f,f_\theta) \log(e/\dhsq(f,f_\theta)) + dt\mathcal{H}_{1}(f,f_n) \log(e/\mathcal{H}_{1}(f,f_n))   + dt\mathcal{H}_{1}(f_n,\hat{f}_n) \log(e/\mathcal{H}_{1}(f_n,\hat{f}_n)) \label{eq:poly-intermediate} \numberthis
    \end{align*}
    At this point, we would like to design a modification of $f_\theta$ where each piece of the modified version has the same integral as $f$ over its domain, and the Hellinger distance contribution of each piece is roughly evenly-split. Later, this modification will be helpful because it will have small $\mathcal{H}_1$ distance from $f$. Applying \Cref{lemma:tilde-f} to $f_\theta$, we obtain $\tilde{f}_\theta$ where $\tilde{f}_\theta \in \mathcal{F}_d^{2t-1}$, $\dhsq(f,\tilde{f}_\theta) \lesssim \dhsq(f,f_\theta)$, and $\max_{\mathcal{I} \text{ within a piece of }\tilde{f}_\theta} \dhsq(f_\mathcal{I},(\tilde{f}_\theta)_\mathcal{I})\lesssim \frac{\dhsq(f,\tilde{f}_\theta)}{t}$.

    Let $z_\alpha(x) \triangleq x \log(\alpha e /x)$. Similarly, we conveniently use  (i) $z_\alpha(x)$ is non-decreasing for $x \in [0,\alpha]$, (ii)  $z_\alpha(x) \le 2(z_\alpha(a)+z_\alpha(b))$ if both $0 \le x , a, b \le \alpha$ and $x \le 2(a+b)$, and (iii) $z_\alpha(x) \le z_\alpha(a) + z_\alpha(b)$ if both $0 \le x , a, b \le \alpha$ and $x \le a+b$. Continuing from \Cref{eq:poly-intermediate}, we use how the distance $\mathcal{H}_1(f_n,\hat{f}_n)$ is bounded in terms of any other density in $\mathcal{F}_d^{2t}$, including $\tilde{f_\theta}$:
    \begin{align*}
        & = z(\dhsq(f,f_\theta)) + dt \cdot z(\mathcal{H}_{1}(f,f_n)) + dt \cdot z(\mathcal{H}_{1}(f_n,\hat{f}_n))\\
        &\le z_\alpha(\dhsq(f,f_\theta)) + dt \cdot z_\alpha(\mathcal{H}_{1}(f,f_n)) + dt \cdot z_\alpha(\mathcal{H}_{1}(f_n,\hat{f}_n))\\
        & \lesssim  z_\alpha(\dhsq(f,f_\theta)) + dt \cdot z_\alpha(\mathcal{H}_{1}(f,f_n)) + \alpha dt \cdot z_\alpha(\mathcal{H}_{1}(f_n,\tilde{f}_\theta)) + dt \cdot z_\alpha(\beta)\\
        & \lesssim  z_\alpha(\dhsq(f,f_\theta)) + \alpha dt \cdot z_\alpha(\mathcal{H}_{1}(f,f_n)) + \alpha dt \cdot z_\alpha(\mathcal{H}_{1}(f,\tilde{f}_\theta)) + dt \cdot z_\alpha(\beta) \intertext{Leveraging \Cref{fact:dpi} and how the integral of $f$  and $\tilde{f_\theta}$ is the same for any piece of $\tilde{f_\theta}$, we observe the distance $\mathcal{H}_1(f,\tilde{f_\theta})$ is at most a constant factor larger than the same $\mathcal{H}_1$ distance when we restrict to intervals within one piece of $\tilde{f_\theta}$. This quantity is at most the maximum squared Hellinger distance between $f$ and $\tilde{f_\theta}$ over one piece. By the properties of $\tilde{f_\theta}$ given by \Cref{lemma:tilde-f}, this implies a bound of $\mathcal{H}_1(f,\tilde{f_\theta}) \lesssim \frac{1}{t} \cdot \dhsq(f,f_\theta)$:}
        & \lesssim  z_\alpha(\dhsq(f,f_\theta)) + \alpha dt \cdot z_\alpha(\mathcal{H}_{1}(f,f_n)) + \alpha dt \cdot z_\alpha(\dhsq(f,f_\theta)/t) + dt \cdot z_\alpha(\beta) \\
        & \lesssim \alpha d\dhsq(f,f_\theta) \log(\alpha t e / \dhsq(f,f_\theta)) + \alpha dt \mathcal{H}_{1}(f,f_n) \log(\alpha e / \mathcal{H}_{1}(f,f_n)) +  \beta dt  \log(\alpha e / \beta) \\
        & \lesssim \alpha d\dhsq(f,f_\theta) \log(\alpha t e / \dhsq(f,f_\theta)) + \alpha dt \cdot \frac{\log(n) + \log(2/\delta)}{n} \cdot \log(\alpha e n) +  \beta dt \log(\alpha e / \beta)
    \end{align*}
    The last line holds with probability $1-\delta$ from the uniform convergence guarantee \Cref{cor:sqrt-converge}.
\end{proof}

Together, \Cref{lemma:poly-h1-to-err} and \Cref{cor:final-h1-opt} immediately imply \Cref{cor:poly-final-h1-opt}.

\section{Proof of \Cref{corr:fast-gauss}}\label{app:fast-poly-corr}
\begin{proof}
    Given \Cref{cor:poly-final-h1-opt}, the proof will follow from the same logic as \Cref{cor:stat-gaussian}. 

    We will choose $\mathcal{F}$ as the class of $3k$-piecewise degree $\lesssim \log(n)$ polynomial densities. By \Cref{cor:poly-final-h1-opt}:
    \begin{align*}
        &\dhsq(f,\hat{f}_n) \lesssim  \inf_{f_\theta \in \mathcal{F}_d^t} \log(n) \cdot \dhsq(f,f_\theta) \log(2k / \dhsq(f,f_\theta)) + \frac{k \log^2(n) \cdot (\log(n) + \log(2/\delta))}{n} \intertext{We use existing results that imply any Gaussian density is approximated within $\eps$ squared Hellinger distance by a 3-piecewise degree $\lesssim \log(1/\eps)$ polynomial (Lemma 36 of \cite{chan2014efficient}, or related discussion in Section 5.4 of \cite{timan}). Accordingly, for any mixture of $k$ Gaussians, there is a $3k$-piecewise degree $\lesssim \log(n)$ polynomial that approximates the mixture within $1/n$ squared Hellinger distance: }
        & \lesssim  \inf_{f_\theta \in \mathcal{G}_k} \log(n) \cdot \dhsq(f,f_\theta) \log(2k / \dhsq(f,f_\theta)) + \frac{1}{n} + \frac{k \log^2(n) \cdot (\log(n) + \log(2/\delta))}{n} \\
        & \lesssim  \inf_{f_\theta \in \mathcal{G}_k} \log(n) \cdot  \dhsq(f,f_\theta) \log(2k / \dhsq(f,f_\theta)) + \frac{k \log^2(n) \cdot (\log(n) + \log(2/\delta))}{n}
    \end{align*}
    
\end{proof}
\end{document}